\newcommand{\kwtick}{\mathtt{pre}}
\newcommand{\kwinvoke}{\mathtt{invoke}}
\newcommand{\kwsample}{\mathtt{sample}}
\newcommand{\kwin}{\mathtt{in}}
\newcommand{\kwwith}{\bigoplus}
\newcommand{\com}{\mathsf{comm}}
\newcommand{\expr}{\mathsf{size}}
\newcommand{\pp}{\mathsf{expr}}
\newcommand{\dist}{\mathsf{dist}}
\newcommand{\etf}{\mathsf{Tf}}
\newcommand{\emp}{\mathsf{emp}}
\newcommand{\body}{\mathsf{body}}
\newcommand{\rbody}{\mathsf{call}}
\newcommand{\pspace}{\mathrm{\Omega}}
\newcommand{\probspace}{(\pspace,\mathcal F,\Pr)}
\newcommand{\supp}{\mathrm{supp}}
\newcommand{\ex}{\mathsf{Ex}}
\newcommand{\prrlang}{{\sl LRec}}
\newcommand{\coef}{c}
\newcommand{\spolya}{p}
\newcommand{\sloga}{q}
\newcommand{\spolyn}{u}
\newcommand{\slogn}{v}
\newcommand{\coeff}{\coef_f}
\newcommand{\coeft}{\coef_t}
\newcommand{\logfa}{\sloga_{f}}
\newcommand{\logfn}{\slogn_{f}}
\newcommand{\polyfa}{\spolya_{f}}
\newcommand{\polyfn}{\spolyn_{f}}
\newcommand{\logta}{\sloga_{t}}
\newcommand{\logtn}{\slogn_{t}}
\newcommand{\polyta}{\spolya_{t}}
\newcommand{\polytn}{\spolyn_{t}}
\newcommand{\cont}{\mathbf{K}}
\newcommand{\cost}{C}
\newcommand{\conf}{\mu}
\newcommand{\with}{\kwwith_{i=1}^k c_i\!:\!\com_i}
\newcommand{\sample}{\kwsample\ v\leftarrow\dist\ \kwin\ \{\body\}}
\newcommand{\transprob}{\mathbf{P}}
\renewcommand{\paragraph}[1]{\smallskip\noindent{\textbf{\textit{#1.}}}}
\newcommand{\Nset}{\mathbb N}
\newcommand{\condexpv}[2]{\expv\left[#1\ |\ #2\right]}
\newcommand{\dtsp}{\Gamma}
\newcommand{\coststack}{\mbox{\sl S}}
\newcommand{\probm}{\Pr}
\newcommand{\epxv}{\mathbb E}
\newcommand{\expv}{\epxv}
\renewcommand{\color}[1]{}
\newenvironment{talign}
 {\align}
 {\endalign}
\newenvironment{talign*}
 {\csname align*\endcsname}
 {\endalign}
\begin{document}

\title{Automated Tail Bound Analysis for Probabilistic Recurrence Relations\thanks{Mainland Chinese authors are ordered by contribution, while Austrian and Chinese Hong Kong authors
are ordered alphabetically. {The code and benchmarks are available at https://github.com/boyvolcano/PRR}}\protect}

\author{Yican Sun\inst{1} \and
Hongfei Fu \inst{2} \thanks{Corresponding Author} \and
Krishnendu Chatterjee \inst{3}\and\\
Amir Kafshdar Goharshady \inst{4}
}
\institute{School of Computer Science, Peking University, China\\ \email{sycpku@pku.edu.cn}\and
Department of Computer Science, Shanghai Jiao Tong University, China\\ \email{fuhf@cs.sjtu.edu.cn}\and
IST Austria\\ \email{krishnendu.chatterjee@ist.ac.at}\and
The Hong Kong University of Science and Technology\\ \email{goharshady@cse.ust.hk}
}

\maketitle

\begin{abstract}
Probabilistic recurrence relations (PRRs) are a standard formalism for describing the runtime of a randomized algorithm. Given a PRR
and a time limit $\kappa$, 
we consider the classical concept of tail probability $\probm[T \ge \kappa]$, i.e., the probability that the randomized runtime $T$ of the PRR exceeds the time limit $\kappa$. Our focus is the formal analysis of tail bounds that aims at finding a tight asymptotic upper bound $u \geq \probm[T\ge\kappa]$ in the time limit $\kappa$.
To address this problem, the classical and most well-known approach is the cookbook method by Karp (JACM 1994), while other approaches are mostly limited to deriving tail bounds of specific PRRs via involved custom analysis.

In this work, we propose a novel approach for deriving exponentially-decreasing tail bounds (a common type of tail bounds) for PRRs whose preprocessing time and random passed sizes observe discrete or (piecewise) uniform distribution and whose recursive call is either a single procedure call or a divide-and-conquer. We first establish a theoretical approach via Markov's inequality, and then instantiate the theoretical approach with a template-based algorithmic approach via a refined treatment of exponentiation. 
Experimental evaluation shows that our algorithmic approach is capable of deriving tail bounds that are (i) asymptotically tighter than Karp's method, (ii) match the best-known manually-derived asymptotic tail bound for QuickSelect, and (iii) is only slightly worse (with a $\log\log n$ factor) than the manually-proven optimal asymptotic tail bound for QuickSort.
Moreover, our algorithmic approach handles all examples
(including realistic PRRs such as QuickSort, QuickSelect, DiameterComputation, etc.) in less than 0.1 seconds, showing that our approach is efficient in practice.
\end{abstract}

\section{Introduction}
\label{sec:intro}

Probabilistic program verification is a fundamental area in formal verification \cite{BaierBook}. It extends the classical (non-probabilistic) program verification by considering randomized computation in a program and hence can be applied to the formal analysis of probabilistic computations such as probabilistic models~\cite{goodman2008church}, randomized algorithms~\cite{DBLP:conf/itp/Tassarotti018,expectedRecurrence,poplAbstraction,poplSensitivity}, etc. In this line of research, verifying the time complexity of probabilistic recurrence relations (PRRs) is an important subject~\cite{DBLP:conf/itp/Tassarotti018,expectedRecurrence}. PRRs are a simplified form of recursive probabilistic programs and extend recurrence relations by incorporating randomization such as randomized preprocessing and divide-and-conquer. They are widely used in analyzing the time complexity of randomized algorithms (e.g., QuickSort~\cite{DBLP:journals/cacm/Hoare61b}, QuickSelect~\cite{DBLP:journals/cacm/Hoare61a}, and DiameterComputation~\cite[Chapter 9]{DBLP:books/cu/MotwaniR95}). Compared with probabilistic programs, PRRs abstract away detailed computational aspects, such as problem-specific divide-and-conquer and data-structure manipulations, and include only key information on the runtime of the underlying randomized algorithm.
Hence, PRRs provide a clean model for time-complexity analysis of randomized algorithms and randomized computations in a general sense.

 In this work, we focus on the formal analysis of PRRs and consider the fundamental problem of tail bound analysis that aims at bounding the probability that a given PRR does not terminate within a prescribed time limit. In the literature, prominent works on tail bound analysis include the following. First, Karp proposed a classic ``cookbook'' formula~\cite{DBLP:journals/jacm/Karp94} similar to Master Theorem. This method is further improved, extended, and mechanized by follow-up works~\cite{DBLP:journals/tcs/ChaudhuriD97,DBLP:conf/itp/Tassarotti018,DBLP:series/txtcs/BertotC04}. While Karp's method has a clean form and is easy to use and automate, the bounds from the method are known to be not tight (see e.g.~ \cite{DBLP:journals/jal/McDiarmidH96,QuickSelectSOTA}). Second, the works \cite{DBLP:journals/jal/McDiarmidH96} and resp. \cite{QuickSelectSOTA} performed ad-hoc custom analysis to derive asymptotically tight tail bounds for the PRRs of QuickSort and resp. QuickSelect, respectively. These methods require manual effort and do not have the generality to handle a wide class of PRRs.

From the literature, an algorithmic approach capable of deriving tight tail bounds over a wide class of PRRs is a major unresolved problem. Motivated by this challenge,  we have the following contributions to this work:

\begin{compactitem}
\item Based on Markov's inequality, we propose a novel theoretical approach to derive exponentially-decreasing tail bounds, a common type for many randomized algorithms. {\color{red} We further show that our theoretical approach can always derive an exponentially-decreasing tail bound at least as tight as Karp's method under mild assumptions.}

\item From our theoretical approach, we propose a template-based algorithmic approach for a wide class of PRRs that have (i) common probability distributions such as (piecewise) uniform distribution and discrete probability distributions and  (ii) either a single call or a divide-and-conquer for the form of the recursive call.
The technical novelties in our algorithm lie in a refined treatment of the estimation of the exponential term arising from our theoretical approach via integrals, suitable over-approximation, and the monotonicity of the template function.

\item Experiments show that our algorithmic approach derives asymptotically tighter tail bounds when compared with Karp's method. Furthermore, the tail bounds derived from our approach match the best-known bound for QuickSelect~\cite{QuickSelectSOTA}, and are only slightly worse by a $\log\log n$ factor against the optimal manually-derived bound for QuickSort~\cite{DBLP:journals/jal/McDiarmidH96}.
Moreover, our algorithm synthesizes each of these tail bounds in less than 0.1 seconds and is efficient in practice.

\end{compactitem}

A limitation of our approach is that we do not consider the transformation from a realistic implementation of a randomized algorithm into its PRR representation. However, such a transformation would require examining a diversified number of randomization patterns (e.g., randomized divide-and-conquer) in randomized algorithms and thus is an orthogonal direction. In this work, we focus on the tail bound analysis and present a novel approach to address this problem.

\vspace{-1ex}
\section{Preliminaries}
\label{sec:sprr}

Below we present necessary background in probability theory and the tail bound analysis problem we consider.

A \emph{probability space} is a triple $(\mathrm{\pspace}, \mathcal F, \mathrm{Pr})$ such that $\pspace$ is a non-empty set termed as the \emph{sample space}, $\mathcal F$ is a \emph{$\sigma$-algebra} over $\pspace$ (i.e., a collection of subsets of $\pspace$ that contains the empty set $\emptyset$ and is closed under complement and countable union), and $\mathrm{Pr}(\cdot)$ is a \emph{probability measure} on $\mathcal F$ (i.e., a function $\mathcal F\to [0,1]$ such that  $\Pr(\pspace)=1$ and for every pairwise disjoint set-sequence $A_1,A_2,\dots$ in $\mathcal{F}$, we have that $\sum_{i\ge 1}\Pr(A_i)=\Pr\left(\bigcup_{i\ge 1}A_i\right)$.

A \emph{random variable} $X$ from a probability space $(\pspace, \mathcal F, \mathrm{Pr})$ is an $\mathcal F$-measurable function $X:\pspace\to \mathbb R$, i.e., for every $d\in \mathbb R$, we have that $\{\omega\in \pspace\mid X(\omega)<d\}\in \mathcal F$. We denote $\mathbb E[X]$ as its expected value; formally, we have $\mathbb{E}[X]:=\int X \,\mathrm{d}\mathrm{Pr}$.
A \emph{discrete probability distribution}~(DPD) over a countable set $U$ is a function $\eta: U\to [0,1]$, such that $\sum_{u\in U}{\eta(u)}=1$. The \emph{support} of the DPD is defined as $\supp(\eta):=\{u\in U\mid \eta(u)>0\}$. We abbreviate finite-support DPD as FSDPD.

A \emph{filtration} of probability space $\probspace$ is an infinite sequence of $\{\mathcal F_n\}_{n\ge 0}$ of $\sigma$-algebra over $\pspace$ such that $\mathcal F_{n}\subseteq \mathcal F_{n+1}\subseteq \mathcal F$ for every $n\ge 0$. Intuitively, it models the information at the $n$-th step.
A \emph{discrete-time stochastic process} is an infinite sequence $\mathrm{\Gamma} = \{X_n\}_{n\ge 0}$ of random variables from the probability space $\probspace$. The process $\mathrm{\Gamma}$ is \emph{adapted} to a filtration $\{\mathcal F_{n}\}_{n\ge 0}$ if for all $n\ge 0$, $X_n$ is $\mathcal F_n$-measurable. 
Given a filtration $\{\mathcal F_{n}\}_{n\ge 0}$, a \emph{stopping time}  is a random variable $\tau: \Omega\to \mathbb N$, such that for every $n \ge 0$, $\{\omega\in \pspace\mid \tau(\omega)\le n\}\in \mathcal F_n$.

A discrete-time stochastic process $\Gamma=\{X_n\}_{n\in\Nset}$ adapted to a filtration $\{\mathcal{F}_n\}_{n\in\Nset}$ is a \emph{martingale} (resp. \emph{supermartingale})
if for every $n\in\Nset$, $\expv[|X_n|]<\infty$ and it holds a.s. that $\condexpv{X_{n+1}}{\mathcal{F}_n} = X_n$ (resp. $\condexpv{X_{n+1}}{\mathcal{F}_n}\le X_n$).
Intuitively, a martingale (resp. supermartingale) is a discrete-time stochastic process in which for an observer who has seen the values of $X_0, \ldots, X_n$, the expected value at the next step, i.e.~$\condexpv{X_{n+1}}{\mathcal{F}_n}$, is equal to (resp. no more than) the last observed value $X_n$. Also, note that in a martingale, the observed values for $X_0, \ldots, X_{n-1}$ do not matter given that $\condexpv{X_{n+1}}{\mathcal{F}_n} = X_n.$ In contrast, in a supermartingale, the only requirement is that $\condexpv{X_{n+1}}{\mathcal{F}_n} \leq X_n$ and hence $\condexpv{X_{n+1}}{\mathcal{F}_n}$ may depend on $X_0, \ldots, X_{n-1}.$ Also, note that $\mathcal{F}_n$ might contain more information than just the observations of $X_i$'s.
\begin{example}
	Consider the classical gambler's ruin: a gambler starts with $Y_0$ dollars of money and bets continuously until he loses all of his money. If the bets are unfair, i.e.~the expected value of his money after a bet is less than its expected value before the bet, then the sequence $\{Y_n\}_{n \in \mathbb{N}_0}$ is a supermartingale. In this case, $Y_n$ is the gambler's total money after $n$ bets. On the other hand, if the bets are fair, then $\{Y_n\}_{n \in \mathbb{N}_0}$ is a martingale. See Appendix~\ref{app:martingales} for another illustrative example of martingales. \qed
\end{example}
We refer to standard textbooks (such as~\cite{williams1991probability,Billingsley:book}) for a detailed treatment of all the concepts illustrated above.

\vspace{-1ex}
\subsection{Probabilistic Recurrence Relations}

In this work, we focus on probabilistic recurrence relations (PRRs) that describe the runtime behaviour of a single recursive procedure.
Instead of having a direct syntax for a PRR, we propose a
mini programming language \prrlang{} that captures a wide class of PRRs 
{\color{red} that have common probability distributions such as (piecewise) uniform distributions and discrete probability distributions, and whose recursive call consists of
either a procedure call or two procedure calls in a divide-and-conquer style. We present the grammar of \prrlang{} in  Figure~\ref{fig:grammar}.}

\begin{figure}[htbp]
    \centering
    
    \begin{footnotesize}
    \begin{tabular}{rrrl}
\textbf{(PRR)} & \textsf{proc} &::=& $\mathtt{def}\  p(n ; c_p)=\{\mathsf{comm}\}$\\
\textbf{(Command)} & \textsf{comm} &::=& $\sample\mid \with$\\
\textbf{(Recursive Body)} & $\body$ &::=&  $\mathtt{pre}(\pp);\kwinvoke\ \rbody$\\
\textbf{(Recurive Call)} & $\rbody$ &::=&  $p(v);p(\expr-v)\,\mid\,p(v)\,\mid\,p(\expr-v)$
\\
&&& (where $\expr$ is either $\lfloor \frac nb\rfloor+c$ or $\lceil \frac nb\rceil+c$)\\\textbf{(Distribution)} & $\dist$ &::=& \texttt{uniform}$(n)$ $\mid$ \texttt{muniform}$(n)$ $\mid$ \texttt{discrete} $\mid$ $\ldots$\\
        \textbf{(Expression)} & $\pp$ &::=& ${v}\mid {v}^{-1} \mid \ln {v}\mid n\mid \ln n\mid n^{-1}\mid {c}$\\& & &  $\mid \pp\!+\!\pp\mid \pp\!-\!\pp\mid \pp\!\times\!\pp$
    \end{tabular}
    \end{footnotesize}
    \caption{The Grammar of \prrlang}
    \vspace{-1.5em}
    \label{fig:grammar}
\end{figure}

In the grammar, we have two positive-integer valued variables $n,v$ which stand for the input size and the sampled value in the randomization of the passed size to the recursive calls of a procedure, respectively. {\color{red} We use $b>0,c,c_p$ to denote integer constants, and use $p$ to denote the name of the single procedure in the PRR.}
We consider arithmetic expressions $\pp$ as polynomials over $v,v^{-1},\ln v$ and $n,n^{-1},\ln n$ (which we call \emph{pseudo-polynomials} in this work) and common probability distributions, including (i) the uniform distribution \texttt{uniform}$(n)$ over $\{0,1,\dots, n-1\}$, (ii) the piecewise uniform distribution \texttt{muniform}$(n)$ that returns $\max\{i,n-i-1\}$ where $i$ observes the uniform distribution $\mathtt{uniform}(n)$, and (iii) any FSDPD (indicated by \textsf{discrete}) whose probabilities and values are constants and pseudo-polynomials, respectively. {\color{red} We also support other piecewise uniform distribution, e.g, the distribution that each $v\in \{0,\ldots,n/2\}$ has probability $\frac{2}{3n}$ and each $v\in \{n/2+1,\ldots,n-1\}$ has probability $\frac{4}{3n}$.}

The nonterminal \textsf{proc} generates the PRR in the form $\mathtt{def}\  p(n ; c_p)=\{\mathsf{comm}\}$, for which
$c_p$ is an integer constant as the threshold of recursion, meaning that the procedure halts immediately when $n< c_p$, and $\mathsf{comm}$ is the function body of the procedure.
The nonterminal \textsf{comm} generates all statements with one of the two forms as follows.
\begin{compactitem}
    \item A sampling statement (indicated by \textsf{sample}) followed by first a special expression $\mathtt{pre}(\pp)$ that stands for the preprocessing time of $\pp$ amount, then the recursive calls generated by the nonterminal $\rbody$. 
    \item A probabilistic choice in the form $\with$ where each statement $\mathsf{comm}_i$ is executed with probability $c_i$.
\end{compactitem}

We restrict the recursive calls to be either a single recursive call $p(v)$ or $p(\expr-v)$, or a divide-and-conquer composed of two consecutive recursive calls $p(v)$ and $p(\expr-v)$, for which we consider a general setting that the relevant overall size $\expr$ is in the form of the input size $n$ divided by some positive integer $b$ with possibly an offset $c$. Choosing $b=1,c=-1$ means the normal situation that the overall size is $n-1$, i.e., removing one element from the original input.

 Given a PRR $p$, we use $\mathsf{func}(p)$ to represent its function body.
 
 We always assume that the given PRR is \emph{well-formed}, i.e., every $c_i$ in a probabilistic choice is within $[0,1]$ and every random passed size (e.g. $v,\mathsf{size}-v$) falls in $[0,n]$.
 Below, we present two examples for PRRs.

\begin{example}[{QuickSelect}]
	\label{ex:quickselect}
	Consider the problem of finding the $d$-th smallest element in an unordered array of $n$ distinct elements. A classical randomized algorithm for solving this problem is {QuickSelect}~\cite{DBLP:journals/cacm/Hoare61a} with $O(n)$ expected running time. We present the detail of this algorithm in Appendix~\ref{app:examples}. We model the algorithm as the following PRR: 
\begin{small}
$$\textstyle
\mathtt{def}\ p(n;2)= \{\mathtt{sample}\ v\leftarrow\mathtt{muniform}(n)\ \mathtt{in}\ \{\kwtick(n);\ \mathtt{invoke}\ p(v);\}\}
$$
\end{small}
Here, we use $p(n;2)$ to represent the number of comparisons performed by {QuickSelect} over an input of size $n$, and $v$ is the variable that captures the size of the remaining array that has to be searched recursively. It observes as the value $\max\{i, n\!-1\!-i\}$  where the value of $i$ is sampled uniformly from $\{0, \ldots, n\!-\!1\}$, we use $\mathtt{muniform}(n)$ to represent this distribution. \qed
\end{example}

\begin{example}[{QuickSort}]
	\label{ex:quicksort}
	Consider the classical problem of sorting an array of $n$ distinct elements.
A well-known randomized algorithm for solving this problem is {QuickSort}~\cite{DBLP:journals/cacm/Hoare61b}.
We model the algorithm as the following PRR. The detail of this algorithm is also presented in Appendix~\ref{app:examples}. 
\begin{small}
$$
\mathtt{def}\ p(n;2)= \{\mathtt{sample}\ v\leftarrow\mathtt{uniform}(n)\ \mathtt{in}\ \{\kwtick(n);\ \mathtt{invoke}\ p(v);p(n-1-v);\}\}
$$
\end{small}
Here, $v$ and $n-1-v$ capture the sizes of the two sub-arrays.
\qed
\end{example}

Below we present the semantics of a PRR in a nutshell.
We relegate the details of the semantics in Appendix~\ref{app:semantics}.
Consider a PRR generated by \prrlang{} with the procedure name $p$, a \emph{configuration} $\sigma$ is a pair $\sigma=(\mbox{\sl comm}, \widehat{n})$ where $\mbox{\sl comm}$ represents the current statement to be executed and $\widehat{n}\ge c_p$  is the current value for the variable $n$.
A \emph{PRR state}  $\conf$ is a triple $\langle \sigma,\cost,\cont \rangle$ for which:
\begin{compactitem}
\item $\sigma$ is either a configuration, or $\mathsf{halt}$ for the termination of the whole PRR.
\item $\cost\ge 0$ records the cumulative preprocessing time so far.
\item $\cont$ is a stack of configurations that remain to be executed.
\end{compactitem}
 We use $\mathsf{emp}$ to denote an empty stack, and say that a PRR state $\langle \sigma,C,\cont \rangle$ is \emph{final} if $\cont=\emp$ and $\sigma=\mathsf{halt}$. Note that in a final PRR state $\langle \mathsf{halt}, C,\emp \rangle$, the value $C$ represents the total execution runtime  of the PRR.
The semantics of the PRR is defined as a discrete-time Markov chain whose state space is the set of all PRR states and whose transition function $\transprob$, where $\transprob(\conf, \conf')$ is the probability that the next PRR state is $\conf'$ given the current PRR state is $\conf=((\mbox{\sl comm}, \widehat{n}),C,\cont)$.
The probability is determined by the following cases.

\begin{itemize}
\item For final PRR states $\conf$, $\transprob(\conf,\conf):=1$ and $\transprob(\conf,\conf'):=0$ for other $\conf'\ne \conf$. This means that the PRR stays at termination once it terminates.
\item In the divide-and-conquer case $\mbox{\sl comm}=\kwsample\ v\leftarrow \mbox{\sl dist}\ \kwin\ \{\mathtt{pre}(e);\\\kwinvoke\ p(v);p(s-v)\}$, we first sample $v$ from the distribution $\mbox{\sl dist}$. Then, with probability $\mbox{\sl dist}(v)$, we accumulate the preprocessing time $e$ into the cumulative processing time $C$. We recursively invoke $p(v)$ and push the remaining task $p(s-v)$ into the stack. The probability for the single recursion case is defined analogously. The only difference is that there is no need to push some recursive call into the stack in the single recursion case.
\item  In the case $\mbox{\sl comm}=\kwwith_{i=1}^kc_i\!:\!\mbox{\sl comm}_i$,
we have that $\transprob(\conf, \conf_i) = {c_i}$ for each $1\le i\le k$ for which we have
$\conf_i:=((\mbox{\sl comm}_i, \widehat{n}),C,\cont)$.
\end{itemize}
With an initial PRR state
$((\mathsf{func}(p), n^*), 0, \emp)$
where $n^*\ge c_p$ is the input size, the Markov chain induces a probability space
where the sample space is the set of all infinite sequences of PRR states, the $\sigma$-algebra is generated by all \emph{cylinder sets} over infinite sequences of PRR states
, and the probability measure is uniquely determined by the  transition function $\transprob$. We refer to \cite{BaierBook} for details. We use $\Pr_{n^*}$ for the probability measure where $n^*\ge c_p$ is the input size.

We further define the random variable $\tau$ such that
for any infinite sequence of PRR states $\rho = \conf_0, \conf_1, \dots,\conf_t,\dots$ {\color{green} with each $\conf_t=((\mbox{\sl comm}_t, \widehat{n}_t),C_t,\cont_t)$}, {\color{green} $\tau(\rho)$} equals the first moment that the sequence reaches a final PRR state, i.e., ${\color{green}\tau(\rho)} = \inf\{t\mid \mbox{the PRR state }\conf_t\mbox{ is final}\}$, for which $\inf\emptyset = \infty$.
We will always ensure that $\tau$ is almost-surely finite, i.e., $\Pr_{n^*}(\tau<\infty)=1$). Note that the {\color{green}random} cumulative processing time $C_\tau$ in the PRR state $\mu_{\tau}\in \rho$ is the total execution time of the given PRR.

We formulate the tail bound analysis over PRRs as follows.
Given a time limit $\alpha\cdot \kappa(n^*)$ symbolic in the initial input $n^*$ and the coefficient $\alpha$, 
the goal of tail bound analysis is to infer an upper bound $u(\alpha,n^*)$ symbolic in $n^*$ and $\alpha$ such that for every input size $n^*$ and plausible value for $\alpha$, we have that
\begin{equation}
\label{eq:problem}
\textstyle\Pr_{n^*}[C_{\tau}\ge \alpha\cdot \kappa(n^*)]\le u(\alpha,n^*).
\end{equation}

As tails bounds are often evaluated asymptotically, we focus on deriving tight $u(\alpha,n^*)$
when $\alpha,n^*$ are sufficiently large. To compare the magnitude of two tail bounds, 
we follow the straightforward way that first treats $\alpha$ as a fixed constant and compares the bounds over $n^*$, and then if the magnitude over $n^*$ is identical,
we take a further comparison over the magnitude on the coefficient $\alpha$. 

\begin{example}[Our result on QuickSelect]\label{ex:quickselelct-overall}
    Continue with Example~\ref{ex:quickselect}, suppose the user is interested in the tail bound~$\Pr[C_{\tau}\ge \alpha\cdot n^*]$, where $C_{\tau}$ is the running time of the QuickSelect algorithm over an array with length $n^*$. Then, Karp's method produces the symbolic tail bound as follows. $$\Pr[C_{\tau}\ge \alpha\cdot n^*]\le \exp(1.15-0.28\cdot \alpha)$$
    However, our method can produce the following tail bound. $$\Pr[C_{\tau}\ge \alpha\cdot n^*]\le \exp(2\cdot \alpha-\alpha\cdot \ln \alpha)$$
    Note that our method produces tail bounds with a better magnitude on $\alpha$. \qed
\end{example}

\begin{example}[Our result on QuickSort]\label{ex:quickselelct-overall}
    Continue with Example~\ref{ex:quicksort}, consider the tail bound~$\Pr[C_{\tau}\ge \alpha\cdot n^*\cdot \ln n^*]$, where $C_{\tau}$ is the running time of QuickSort over a length-$n^*$ array. Then, Karp's method produces the symbolic tail bound as: $$\Pr[C_{\tau}\ge \alpha\cdot n^*\cdot \ln n^*]\le \exp(0.5-0.5\cdot \alpha),$$
    while our method can produce the bound as: $$\Pr[C_{\tau}\ge \alpha\cdot n^*\cdot \ln n^*]\le \exp((4-\alpha)\cdot \ln n^*)$$
    Note that our method produces tail bounds with a better magnitude on $n^*$.\qed
\end{example}

\vspace{-1ex}
\section{Exponential Tail Bounds via Markov's Inequality}
\label{sec:theory}

In this section, we demonstrate our theoretical approach for deriving exponentially decreasing tail bounds 
based on Markov's inequality. 
Due to the space limitation,
We relegate all proof details into Appendix~\ref{appendix:theory}.

Before illustrating our approach, we first translate a PRR in the language \prrlang{} with the single procedure $p$ into the canonical form as follows.
\begin{equation}
\label{eq:canonical-prr}
p(n;c_p)=\kwtick(\coststack(n));  \kwinvoke\  p(\expr_1(n));\ldots; p(\expr_r(n))
\end{equation}
where (i) $\coststack(n)$ is a random variable related to the input size $n$ that represents the randomized pre-processing time and observes a probability distribution resulting from a discrete probability choice of piecewise uniform distributions, and (ii) $\kwinvoke\  p(\expr_1(n));\ldots; p(\expr_r(n))$ is a statement that is either a single recursive call $p(\expr_1(n))$ or a divide-and-conquer $p(\expr_1(n)); p(\expr_2(n))$ upon the resolution of the randomization. For the latter, we use a random variable $r$ (which is either $1$ or $2$) to represent the number of recursive calls.

The translation can be implemented by a straightforward recursive procedure $\etf(n,{\sl Prog})$ that takes on input a positive integer $n$ (as the input size) and a statement ${\sl Prog}$ (generated by the nonterminal $\textsf{comm}$) to be processed, see Appendix~\ref{appendix:theory} for details.
Note that the procedure $\etf(n,{\sl Prog})$ outputs the \emph{joint} distribution of the random value $\coststack(n)$ and the random recursive call $p(\expr_1(n));\ldots;\\ p(\expr_r(n))$
i.e., they
may not be independent.

Our theoretical approach then works directly on the canonical form (\ref{eq:canonical-prr}). It consists of two major steps to derive an exponentially-decreasing tail bound.
In the first step, we apply Markov's inequality and reduce the tail bound analysis problem to the over-approximation of the moment generating function $\mathbb E[\exp(t\cdot C_{\tau})]$ where $C_{\tau}$ is the cumulative pre-processing time defined previously and $t>0$ is a scaling factor that aids the derivation of the tail bound.
In the second step, we apply Optional Stopping Theorem (a classical theorem in martingale theory) to over-approximate the expected value $\mathbb E[\exp(t\cdot C_{\tau})]$. Below
we fix an PRR with procedure $p$ in the canonical form (\ref{eq:canonical-prr}), and a time limit  $\alpha
\cdot\kappa(n^*)$.  

Our first step applies Markov's inequality. Our approach relies on the well-known exponential form of Markov's inequality below. 

\begin{theorem}
\label{thm:markov}
    For every random variable $X$ and any scaling factor $t>0$, we have that $\mathbb \Pr[X\ge d]\le \expv[\exp(t\cdot X)]/\exp(t\cdot d)$.
\end{theorem}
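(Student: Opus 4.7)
The plan is to reduce the claim to the classical (non-exponential) form of Markov's inequality applied to the auxiliary non-negative random variable $\exp(t\cdot X)$. I would proceed in three short steps, the first two being essentially observations and the third a direct invocation.

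First, I would note that since the scaling factor $t>0$, the map $x\mapsto \exp(t\cdot x)$ is strictly increasing on $\mathbb{R}$. Consequently the event $\{X\ge d\}$ coincides pointwise with the event $\{\exp(t\cdot X)\ge \exp(t\cdot d)\}$, so their probabilities are equal. This step is where the hypothesis $t>0$ is actually used; for $t<0$ the inequality would reverse and the claim would fail.

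Second, I would recall the standard Markov inequality: for any non-negative random variable $Y$ and any constant $a>0$, one has $\Pr[Y\ge a]\le \mathbb{E}[Y]/a$. I would apply this with $Y := \exp(t\cdot X)$, which is non-negative (indeed strictly positive) regardless of whether $X$ itself takes negative values, and with $a := \exp(t\cdot d)>0$. Combined with the equality of events from the first step, this yields exactly $\Pr[X\ge d]\le \mathbb{E}[\exp(t\cdot X)]/\exp(t\cdot d)$.

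There is no genuine obstacle here, as this is the textbook Chernoff-style restatement of Markov's inequality. The only points worth a brief comment in the write-up are (i) that the bound is vacuously true when $\mathbb{E}[\exp(t\cdot X)]=\infty$, so no integrability assumption is needed, and (ii) that the inequality is valid for general (possibly signed) $X$ because the exponentiation renders the relevant random variable non-negative. Given that the utility of the theorem in the sequel comes from optimizing over $t>0$ to tighten the bound, I would not elaborate beyond these remarks.
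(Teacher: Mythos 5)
Your proposal is correct: the paper states this exponential form of Markov's inequality as a well-known fact without giving its own proof, and your argument (monotonicity of $x\mapsto\exp(t\cdot x)$ for $t>0$ to rewrite the event, then classical Markov's inequality applied to the non-negative variable $\exp(t\cdot X)$ with threshold $\exp(t\cdot d)$) is exactly the standard derivation the paper implicitly relies on. Your side remarks on the vacuous case $\expv[\exp(t\cdot X)]=\infty$ and on signed $X$ are accurate and need no further elaboration.
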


The detailed application of Markov's inequality to tail bound analysis requires to choose a scaling factor $t:=t(\alpha,n)$ symbolic in $\alpha$ and $n$. After choosing the scaling factor, Markov's inequality gives the following tail bound:
\begin{equation}\label{ineq:markov}
\Pr[C_{\tau}\ge \alpha\cdot \kappa(n^*)]\le \mathbb E[\exp(t(\alpha,n^*)\cdot C_{\tau})]/\exp(t(\alpha,n^*)\cdot \alpha\cdot \kappa(n^*)).
\end{equation}

The role of the scaling factor $t(\alpha,n^*)$ is to scale the exponent in the term $\exp(\kappa(\alpha,n^*))$, and this is in many cases necessary as a tail bound may not be exponentially decreasing directly in the time limit $\alpha\cdot \kappa(n^*)$.

An unsolved part in the tail bound above is the estimation of the expected value $\mathbb E[\exp(t(\alpha,n^*)\cdot C_{\tau})]$. Our second step over-approximates the expected value $\mathbb E[\exp(t(\alpha,n^*)\cdot C_{\tau})]$. To achieve this goal, we impose a constraint on the scaling factor $t(\alpha, n)$ and an extra function $f(\alpha, n)$
and show that once the constraint is fulfilled, then one can derive an upper bound for $\mathbb E[\exp(t(\alpha,n^*)\cdot C_{\tau})]$ from $t(\alpha, n)$ and $f(\alpha, n)$. The theorem is proved via Optional Stopping Theorem. The theorem requires the almost-sure termination of the given PRR, a natural prerequisite of exponential tail bound. In this work, we consider PRRs with finite termination time that implies the almost-sure termination.

\begin{theorem}\label{thm:martingale}
Suppose we have
functions $t, f: [0,\infty)\times \mathbb{N}\rightarrow [0,\infty)$
such that
    \begin{talign}
    \label{eq:constraint}
   &\mathbb E[\exp(t(\alpha,n)\cdot \ex(n\mid f))]\le \exp(t(\alpha,n)\cdot f(\alpha,n))
    \end{talign}
for all sufficiently large $\alpha,n^*>0$ and all $c_p\le n\le n^*$, where
$$
\textstyle \ex(n\mid f):=\coststack(n)+\sum_{i=1}^{r}f(\alpha,\expr_i(n)).
$$
Then for $t_*(\alpha,n^*):=\min_{c_p\le n\le n^*} t(\alpha,n)$, we have that $$\mathbb E[\exp(t_*(\alpha,n^*)\cdot C_{\tau})]\le \mathbb E[\exp(t_*(\alpha,n^*)\cdot f(\alpha,n^*))].$$
Thus, we obtain the upper bound $u(\alpha,n^*):=\exp(t_*(\alpha,n^*)\cdot (f(\alpha,n^*)-\alpha\cdot \kappa(n^*)))$ for the tail bound in (\ref{eq:problem}).
\end{theorem}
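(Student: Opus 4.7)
The plan is to invoke the Optional Stopping Theorem (OST) on a carefully-chosen supermartingale defined along the PRR's execution, so that the process starts at $\Psi_0 = \exp(t_*(\alpha,n^*) \cdot f(\alpha,n^*))$ and equals $\exp(t_*(\alpha,n^*) \cdot C_\tau)$ at termination; combining the resulting inequality with Markov's inequality (\ref{ineq:markov}) will then deliver the claimed $u(\alpha,n^*)$. Throughout I write $t_* := t_*(\alpha,n^*)$.

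First I would promote the per-$n$ hypothesis (\ref{eq:constraint}) to a uniform MGF bound at $t_*$. Writing $h_n(t) := \mathbb{E}[\exp(t \cdot (\ex(n\mid f) - f(\alpha,n)))]$, the function $h_n$ is convex in $t$ (as a moment generating function), satisfies $h_n(0) = 1$, and by (\ref{eq:constraint}) satisfies $h_n(t(\alpha,n)) \le 1$. Since $t_* \le t(\alpha,n)$ for every $c_p \le n \le n^*$, convexity applied to the secant from $(0,1)$ to $(t(\alpha,n),h_n(t(\alpha,n)))$ gives $h_n(t_*) \le 1$, i.e.\ $\mathbb{E}[\exp(t_* \cdot \ex(n\mid f))] \le \exp(t_* \cdot f(\alpha,n))$ for every such $n$. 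This is the key lemma that lets a single scaling factor $t_*$ work uniformly.

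Next I would construct the supermartingale. Let $Q_k$ be the multiset of pending recursive calls and $A_k$ the accumulated $\coststack$-cost after $k$ resolution steps, and define $\Psi_k := \exp(t_* \cdot (A_k + \sum_{p(n_i) \in Q_k} f(\alpha,n_i)))$. Resolving the next pending call $p(n)$ removes $p(n)$ from $Q_k$, adds $\coststack(n)$ to $A_k$, and inserts $p(\expr_1(n)), \ldots, p(\expr_r(n))$ into the pending set, hence $\Psi_{k+1} = \Psi_k \cdot \exp(t_* \cdot (\ex(n\mid f) - f(\alpha,n)))$. Conditioning on the history $\mathcal F_k$ (which determines $n$, $Q_k$, and $A_k$) and applying the uniformized MGF inequality from the previous paragraph yields $\mathbb{E}[\Psi_{k+1} \mid \mathcal F_k] \le \Psi_k$. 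Initially $Q_0 = \{p(n^*)\}$ and $A_0 = 0$, giving $\Psi_0 = \exp(t_* \cdot f(\alpha,n^*))$; at termination $Q_\tau = \emptyset$ and $A_\tau = C_\tau$, giving $\Psi_\tau = \exp(t_* \cdot C_\tau)$.

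The final step is to apply OST to deduce $\mathbb{E}[\Psi_\tau] \le \Psi_0$ and then feed the inequality into (\ref{ineq:markov}); since $f(\alpha,n^*)$ is deterministic this equals the stated bound $\mathbb{E}[\exp(t_* \cdot C_\tau)] \le \mathbb{E}[\exp(t_* \cdot f(\alpha,n^*))]$. The main obstacle I expect is justifying the integrability precondition of OST: although the PRR is assumed to terminate in finite time, $\Psi_k$ is a product of exponentials of random pre-processing costs and need not be uniformly bounded, so one must either bound $\tau$ and the per-step multiplicative factor almost surely from the finite-termination hypothesis on the recursion tree, or exhibit a suitable dominating integrable random variable. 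Once OST is secured, (\ref{ineq:markov}) directly gives $\Pr[C_\tau \ge \alpha \cdot \kappa(n^*)] \le \exp(t_* \cdot (f(\alpha,n^*) - \alpha \cdot \kappa(n^*))) = u(\alpha,n^*)$, completing the proof.
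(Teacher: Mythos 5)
Your proposal follows essentially the same route as the paper: uniformize the scaling factor to $t_*$ via convexity of the moment generating function, build the exponential potential process $\Psi_k=\exp\bigl(t_*\cdot(A_k+\sum_{p(n_i)\in Q_k}f(\alpha,n_i))\bigr)$ over the pending-call (entry) steps, verify the one-step supermartingale inequality from (\ref{eq:constraint}), and close with the Optional Stopping Theorem and Markov's inequality. The one point you leave open --- the applicability of OST --- is genuine, but the two fixes you suggest are not the right ones: the stopping time $\tau$ is \emph{not} almost surely bounded for typical PRRs in this class (e.g.\ those with a probabilistic branch that recurses on $n$ itself, such as the channel-resolution example, have unbounded recursion depth), and no natural dominating integrable random variable is available since $\Psi_k$ multiplies unboundedly many exponential factors. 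The resolution used in the paper is simpler: $\Psi_k$ is a \emph{non-negative} supermartingale, and for non-negative supermartingales the inequality $\mathbb{E}[\Psi_\tau]\le\mathbb{E}[\Psi_0]$ holds for any almost surely finite stopping time, with no uniform integrability or domination hypothesis --- this follows from Fatou's lemma applied to the stopped process $\Psi_{\tau\wedge k}$, and is the form of OST the paper invokes. One accompanying technicality (also handled in the paper) is that $\Psi_k$ need not be integrable, so the conditional expectations in the supermartingale property must be taken in the extended sense for non-negative random variables; with that convention your one-step computation $\mathbb{E}[\Psi_{k+1}\mid\mathcal F_k]=\Psi_k\cdot\mathbb{E}\bigl[\exp\bigl(t_*\cdot(\mathsf{Ex}(n\mid f)-f(\alpha,n))\bigr)\mid\mathcal F_k\bigr]\le\Psi_k$ goes through exactly as you wrote it ("taking out what is known" still holds in the non-negative setting). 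With these two adjustments your argument is complete and coincides with the paper's proof.
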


\noindent\emph{Proof sketch.}
We fix a procedure $p$, and some sufficiently large $\alpha$ and $n^*$.
In general, we apply the martingale theory to prove this theorem. To construct a martingale, we need to make two preparations.

First, by the convexity of $\exp(\cdot)$, substituting $t(\alpha,n)$ with $t_*(\alpha,n^*)$ in (\ref{eq:constraint}) does not affect the validity of (\ref{eq:constraint}).

Second, given an infinite sequence of the PRR states $\rho=\mu_0,\mu_1,\dots$ in the sample space, we consider the subsequence $\rho'=\mu_0',\mu_1',\dots$ as follows, where we represent $\mu_i'$ as $(({\sf func}(p),\hat{n}_i'),C_i',\mathbf{K}_i')$.
It only contains states that are either final or at the entry of $p$, i.e., ${\sl comm}={\sf func}(p)$.
We define $\tau':=\inf\{t: \mu'_t\text{ is final}\}$, then it is straightforward that $C'_{\tau'}=C_{\tau}$.
We observe that $\mu_{i+1}'$ represents the recursive calls of $\mu_i'$. Thus, we can  characterize the conditional distribution $\mu_{i+1}'\mid \mu_i$ by the transformation function $\etf(\hat{n},\mathsf{func}(p))$ as follows. 
\begin{compactitem}
\item We first draw $(\coststack, \expr_1,\expr_2,r)$ from $\etf(\hat{n}_i',\mathsf{func}(p))$.
\item We accumulate $\coststack$ into the global cost. If there is a single recursion ($r=1$), we invoke this sub-procedure. If there are two recursive calls, we push the second call $p(\expr_2)$ into the stack and invoke the first one $p(\expr_1)$. 
\end{compactitem}

Now we construct the super-martingale as follows. For each $i\ge 0$, we denote the stack as $\mathbf{K}_i'$ for $\mu_i'$ as $(\mathsf{func}(p),\mathsf{s}_{i,1})\cdots  (\mathsf{func}(p),\mathsf{s}_{i,q_i})$, where $q_i$ is the stack size.
We prove that another  process $y_0,y_1,\ldots$ that forms a super-martingale, where
$y_i:=\exp\left(t_*(\alpha,n^*)\cdot \left(C_i'+f(\alpha,\hat{n}_i')+\sum_{j=1}^{q_i} f(\alpha,\mathsf{s}_{i,j})\right)\right)$.
Note that $y_0 = \exp(t_*(\alpha,n^*)\cdot f(\alpha,n^*))$, and $y_{\tau'} = \exp(t_*(\alpha,n^*)\cdot C_{\tau'}')=\exp\left(t_*(\alpha,n^*)\cdot C_{\tau}\right)$.
Thus we informally have that
$
\mathbb E\left[\exp\left(t_*(\alpha,n^*)\cdot C_{\tau}\right)\right]
=\mathbb E\left[y_{\tau'} \right] \\\le \mathbb E[y_0] = \exp\left(t_*(\alpha,n^*)\cdot f(\alpha,n^*)\right)
$ and the theorem follows. 
\qed

It is natural to ask whether our theoretical approach can always find an exponential-decreasing tail bound over PRRs.
We answer this question by showing that under a difference boundedness and a monotone condition, the answer is yes. 
We first present the difference boundedness condition (A1) and the monotone condition (A2) for a PRR $\Delta$ in the canonical form (\ref{eq:canonical-prr}) as follows.

\begin{compactitem}
\item[(A1)] $\Delta$ is  \emph{difference-bounded} if there exist two real constants $M'\le M$, such that for every $n\ge c_p$, and every possible value $(V,s_1,\dots, s_k)$ in the support of the probability distribution $\etf(n,{\sf func}(p))$, we have that
\begin{talign*}
M'\cdot \mathbb E[\coststack(n)]\le V+(\sum_{i=1}^k \mathbb E[p(s_i)])-\mathbb E[p(n)] \le M\cdot \mathbb E[\coststack(n)].
\end{talign*}
\item[(A2)] $\Delta$ is \emph{expected non-decreasing} if $\mathbb E[\coststack(n)]$ does not decrease as $n$ increases.
\end{compactitem}
\noindent In other words,  (A1) says that for any possible concrete pre-processing time $V$ and passed sizes $s_1,\dots,s_k$, the difference between the expected runtime before and after the recursive call
is bounded by the magnitude of the expected pre-processing time. (A2) simply specifies that the expected pre-processing time be monotonically non-decreasing. These conditions are fulfilled by most randomized algorithms, see Appendix~\ref{app:evaluation} for details.

With the conditions (A1) and (A2), our theoretical approach guarantees a tail bound that is exponentially decreasing in the coefficient $\alpha$ and the ratio ${\mathbb E[p(n^*)]}/{\mathbb E[\coststack(n^*)]}$. The theorem statement is as follows.

\vspace{-0.3em}
\begin{theorem}\label{thm:comp}
Let $\Delta$ be a PRR in the canonical form (\ref{eq:canonical-prr}). If $\Delta$ satisfies (A1) and (A2), then for any function $w:[1, \infty)\rightarrow (1,\infty)$, the functions $f,t$ given by
\begin{talign*}
f(\alpha,n):=w(\alpha)\cdot \mathbb E[p(n)]&\mbox{ and }& t(\alpha,n):=\frac{\lambda(\alpha)}{\mathbb E[\coststack(n)]} \\
&\mbox{ with }&\lambda(\alpha):=\frac{8(w(\alpha)-1)}{w(\alpha)^2(M_2-M_1)^2}
\end{talign*}
fulfill the constraint (\ref{eq:constraint}) in Theorem \ref{thm:martingale}. Furthermore, by choosing $w(\alpha):=\frac{2\alpha}{1+\alpha}$ in the functions $f,t$ above and $\kappa(\alpha,n^*):=\alpha\cdot \mathbb E[p(n^*)]$, one obtains the tail bound
\begin{talign*}
\Pr[C_{\tau}\ge \alpha \mathbb E[p(n^*)]]\le \exp\left(-\frac{2(\alpha-1)^2}{\alpha(M_2-M_1)^2}\cdot \frac{\mathbb E[p(n^*)]}{\mathbb E[\coststack(n^*)]}\right).
\end{talign*}
\end{theorem}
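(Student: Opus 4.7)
The plan is to split the proof into two parts mirroring the statement: (i) verify that the proposed $f$ and $t$ satisfy constraint (\ref{eq:constraint}) of Theorem~\ref{thm:martingale}, and (ii) perform the algebraic simplification for the explicit choice $w(\alpha) = 2\alpha/(1+\alpha)$ and $\kappa(n^*) = \mathbb{E}[p(n^*)]$.

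For (i), I would first write $\mu := \mathbb{E}[\coststack(n)]$, $L := M_2 - M_1$, and introduce the centered variable
\[
X := \coststack(n) + \sum_{i=1}^r \mathbb{E}[p(\expr_i(n))] - \mathbb{E}[p(n)].
\]
By the standard PRR expectation identity $\mathbb{E}[p(n)] = \mathbb{E}\bigl[\coststack(n) + \sum_i \mathbb{E}[p(\expr_i(n))]\bigr]$, $X$ has zero mean, and (A1) forces $X \in [M_1\mu, M_2\mu]$ almost surely. Substituting $f(\alpha, m) = w(\alpha)\,\mathbb{E}[p(m)]$ into $\ex(n\mid f)$ and using this identity yields the key rearrangement
\[
\ex(n\mid f) - f(\alpha, n) = w(\alpha)\,X - (w(\alpha) - 1)\,\coststack(n),
\]
so (\ref{eq:constraint}) reduces to the moment-generating-function inequality $\mathbb{E}\bigl[\exp\bigl(t\,(wX - (w-1)\coststack(n))\bigr)\bigr] \le 1$ at $t = \lambda(\alpha)/\mu$. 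The crucial technical step is then to prove this MGF bound via a Hoeffding-style estimate: the bounded range $wL\mu$ of $wX$ contributes a quadratic term $t^2 w^2 L^2\mu^2/8$ in the exponent, while the nonnegativity of $\coststack(n)$ together with $\mathbb{E}[\coststack(n)] = \mu$ contributes a linear drift $-t(w-1)\mu$; the choice $\lambda(\alpha) = 8(w-1)/(w^2 L^2)$ is exactly the minimizer of the quadratic $-t(w-1)\mu + t^2 w^2 L^2\mu^2/8$ in $t$, at which the overall exponent becomes zero and the MGF is bounded by $1$.

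For (ii), Theorem~\ref{thm:martingale} combined with the monotonicity (A2) gives $t_*(\alpha, n^*) = \lambda(\alpha)/\mathbb{E}[\coststack(n^*)]$ and hence a tail exponent of $\lambda(\alpha)\,(w(\alpha) - \alpha)\,\mathbb{E}[p(n^*)]/\mathbb{E}[\coststack(n^*)]$. Plugging in $w(\alpha) = 2\alpha/(1+\alpha)$ produces $w - 1 = (\alpha-1)/(\alpha+1)$ and $w - \alpha = -\alpha(\alpha-1)/(\alpha+1)$, and a short direct computation of $\lambda(\alpha)(w(\alpha) - \alpha)$ simplifies to $-2(\alpha-1)^2/(\alpha L^2)$, matching the stated bound.

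The principal obstacle I expect is in the MGF bound of step (i): the random variables $X$ and $\coststack(n)$ are coupled through their joint realization in $\etf(n, \mathsf{func}(p))$, so a naive factorization of the MGF as $\mathbb{E}[\exp(twX)]\cdot \mathbb{E}[\exp(-t(w-1)\coststack(n))]$ is not directly available. The proof must carefully combine Hoeffding applied to $X$ with the pointwise nonnegativity of $\coststack(n)$ and its expectation in a way that exploits this dependence; pinning down the exact form of this estimate is where the precise value $\lambda(\alpha) = 8(w-1)/(w^2 L^2)$ emerges.
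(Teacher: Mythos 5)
Your route is the paper's own: it likewise rewrites (\ref{eq:constraint}) as $\mathbb E[\exp(t(\alpha,n)\cdot(\coststack(n)+\sum_{i=1}^r f(\alpha,\expr_i(n))-f(\alpha,n)))]\le 1$, uses $\mathbb E[p(n)]=\mathbb E[\coststack(n)]+\sum_i\mathbb E[p(\expr_i(n))]$ to compute the mean of the exponent exactly as $-\lambda(\alpha)(w(\alpha)-1)$, and then applies Hoeffding's Lemma; and your part (ii) — $t_*(\alpha,n^*)=\lambda(\alpha)/\mathbb E[\coststack(n^*)]$ via (A2), tail exponent $\lambda(\alpha)(w(\alpha)-\alpha)\cdot\mathbb E[p(n^*)]/\mathbb E[\coststack(n^*)]$, and $\lambda(\alpha)(w(\alpha)-\alpha)=-2(\alpha-1)^2/(\alpha(M_2-M_1)^2)$ — is correct and in fact more explicit than the paper's write-up. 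The genuine gap is the step you yourself leave open: the MGF inequality is never proved, and the obstacle you anticipate (coupling of $X$ and $\coststack(n)$ blocking a factorization of the MGF) is not how the paper closes it. The paper does not factor at all: it applies Hoeffding's Lemma in its mean-shifted form, $\mathbb E[\exp(Y)]\le\exp(\mathbb E[Y])\cdot\exp((b-a)^2/8)$ for $a\le Y\le b$, to the \emph{single} combined random variable $Y:=t(\alpha,n)\cdot\bigl(w(\alpha)X-(w(\alpha)-1)\coststack(n)\bigr)$, so no independence is needed; the linear drift $-\lambda(\alpha)(w(\alpha)-1)$ is exactly $\mathbb E[Y]$ (your zero-mean computation), and (A1) is invoked to place $Y$ in the interval $[\lambda(1-w+wM_1),\lambda(1-w+wM_2)]$ of width $\lambda w(M_2-M_1)$, after which the exponent $-\lambda(w-1)+\lambda^2w^2(M_2-M_1)^2/8$ vanishes for the stated $\lambda$. (Your instinct that the $-(w-1)\coststack(n)$ term needs care is not baseless: the quoted interval is immediate when $\coststack(n)$ is deterministic, and for random $\coststack(n)$ the paper simply asserts it from (A1); but in any case the resolution is one Hoeffding application to the whole exponent, not Hoeffding on $X$ combined with a separate estimate for $\coststack(n)$.)

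One smaller correction: $t=\lambda(\alpha)/\mathbb E[\coststack(n)]$ is not the minimizer of the quadratic $-t(w-1)\mu+t^2w^2(M_2-M_1)^2\mu^2/8$ (the minimizer sits at half this value, where the exponent is strictly negative); it is the positive root at which the exponent is zero, i.e.\ the largest scaling factor still compatible with (\ref{eq:constraint}), which is the right choice since the final bound $\exp(t_*\cdot(f(\alpha,n^*)-\alpha\kappa(n^*)))$ improves as $t_*$ grows. As written, "the minimizer \dots at which the overall exponent becomes zero" is internally inconsistent, though it does not affect your part (ii) computation.
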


\noindent\emph{Proof sketch.}
We first rephrase the constraint (\ref{eq:constraint}) as
\begin{talign*}
\mathbb E\left[\exp\left(t(\alpha,n)\cdot (\coststack(n)+\sum_{i=1}^rf(\alpha,\expr_i(n))-f(\alpha,n))\right)\right]\le 1
\end{talign*}
Then we focus on the exponent in the $\exp(\cdot)$, by (A1), the exponent is a bounded random variable. By further calculating its expectation and applying Hoeffiding's Lemma~\cite{Hoeffding1963inequality}, we obtain the theorem above.
\qed

Note that since $\mathbb E[p(n)]\ge \mathbb E[\coststack(n)]$ when $n\ge c_p$, the tail bound is at least exponentially-decreasing with respect to the coefficient $\alpha$. This implies that our theoretical approach derives tail bounds that are at least as tight as Karp's method when (A1) and (A2) holds.
When $\mathbb E[p(n)]$ is of a strictly greater magnitude than  $\mathbb E[\coststack(n)]$, our approach derives asymptotically tighter bounds.

Below, we apply the theorem above to prove tail bounds for Quickselect~(Example~\ref{ex:quickselect}) and Quicksort~(Example~\ref{ex:quicksort}).

\begin{example}\label{ex:quickselect-theory}
For QuickSelect, its canonical form is $p(n;2)=n+p(\expr_1(n))$, where $\expr_1(n)$ observes as $\mathtt{muniform}(n)$. Solving the recurrence relation, we obtain that $\mathbb E[p(n)]=4\cdot n$. We further find that this PRR satisfies (A1) with two constants  $M' = -1, M = 1$. Note that the PRR satisfies (A2) obviously. Hence, we apply Theorem \ref{thm:comp} and derive the tail bound for every sufficiently large $\alpha$:
\begin{talign*}
    \Pr[C_{\tau}\ge 4\cdot \alpha\cdot n^*]\le \exp\left(-\frac{2(\alpha-1)^2}{\alpha}\right).
\end{talign*}
On the other hand, Karp's cookbook has the tail bound
\begin{talign*}
    \Pr[C_{\tau}\ge 4\cdot \alpha\cdot n^*]\le \exp\left(1.15-1.12\cdot \alpha\right).
\end{talign*}
Our bound is asymptotically the same as Karp's but has a better coefficient.\qed
\end{example}

\begin{example}\label{ex:quicksort-theory}
For QuickSort, its canonical form is $p(n;2)=n+p(\expr_1(n))+p(\expr_2(n))$, where $\expr_1(n)$ observes as $\mathtt{muniform}(n)$ and $\expr_2(n)\! =\! n\!-\!1\!-\!\expr_1(n)$. Similar to the example above, we first calculate $\mathbb E[p(n)]\!=\!2\!\cdot\! n\!\cdot\! \ln n$. Note that this PRR also satisfies two assumptions above with two constants  $M' = -2\log 2, M = 1$. Hence, for every sufficiently large $\alpha$, we can derive the tail bound as follows:
\begin{talign*}
    Pr[C_{\tau}\ge 2\cdot \alpha\cdot n^*\cdot \ln n^*]\le \exp\left(-\frac{0.7(\alpha-1)^2}{\alpha}\cdot \ln n^*\right).
\end{talign*}
On the other hand, Karp's cookbook has the tail bound
\begin{talign*}
    Pr[C_{\tau}\ge 2\cdot \alpha\cdot n^*\cdot \ln n^*]\le \exp\left(-\alpha + 0.5\right).
\end{talign*}
Note that our tail bound is tighter than Karp's with a $\ln n$ factor.\qed
\end{example}

 {\color{red} From the generality of Markov's inequality, our theoretical approach can handle to general PRRs with three or more sub-procedure calls. However,}
the tail bounds derived from Theorem~\ref{thm:comp} is still not tight since the theorem only uses the expectation and bound of the given distribution.
For example, for QuickSelect, the tightest known bound $\exp(-{\rm \Theta}(\alpha\cdot \ln \alpha))$~\cite{QuickSelectSOTA}, is tighter than that derived from
Theorem~\ref{thm:comp}. 
Below, we present an algorithmic approach that fully utilizes the distribution information  
and derives tight tail bounds that can match ~\cite{QuickSelectSOTA}.

\vspace{-1ex}
\section{An Algorithmic Approach}\label{sec:algorithm}

In this section, we demonstrate an algorithmic implementation for our theoretical approach
(Theorem~\ref{thm:martingale}). Our algorithm
synthesizes the functions $t,f$ through template and a refined estimation on
the exponential terms {\color{blue} from the inequality (\ref{eq:constraint})}.
The estimation is via integration and the monotonicity of the template.
Below we fix a PRR $p(n;c_p)$ in the canonical form (\ref{eq:canonical-prr}) and a time limit $\alpha\cdot \kappa(n^*)$.

\smallskip
Recall that to apply Theorem~\ref{thm:martingale}, one needs to find functions $t,f$ that satisfy the constraint (\ref{eq:constraint}).
Thus, the first step of our algorithm is to have
pseudo-monomial template for
$f(\alpha,n)$ and $t(\alpha,n)$ in the following form:
\begin{small}
\begin{talign}
\label{eq:templatef} f(\alpha,n)&:=\coeff\cdot \alpha^{\polyfa} \cdot \ln^{\logfa} \alpha\cdot n^{\polyfn} \cdot \ln^{\logfn} n\\
\label{eq:templatet} t(\alpha,n)&:=\coeft\cdot \alpha^{\polyta} \cdot \ln^{\logta} \alpha\cdot n^{\polytn} \cdot \ln^{\logtn} n
\end{talign}
\end{small}

In the template, we have
$\polyfa,\logfa,\polyfn,\logfn,\polyta,\logta,\polytn,\logtn$ are given integers, and $\coeff,\coeft>0$ are unknown positive coefficients to be solved. For several compatibility reasons (see Proposition~\ref{thm:template} and~\ref{prop:failure} in the following), we require that $u_f,v_f\ge 0$ and $u_t,v_t\le 0$.
We say that the concrete values $\overline{\coeff},\overline{\coeft}$ for the unknown coefficients $\coeff,\coeft>0$ are \emph{valid} if the concrete functions $\overline{f},\overline{t}$ obtained by
substituting $\overline{\coeff},\overline{\coeft}$ for 
$\coeff,\coeft$ in the template (\ref{eq:templatef}) and (\ref{eq:templatet}) satisfy the constraint (\ref{eq:constraint}) for every sufficiently large $\alpha,n^*\ge 0$ and
all $c_p\le n\le n^*$. 

{\color{blue} We consider the pseudo-polynomial template since the runtime behavior of randomized algorithms can be mostly captured by pseudo-polynomials. We choose monomial templates since our interest is the asymptotic magnitude of the tail bound. Thus, only the monomial with the highest degrees matter.}

Our algorithm searches the values for $\polyfa,\logfa,\polyfn,\logfn,\polyta,\logta,\polytn,\logtn$ by an enumeration within a bounded range $\{-B,\dots,B\}$, where $B$ is a manually specified positive integer.
To avoid exhaustive enumeration, we use the following proposition to prune the search space.

\begin{proposition}\label{thm:template}
Suppose that we have functions $t, f: [0,\infty)\times \mathbb{N}\rightarrow [0,\infty)$ that fulfill the constraint (\ref{eq:constraint}). Then it holds that (i)
    
    $(p_f,q_f)\le (1,0)$ and $(p_t,q_t)\ge (-1,0)$, and (ii)
    
    $f(\alpha,n)={\rm \Omega}(\mathbb E[p(n)])$, $f(\alpha,n)=O(\kappa(n))$ and $t(\alpha,n)={\rm \Omega}(\kappa(n)^{-1})$ for any fixed $\alpha>0$,
where we write $(a,b)\le (c,d)$ for the lexicographic order, i.e., $(a\le c)\land (a=c\rightarrow b\le d)$.
\end{proposition}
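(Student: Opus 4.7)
The plan is to derive both (i) and (ii) from the defining inequality (\ref{eq:constraint}) by combining Jensen's inequality with an asymptotic analysis of the pseudo-monomial templates.

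First I would establish $f(\alpha,n)=\Omega(\mathbb E[p(n)])$. Taking logarithms in (\ref{eq:constraint}) and using the convexity of $\exp$, one has $t(\alpha,n)\cdot f(\alpha,n)\ge \ln\mathbb E[\exp(t(\alpha,n)\cdot \ex(n\mid f))]\ge t(\alpha,n)\cdot \mathbb E[\ex(n\mid f)]$, and since $t>0$ (as a pseudo-monomial with positive coefficient $\coeft$) this reduces to the super-solution inequality
\[
f(\alpha,n)\ \ge\ \mathbb E[\coststack(n)]+\mathbb E\Big[\textstyle\sum_{i=1}^{r}f(\alpha,\expr_i(n))\Big].
\]
Since $\mathbb E[p(n)]$ satisfies the same recurrence with equality, a straightforward induction on $n$ (using that recursive sizes satisfy $\expr_i(n)<n$ almost surely, as the PRR terminates in finite time) yields $f(\alpha,n)\ge \mathbb E[p(n)]$ for all $n\ge c_p$.

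For the remaining parts of (ii) and all of (i), I would invoke the implicit requirement that the tail bound $u(\alpha,n^*)=\exp(t_*(\alpha,n^*)\cdot (f(\alpha,n^*)-\alpha\cdot \kappa(n^*)))$ produced by Theorem~\ref{thm:martingale} be non-trivial. Since $t>0$, demanding $u<1$ forces $f(\alpha,n^*)<\alpha\cdot \kappa(n^*)$; evaluating at fixed $\alpha$ and letting $n^*\to\infty$, the pseudo-monomial form of $f$ forces its $n$-degrees to be lexicographically dominated by those of $\kappa$, giving $f(\alpha,n)=O(\kappa(n))$. Requiring $u\to 0$ as $n^*\to\infty$ additionally demands $t_*(\alpha,n^*)\cdot (\alpha\cdot \kappa(n^*)-f(\alpha,n^*))\to\infty$; combined with the possibility that $f(\alpha,n^*)=\Theta(\kappa(n^*))$, this forces $t(\alpha,n)\cdot \kappa(n)=\omega(1)$ and hence $t=\Omega(\kappa(n)^{-1})$.

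For (i), the same trichotomy is applied in the $\alpha$-direction with $n=n^*$ fixed. If $\polyfa>1$, then $f(\alpha,n^*)$ eventually exceeds $\alpha\cdot \kappa(n^*)$ in $\alpha$ (as $\kappa$ carries no $\alpha$-factor), flipping the sign of the exponent and trivialising the bound; in the boundary case $\polyfa=1$, a positive $\logfa$ would similarly dominate the linear-in-$\alpha$ term via the $\ln\alpha$ factor, so $\logfa\le 0$. A symmetric argument based on the requirement that $t_*\cdot \alpha\cdot \kappa(n^*)$ diverge in $\alpha$ yields $(\polyta,\logta)\ge (-1,0)$. The main obstacle I anticipate is the boundary tie-breaking in (i): at $\polyfa=1$ (resp.\ $\polyta=-1$) the polynomial-in-$\alpha$ orders coincide and only the $\ln\alpha$ factor decides, so the lexicographic statement must be established via a careful asymptotic expansion rather than a coarser polynomial-order comparison; a second subtlety is making the informal "meaningful-bound" requirement used in (ii) and (i) formally compatible with the paper's standing "for all sufficiently large $\alpha,n^*$" regime.
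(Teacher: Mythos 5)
Your first step matches the paper: applying Jensen's inequality to (\ref{eq:constraint}) and cancelling $t>0$ to get $f(\alpha,n)\ge \mathbb E[\coststack(n)]+\mathbb E\bigl[\sum_{i=1}^{r}f(\alpha,\expr_i(n))\bigr]$ is exactly how the paper begins, and your treatment of (i) and of the remaining parts of (ii) as necessary conditions for the resulting bound to be a non-trivial, exponentially-decreasing tail bound is also the paper's (admittedly informal) reading of those constraints. The genuine gap is in how you pass from the super-solution inequality to $f(\alpha,n)\ge \mathbb E[p(n)]$. You justify an induction on $n$ by claiming that the recursive sizes satisfy $\expr_i(n)<n$ almost surely ``as the PRR terminates in finite time.'' That inference is false, and it fails on benchmarks in this very paper: in ChannelConflictResolution the call recurses on $p(n)$ with probability $1-\tfrac1e$, and in Rdwalk it recurses on $p(n)$ with probability $0.5$; both have finite expected termination time, yet the size does not strictly decrease at every step. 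When $\Pr[\expr_1(n)=n]>0$ the inequality contains $f(\alpha,n)$ on both sides and your induction hypothesis cannot be applied as stated; the argument would need to condition on the non-progressing event and solve for $f(\alpha,n)$, which is extra work you have not done (and which becomes delicate to organize as a clean well-founded induction).

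The paper avoids this by ``imitating the proof of Theorem~\ref{thm:martingale}'': one drops the exponential and shows that $C_i'+f(\alpha,\hat n_i')+\sum_{j}f(\alpha,\mathsf{s}_{i,j})$ is a supermartingale along the entry subsequence, then applies the Optional Stopping Theorem using the almost-sure finiteness of $\tau'$ to conclude $\mathbb E[C_\tau]=\mathbb E[p(n^*)]\le f(\alpha,n^*)$. That route needs no strict decrease of sizes and handles the non-progressing branches uniformly, which is precisely what your induction misses. Either switch to that martingale argument or repair the induction by explicitly handling steps where $\expr_i(n)=n$ occurs with positive probability.
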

\textit{Proof.}
Except for the constraint that $f(\alpha,n)={\rm \Omega}(\mathbb E[p(n)])$, the other constraints simply ensure that the tail bound is exponentially-decreasing. To see why $f(\alpha,n)={\rm \Omega}(\mathbb E[p(n)])$, we apply Jensen's inequality~\cite{RudinBook} to (\ref{eq:constraint}) and obtain $f(n)\ge \expv[\mathsf{Ex}(n|f)]=\expv[\coststack(n)+\sum_{i=1}^r f(\expr_i(n))]$. Then we imitate the proof of Theorem~\ref{thm:martingale} and derive that $f(n)\ge \mathbb E[p(n)]$.
\qed

{\color{blue}
Proposition~\ref{thm:template} shows that it suffices to consider (i) the choice of $u_f, v_f$ that makes the magnitude of $f$ to be within $\mathbb E[p(n)]$ and $\kappa(n)$, (ii) the choice of $u_t,v_t$ that makes the magnitude of $t^{-1}$ within $\kappa(n)$, and (iii) the choice of $p_f,q_f,p_t,q_t$ that fulfills $(p_f,q_f)\le (1,0),(p_t,q_t)\ge (-1,0)$. Note that an over-approximation of $\mathbb E[p(n)]$ can be either obtained manually or derived from automated approaches~\cite{expectedRecurrence}.}

\begin{example}\label{ex:template}
Consider the quickselect example~(Example~\ref{ex:quickselect}), suppose we are interested in the tail bound $\Pr[C_{\tau}\ge \alpha\cdot n]$, and we enumerate the eight integers in the template from $-1$ to $1$. Since $\mathbb E[p(n)] = 4\cdot n$, by the proposition above, we must have that $(u_f,v_f)=(1,0)$, $(u_t,v_t)\ge (-1,0)$, $(p_t,q_t)\ge (-1,0)$, $(p_f,q_f)\le (1,0)$. This reduces the number of choices for the template from $1296$ to $128$, where these numbers are automatically generated by our implementation. A choice is $f(\alpha,n):=c_f\cdot \alpha\cdot (\ln \alpha)^{-1}\cdot n$ and $t(\alpha,n):=c_t\cdot \ln \alpha\cdot n^{-1}$.\qed
\end{example}

In the second step, our algorithm solves the unknown coefficients $\coeft,\coeff$ in the template.
Once they are solved, our algorithm applies Theorem~\ref{thm:martingale} to obtain the tail bound.
In detail, our algorithm computes
$t_{*}(\alpha,n^*)$ as the minimum of $t(\alpha,n)$ over $c_p\le n\le n^*$, and by
$u_t,v_t\le 0$, 
$t_*(\alpha,n^*)$ is simply $t(\alpha,n^*)$, so that we obtain the tail bound $u(\alpha,n^*)=\exp(t(\alpha,n^*)\cdot (f(\alpha,n^*)-\alpha\cdot \kappa(n^*)))$.

\begin{example}\label{ex:success}
Continue with Example~\ref{ex:template}.
Suppose we have successfully found that $\overline{\coeff}=2,\overline{\coeft}=1$
is a valid concrete choice for the unknown coefficients in the template. Then $t_*(\alpha,n^*)$ is $t(\alpha,n^*)=\ln \alpha\cdot (n^*)^{-1}$, and we have the tail bound $u(\alpha,n^*)=\exp(2\cdot \alpha- \alpha\cdot \ln \alpha)$, which has better magnitude than the tail bound by Karp's method and our Theorem~\ref{thm:comp} (See Example~\ref{ex:quickselect-theory}). \qed
\end{example}

Our algorithm follows the guess-and-check paradigm.
The guess procedure 
explores possible values $\overline{\coeff},\overline{\coeft}$ for $\coeff,\coeft$ and invokes the check procedure
to verify whether the current choice is valid. Below we
present the guess procedure in Section \ref{sec:guess}, and the check procedure in Section \ref{sec:check}.

\vspace{-1ex}
\subsection{The Guess Procedure $\mathsf{Guess}(f,t)$}\label{sec:guess}

The pseudocode for our guess procedure  $\mathsf{Guess}(f,t)$ is given in Algorithm \ref{alg:main}.
In detail, it first receives a positive integer $M$ as the doubling and halving number (Line 1), then iteratively enumerates possible
values for the unknown coefficients $\coeff$ and $\coeft$ by doubling and halving for $M$ times (Line 3 -- Line 4), and finally calls the check procedure (Line 5). It is justified by the following theorem (proved in Appendix~\ref{appendix:check}).

\vspace{-0.5em}
\begin{theorem}\label{thm:binsearch}
Given the template for $f(\alpha,n)$ and $t(\alpha,n)$ as in (\ref{eq:templatef}) and (\ref{eq:templatet}), if $\overline{c_f},\overline{c_t}$ are valid choices, then (i) for every $k>1$, $k\cdot \overline{c_f},\overline{c_t}$ remains to be valid, and (ii) for every $0<k<1$, $ \overline{c_f},k\cdot \overline{c_t}$ remains to be valid.
\end{theorem}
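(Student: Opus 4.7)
\smallskip\noindent\textbf{Proof plan.} I would handle the two parts separately, since their mechanisms differ.

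For part (ii), the key observation is that scaling $c_t$ by a factor $k\in(0,1)$ uniformly rescales the exponent on both sides of (\ref{eq:constraint}) by $k$. Write $\tilde t,\tilde f$ for the pseudo-monomial parts of the templates (so $t=c_t\tilde t$, $f=c_f\tilde f$) and set $Y:=\exp\bigl(\overline{c_t}\tilde t(\alpha,n)\cdot\ex(n\mid\overline f)\bigr)$ and $M:=\exp\bigl(\overline{c_t}\tilde t(\alpha,n)\cdot\overline f(\alpha,n)\bigr)$. Validity of $(\overline{c_f},k\overline{c_t})$ then reads exactly $\mathbb E[Y^k]\le M^k$. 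Since $x\mapsto x^k$ is concave on $[0,\infty)$ for $0<k<1$, Jensen's inequality gives $\mathbb E[Y^k]\le(\mathbb E[Y])^k$; combined with $\mathbb E[Y]\le M$ from the hypothesized validity at $(\overline{c_f},\overline{c_t})$, the argument closes by raising to the $k$-th power.

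For part (i), my strategy is a monotonicity argument. Scaling $c_f$ to $k\overline{c_f}$ with $k>1$ leaves $t$ unchanged; dividing (\ref{eq:constraint}) through by the exponential on its right-hand side reduces the scaled validity to $\psi(k\overline{c_f})\le 1$, where
\[\psi(c)\ :=\ \mathbb E\Bigl[\exp\bigl(\overline{c_t}\tilde t(\alpha,n)\cdot X\,+\,c\cdot\overline{c_t}\tilde t(\alpha,n)\cdot D\bigr)\Bigr],\quad D:=\sum_{i=1}^{r}\tilde f(\alpha,\expr_i(n))-\tilde f(\alpha,n),\]
and $X:=\coststack(n)$. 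The hypothesis gives $\psi(\overline{c_f})\le 1$. Differentiating under the expectation (justified by dominated convergence, since the exponent is dominated by a polynomial times an integrable exponential), $\psi'(c)=\overline{c_t}\tilde t(\alpha,n)\cdot\mathbb E[D\cdot e^{(\cdots)}]$, which is $\le 0$ whenever $D\le 0$ almost surely. Monotonicity of $\psi$ then yields $\psi(k\overline{c_f})\le\psi(\overline{c_f})\le 1$, exactly the scaled validity.

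The main obstacle is therefore to establish $\sum_{i=1}^{r}\tilde f(\alpha,\expr_i(n))\le\tilde f(\alpha,n)$ almost surely, using the template restriction $u_f,v_f\ge 0$ and the shape of the recursive call. For a single recursion ($r=1$) it is immediate from monotonicity of $\tilde f(\alpha,\cdot)$ and $\expr_1(n)<n$. The delicate case is divide-and-conquer ($r=2$) with $\expr_1(n)+\expr_2(n)=n-1$, which reduces to the superadditivity of $n\mapsto n^{u_f}\ln^{v_f}n$; I would prove this via convexity of that map, which holds whenever $u_f\ge 1$, and then argue that this range is forced for any template capable of validating the constraint, using Proposition~\ref{thm:template} together with the observation that $\mathbb E[p(n)]$ is at least linear on the PRRs produced in canonical form. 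The Jensen step of part (ii), by contrast, is textbook.
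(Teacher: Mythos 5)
Your proposal is correct in substance and, at its core, follows the same route as the paper's (very terse) proof: for part (i) the paper likewise reduces validity at $k\cdot\overline{c_f}$ to the pointwise inequality $\sum_{i=1}^r f(\alpha,\expr_i(n))\le f(\alpha,n)$ (so that enlarging $c_f$ only decreases the exponent), and for part (ii) the paper uses a convexity argument, interpolating between the valid point $(\overline{c_f},\overline{c_t})$ and the trivially valid point $(\overline{c_f},0)$ along which the exponent is linear in $c_t$; your Jensen step $\mathbb E[Y^k]\le(\mathbb E[Y])^k\le M^k$ with $x\mapsto x^k$ concave is an equivalent, equally elementary variant of that. Two remarks. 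First, your differentiation-under-the-expectation for part (i) is unnecessary: once $D\le 0$ almost surely, the exponent is pointwise non-increasing in $c$, so monotonicity of $\psi$ follows without any integrability bookkeeping. Second, you are actually more careful than the paper about the delicate point: the paper simply asserts $\sum_i f(\alpha,\expr_i(n))\le f(\alpha,n)$ whenever $\sum_i\expr_i(n)\le n$, which in the two-call case implicitly requires the superadditivity you make explicit (i.e.\ effectively $u_f\ge 1$; it fails, e.g., for $u_f=0,v_f=1$ with two parts of comparable size). However, your closing justification overreaches: $\mathbb E[p(n)]$ is \emph{not} at least linear for all canonical-form PRRs (RandSearch, MC1, MC2 have polylogarithmic expected time), so Proposition~\ref{thm:template} does not force $u_f\ge 1$ across the board. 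This is harmless where you need it, since for single recursion monotonicity of $\tilde f(\alpha,\cdot)$ alone suffices, and in the divide-and-conquer case the two passed sizes sum to (essentially) $n$, so the recursion tree has $\Omega(n)$ nodes and $\mathbb E[p(n)]=\Omega(n)$ there, which is the instance where $u_f\ge1$ is forced; stating the linearity claim only for the divide-and-conquer branches would close this small gap.
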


\begin{wrapfigure}[9]{l}{0.495\textwidth}
\vspace{-2.7em}
\begin{minipage}[h]{0.495\textwidth}
\begin{scriptsize}
\begin{algorithm}[H]\label{alg:main}
\SetInd{0.3em}{0.6em}
\caption{\small{Guess Procedure}}
    \SetKwFunction{FSearch}{\textsf{Guess}}
    \SetKwProg{Fn}{Procedure}{:}{}
    \SetKwInOut{Input}{Input}
    \SetKwInOut{Output}{Output}
    \Input{Template for $f(\alpha,n)$ and $t(\alpha,n)$ as in (\ref{eq:templatef}) and (\ref{eq:templatet})}
    \Output{$\overline{\coeff},\overline{\coeft}>0$ for (\ref{eq:templatef}) and (\ref{eq:templatet})}
    \textbf{Parameter:} $M$ for the maximum steps of doubling and halving.\\
    \Fn{\FSearch{$f,t$}}{

        \For{$\overline{\coeft}:=1,2^{-1},\dots,2^{-M}$}{
\For{$\overline{\coeff}:=\frac12,1,2,\dots,2^{M-1}$}{
\If{{\sf CheckCond}$(\overline{\coeff},\overline{\coeft})$}{$\mathbf{Return}\ (\overline{\coeff},\overline{\coeft})$}
        }
        }
    }
    \smallskip
\end{algorithm}
\end{scriptsize}
\end{minipage}
\end{wrapfigure}\leavevmode

\vspace{-1em}
By Theorem~\ref{thm:binsearch}, if the check procedure is sound and complete (i.e., $\mathsf{CheckCond}$ always terminates and
$\overline{\coeff},\overline{\coeft}$ fulfills the constraint (\ref{eq:constraint}) iff $\mathsf{CheckCond}(\overline{\coeff},\overline{\coeft})$ returns true), then the guess procedure guarantees to find a solution  $\overline{\coeff},\overline{\coeft}$ (if it exists) when the parameter $M$ is large enough.
\vspace{0.5em}

\begin{example}\label{ex:guess}
    Continued with Example~\ref{ex:template}, suppose $M=2$, we enumerate $\overline{\coeff}$ from $\{\frac12,1,2\}$, and $\overline{\coeft}$ from $\{1,\frac12,\frac14\}$. We try every possible combination, and we find that  $\mathsf{CheckCond}(2,1)$ returns true. Thus, we return $(2,1)$ as the result.
    In Section~\ref{sec:check}, we will show how to conclude that $\mathsf{CheckCond}(2,1)$ is true.\qed
\end{example}

\vspace{-1ex}
\subsection{The Check Procedure $\mathsf{CheckCond}(\overline{\coeff},\overline{\coeft})$}\label{sec:check}

The check procedure takes as input the concrete values $\overline{\coeff},\overline{\coeft}$ for the unknown coefficients in the template, and outputs whether they are
valid.
It is the most involved part in our algorithm due to
the difficulty 
to
tackle the validity of the constraint~(\ref{eq:constraint}) that involves the composition of polynomials, exponentiation and logarithms.
The existence of a sound and complete decision procedure for such validity is extremely difficult and is a long-standing open problem~\cite{DBLP:conf/issac/AchatzMW08,Wilkie1997}.

To circumvent this difficulty,
the check procedure first strengthens the original constraint~(\ref{eq:constraint}) into a canonical constraint with a specific form, so that a decision algorithm that is sound and complete up to any additive error applies. Below we fix a PRR with procedure $p$ in the canonical form (\ref{eq:canonical-prr}).
We also discuss possible extensions for the check procedure in Remark~\ref{rem:extension}.

\smallskip\noindent\textbf{{The Canonical Constraint.}} We first present the canonical constraint $Q(\alpha,n)$ and how to decide the canonical constraint. The
constraint is given by (where $\forall^\infty\alpha$ means ``for all sufficiently large $\alpha$'' or formally $\exists \alpha_0. \forall \alpha\ge \alpha_0$)
\begin{small}
\begin{talign}\label{eq:canonical}
Q(\alpha,n):=\forall^\infty \alpha. \forall n\ge c_p. \left[\sum_{i=1}^{k}{\gamma_i\cdot \exp(f_i(\alpha)+g_i(n))}\le 1\right]
\end{talign}
\end{small}
subject to:
\begin{compactitem}
    \item[(C1)] For each $1\le i\le k$, we have $\gamma_i>0$ is a positive constant, $f_i(\alpha)$ is a pseudo-polynomial in $\alpha$, and $g_i(n)$ is a pseudo-polynomial in $n$. 
    \item[(C2)] For each $1\le i\le k$, the exponents for $n$ and $\ln n$ in $g_i(n)$ are non-negative. 
\end{compactitem}
We use $Q_L(\alpha,n)$ to represent the summation term
$\sum_{i=1}^{k}{\gamma_i\cdot \exp(f_i(\alpha)+g_i(n))}$
in (\ref{eq:canonical}).
Below we show that this can be checked by the algorithm {\sl Decide} up to any additive error. We present an overview of this algorithm. We also present its pseudo-code in Algorithm~\ref{alg:check-maintext}. The details are relegated into Appendix~\ref{appendix:check}.

The algorithm~{\sl Decide} requires an external function ${\sf NegativeLB}(P(n))$ that takes on input a pseudo-polynomial $P(n)$ and outputs an integer $T_n^*$ such that $P(n)\le 0$ for every $n\ge T_n^*$, or output $+\infty$ for the absence of $T_n^*$. The idea of this function is to apply the monotonicity of pseudo-polynomials. With the function ${\sf NegativeLB}(P(n))$, the algorithm~{\sl Decide} consists of two steps as follows.

\emph{First}, we can change the bound of $n$ from $[c_p,\infty)$ into $[c_p,T_n]$, where $T_n$ is a constant, without affecting the soundness and completeness. This is achieved by the observation that either: (i) we can conclude $Q(\alpha,n)$ does not hold, or (ii) there is an integer $T_n$ such that $Q_L(\alpha,n)$ is non-increasing when $n\ge T_n$. Hence, it suffices only to consider $c_p\le n\le T_n$. Below we show how to compute $T_n$ by case analysis of the limit $M_i$ of $g_i(n)$ as $n\to \infty$, for each $1\le i\le k$.
\begin{compactitem}
    \item If $M_i\!=\!+\infty$, then $\exp(g_i(n)+f_i(\alpha))$ could be arbitrarily large when $n\to \infty$.  As a result, we can conclude that $Q(\alpha,n)$ does not hold.
    \item Otherwise, by (C2), either $g_i(n)$ is a constant function, or $M_i\!=\!-\infty$. In both cases, $g_i(n)$ is non-increasing for every sufficiently large $n$. More precisely, there exists $L_i$ such that $g_i'(n)\le 0$ for every $n\ge L_i$, where $g_i'(n)$ is the derivative of $g_i(n)$. Moreover, we can invoke ${\sf NegativeLB}(g_i'(n))$ to get $L_i$.
\end{compactitem}
Finally, we set $T_n$ as the maximum of $L_i$'s and $c_p$.

\emph{Second}, for every integer $c_p\le \overline{n}\le T_n$, we substitute $n$ with $\overline{n}$ to eliminate $n$ in $Q(\alpha,n)$. Then, each exponent $f_i(\alpha)+g_i(\overline{n})$ becomes a pseudo-polynomial solely over $\alpha$. Since we only concern sufficiently large $\alpha$, we can compute the limit $R_{\overline{n}}$ for $Q_L(\alpha,\overline{n})$ as $\alpha\to \infty$. We decide based on the limit $R_{\overline{n}}$ as follows.
\begin{compactitem}
    \item If $R_{\overline{n}}<1$ for every $c_p\!\le\! \overline{n}\!\le\! L$, we conclude that $Q(\alpha,n)$ holds.
    \item If $R_{\overline{n}}\ge 1$ for some $c_p\!\le\! \overline{n}\!\le\! L$, we conclude that $Q(\alpha,n)$ does not hold to ensure soundness.
    
\end{compactitem}

\vspace{-2em}
\begin{algorithm}[htbp]\label{alg:check-maintext}
\begin{scriptsize}
\SetInd{0.3em}{0.6em}
\caption{\small{The Decision procedure for canonical constraints}}
    \label{alg:memoization}
    \SetKwFunction{FSearch}{{\sl Decide}}
    \SetKwProg{Fn}{Procedure}{:}{}
    \SetKwInOut{Input}{Input}
    \SetKwInOut{Output}{Output}
    \Input{A canonical constraint $Q(\alpha,n)$ in the form of ~(\ref{eq:canonical})}
    \Output{Decide whether $Q(\alpha,n)$ holds.}
    \Fn{\FSearch{$Q(\alpha,n)$}}{
         $T_n:=c_p$\tcp*[r]{$\triangleleft$ The first step}
        \For{$i:=1,2,\ldots ,k$}{
             $M_i:=$ The limit of $g_i(n)$ as $n\to \infty$. \\
            \If{$M_i = +\infty$}{\textbf{Return} False}
            \Else{
             $g_i'(n):=$ the derivative of $g_i(n)$\\
            $T_n := \max\{T_n, {\sf NegativeLB}(g_i'(n))\}$
            }
        }
         \For(\tcp*[f]{$\triangleleft$ The second step}){$\overline{n}:=c_p,\ldots ,T_n$}{
         $R:=0$\\
         \For{$i:=1,2,\ldots, k$}{
            $\Delta:=$ the limit of $f_i(\alpha)+g_i(\overline{n})$ as $\alpha\to \infty$.\\
             \If{$\Delta=+\infty$}{\textbf{Return} False}
             \Else{$R:=R+\gamma_i\cdot \exp(\Delta)$}
             \textbf{if}\ {$R\ge 1$}\ \textbf{then}\ {\textbf{Return} False}
         }
        }
        \textbf{Return}\ {True}
    }
\end{scriptsize}
\end{algorithm}
\vspace{-2em}

Algorithm {\sl Decide} is sound, and complete up to any additive error, as is illustrated by the following theorem. The proof is conducted directly via the definition of the limit and is relegated to Appendix~\ref{appendix:check}.
\begin{theorem}
Algorithm {\sl Decide} has the following properties:
\begin{compactitem}
       \item (Completeness) If $Q(\alpha,n)$ does not hold for infinitely many $\alpha$ and some $n\ge c_p$, then the algorithm returns false.
    \item (Soundness) For every $\varepsilon>0$, we have that if $Q_L(\alpha,n)\le 1-\varepsilon$ for all sufficiently large $\alpha$ and all $n\ge c_p$, then the algorithm returns true.
\end{compactitem}
\end{theorem}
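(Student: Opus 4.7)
My approach is to analyze Algorithm {\sl Decide} step by step, proving the two directions separately but relying on a common structural observation: the first step guarantees that, whenever it does not abort, $Q_L(\alpha,n)$ is eventually non-increasing in $n$ past $T_n$, so the full constraint is completely determined by its values on the finite window $[c_p, T_n]$ together with its asymptotics in $\alpha$.

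For \emph{completeness}, suppose $Q(\alpha,n)$ fails for infinitely many $\alpha$ and some fixed $n_0 \ge c_p$. If some limit $M_i$ equals $+\infty$, {\sl Decide} returns False during the first step and we are done. Otherwise, by condition (C2), each $g_i$ is eventually non-increasing, so ${\sf NegativeLB}(g_i'(n))$ produces some $L_i$ certifying $g_i'(n) \le 0$ on $[L_i,\infty)$; thus $Q_L(\alpha,n) \le Q_L(\alpha,T_n)$ for all $n \ge T_n$. Set $\bar n := n_0$ if $n_0 \le T_n$ and $\bar n := T_n$ otherwise; in either case $\bar n \in [c_p, T_n]$ and $Q_L(\alpha, \bar n) > 1$ for infinitely many $\alpha$. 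Since each summand $\gamma_i \exp(f_i(\alpha) + g_i(\bar n))$ is eventually monotone in $\alpha$ (its exponent being a pseudo-polynomial in the single variable $\alpha$), its limit $\exp(\Delta_i)$ exists in $[0,+\infty]$. Summing, the limit $R_{\bar n}$ of $Q_L(\alpha,\bar n)$ exists in $[0,+\infty]$, and the existence of infinitely many $\alpha$ with $Q_L(\alpha,\bar n) > 1$ forces $R_{\bar n} \ge 1$. Hence the inner test in Algorithm~\ref{alg:check-maintext} triggers and {\sl Decide} returns False.

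For \emph{soundness}, fix $\varepsilon>0$ and assume $Q_L(\alpha,n) \le 1-\varepsilon$ for all sufficiently large $\alpha$ and every $n \ge c_p$. No $M_i$ can equal $+\infty$: if some $g_i(n)\to +\infty$, then for a fixed large $\alpha$ the single summand $\gamma_i \exp(g_i(n)+f_i(\alpha))$ would blow up as $n\to\infty$, contradicting the uniform bound since $f_i(\alpha)$ is finite and the other summands are nonnegative. Hence the first step terminates without returning False. In the second step, for each $\bar n \in [c_p,T_n]$ and each $i$, the exponent $f_i(\alpha)+g_i(\bar n)$ is a pseudo-polynomial in $\alpha$ whose limit $\Delta_i$ exists in $\{-\infty\}\cup\mathbb{R}\cup\{+\infty\}$; the uniform bound excludes $\Delta_i = +\infty$, so $R_{\bar n}$ is finite. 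Passing to the limit $\alpha\to\infty$ in $Q_L(\alpha,\bar n) \le 1-\varepsilon$ yields $R_{\bar n} \le 1-\varepsilon < 1$; the inner test never triggers, and {\sl Decide} returns True.

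The main obstacle is justifying the monotonicity reduction underlying the first step, and in particular the correctness of ${\sf NegativeLB}$. One must show that for any pseudo-polynomial $P(n)$ in $\{n, \ln n\}$ whose limit at $\infty$ is not $+\infty$, one can compute a threshold $T$ beyond which $P(n) \le 0$; this rests on the standard fact that the leading monomial of such a pseudo-polynomial eventually dominates all lower-order ones, with the sign of $P$ at infinity readable syntactically from the leading coefficient, and that an explicit crossover bound can be extracted from the ratios of coefficients. Once this structural lemma is in hand, the termwise monotonicity of $\exp$ immediately gives that $\sum_i \gamma_i \exp(f_i(\alpha)+g_i(n))$ is non-increasing in $n$ on $[T_n,\infty)$, and the remaining work reduces to the routine limit computations on pseudo-polynomials in one variable described above.
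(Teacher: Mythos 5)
Your proof is correct and follows essentially the same route as the paper's: the paper's own (very terse) argument likewise reduces the constraint to the finite window $[c_p,T_n]$ via the eventual monotonicity in $n$ certified by ${\sf NegativeLB}$, and then concludes both directions from the definition of the limit $R_{\overline{n}}$ of $Q_L(\alpha,\overline{n})$ as $\alpha\to\infty$. Your write-up merely makes explicit the details (the $M_i=+\infty$ and $\Delta_i=+\infty$ cases, nonnegativity of the summands for the partial-sum test, and the leading-monomial justification of ${\sf NegativeLB}$) that the paper delegates to the algorithm's description in the appendix.
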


\smallskip\noindent\textbf{The Strengthening Procedure.}
Then we show how to strengthen the constraint (\ref{eq:constraint}) into the canonical constraint (\ref{eq:canonical}), so that Algorithm {\sl Decide} applies. We rephrase (\ref{eq:constraint}) as
\begin{talign}\label{eq:rephrase}
    \expv\left[\exp(t(\alpha,n)\cdot \left(\coststack(n)+\sum_{i=1}^r f(\alpha,\expr_i(n))-f(\alpha,n)\right)\right]\le 1
\end{talign}
 and consider two functions $\overline{f},\overline{t}$ obtained by substituting the concrete values $\overline{\coeff},\overline{\coeft}$ for unknown coefficients into the template (\ref{eq:templatef}) and (\ref{eq:templatet}).
We observe that the joint-distribution of the random quantities $S(n),r\in \{1,2\}$ and $\expr_1(n),\ldots,\expr_r(n)$ in the canonical form~(\ref{eq:canonical-prr}) over PRRs can be described by several probabilistic branches  $\{c_1:B_1,\dots,c_k:B_k\}$, which corresponds to the probabilistic choice commands in the PRR. Each probabilistic branch $B_i$ has a constant probability $c_i$, a deterministic pre-processing time $S_i(n)$, a fixed number of subprocedure calls $r_i$, and a probability distribution for the variable $v$. The strengthening first handles each probabilistic branch, and then
combines the strengthening results of every branch into a single canonical constraint.

The strengthening of each branch is an application of
a set of rewriting rules. 
Intuitively, each rewriting step
over-approximates and simplifies the expectation term in the LHS of~(\ref{eq:rephrase}).
Through multiple steps of rewriting, we eventually obtain the final canonical constraint.
Below we present the details of the strengthening for a single probabilistic branch with the single recursion case. The divide-and-conquer case follows a similar treatment, see Appendix~\ref{appendix:check} for details.

Consider the single recursion case $r=1$ where a probabilistic branch has 
deterministic pre-processing time $S(n)$, distribution $\mathsf{dist}$ for the variable $v$ and passed size {\color{blue} $H(v,n)$} for the recursive call.
We have a case analysis on the distribution $\mathsf{dist}$ as follows.

\smallskip \noindent --- \emph{Case I}: $\mathsf{dist}$ is a FSDPD $\mathtt{discrete}\{c'_1:\pp_1,\dots,c'_k:\pp_k\}$, where $v$ observes as $\pp_i$ with probability $c'_i$. Then the expectation in~(\ref{eq:rephrase}) is exactly:
\begin{talign*}
    \sum_{i=1}^k c'_i\cdot \exp\left(t(\alpha,n)\cdot  S(n)+t(\alpha,n) \cdot f(\alpha,H(\pp_i,n)) - t(\alpha,n)\cdot f(\alpha,n)\right)
\end{talign*}
Thus it suffices to over-approximate the exponent $X_i(\alpha,n):=t(\alpha,n)\cdot  \coststack(n)+t(\alpha,n) \cdot f(\alpha,H(\pp_i,n)) - t(\alpha,n)\cdot f(n)$ into the form subject to (C1)--(C2). {\color{blue} For this purpose, our strengthening repeatedly applies the following rewriting rules (R1)--(R4) for which $0<a<1$ and $b>0$:}  
\begin{footnotesize}
\begin{talign*}\vspace{-1em}
    &\text{(R1) }f(\alpha,H(\pp_i,n))\le f(\alpha,n)\quad \\ &\text{(R2) }\ln (an-b)\le \ln n+\ln a\quad \ln (an+b)\le \ln n+\ln (\min\{1,a+\frac{b}{c_p}\})   \\
    &\text{(R3) }0\le n^{-1}\le c_p^{-1}\quad 0\le \ln^{-1}n\le \ln^{-1} c_p
    \quad \text{(R4) } \lfloor \frac nb\rfloor \le \frac nb\quad \lceil \frac nb\rceil \le \frac nb+\frac{b-1}{b}
\end{talign*}
\end{footnotesize}
\noindent (R1) follows from the well-formedness $0\le H(\expr_i,n)\le n$ and the monotonicity of $f(\alpha,n)$ with respect to $n$. (R2)--(R4) are straightforward. Intuitively, (R1) can be used to cancel the term $f(\alpha,H(\expr_i,n))-f(\alpha,n)$, (R2) simplifies the sub-expression in $\ln$, (R3) is used to remove floors and ceils, and (R4)
to remove $n^{-c}$ and $\ln^{-c}n$ to satisfy the restriction (C2) of the canonical constraint.
{\color{blue} To apply these rules, we consider two strategies below.}
\begin{compactitem}
    \item[(S1-D)] Apply (R1) and over-approximate $X_i(\alpha,n)$ as $t(\alpha,n)\cdot \coststack(n)$. Then, we repeatedly apply (R3) to remove terms $n^{-c}$ and $\ln^{-c}n$.
    \item[(S2-D)] Substitute $f$ and $t$  with the concrete functions $\overline{f},\overline{t}$ and expand $H(\pp_i,n)$.
    Then we first apply (R4) to remove all floors and ceils, and repeatedly apply (R2) to replace all occurrences of $\ln(an+b)$ with $\ln n+\ln C$ for some constant $C$. By the previous replacement, the whole term $X_i(\alpha,n)$ will be over-approximated as a pseudo-polynomial over $\alpha$ and $n$. Finally, we eagerly apply (R3) to remove all terms $n^{-c}$ and $\ln^{-c}n$.
\end{compactitem}
{\color{red} Our algorithm first tries to apply (S2-D), if it fails to derive a canonical constraint, then we apply the alternative (S1-D) to the original constraint.}
If both the strategies fails,
we report failure and exit the check procedure.
\begin{example}
    Suppose $v$ observes as $\{0.5:n-1,0.5:n-2\}, S(n):=\ln n, t(\alpha,n):=\frac{\ln \alpha}{\ln n}, f(\alpha,n):=4\cdot \frac{\alpha}{\ln\alpha} \cdot n\cdot \ln n, H(v,n):=v$. We consider applying both strategies to the first term $\pp_1:=n-1$ and $X_1(\alpha,n):=t(\alpha,n)\cdot (\coststack(n)+ f(\alpha,n-1) -f(\alpha,n))$.
    If we apply (S1-D) to $X_1$, it will be approximated as $\exp(\ln \alpha)$.
    If we apply (S2-D) to $X_1$, it will be first over-approximated as $\frac{\ln \alpha}{\ln n}\cdot (\ln n+{4\cdot \frac{\alpha}{\ln\alpha} \cdot v\cdot \ln n}-4\cdot \frac{\alpha}{\ln\alpha}\cdot n\cdot \ln n)$, then we substitute $v=n-1$ and derive the final result $\exp(\ln \alpha-4\cdot \alpha)$. {\color{blue} Hence, both the strategies succeed.} \qed 
\end{example}

\smallskip \noindent --- \emph{Case II}: $\mathsf{dist}$ is $\mathtt{uniform}(n)$ or $\mathtt{muniform}(n)$. Note that $H(v,n)$ is linear with respect to $v$, thus $H(v,n)$ is a bijection over $v$ for every fixed $n$.
Hence, if $v$ observes as  $\mathtt{uniform}(n)$, then  
\begin{small}
\begin{talign}\label{eq:sum}
    \expv[\exp(t(\alpha,n)\cdot f(\alpha,H(v,n))) ]\le \frac1n \sum_{v=0}^{n-1} \exp(t(\alpha,n)\cdot f(\alpha,v))
\end{talign}
\end{small}
If $v$ observes as $\mathtt{muniform}(n)$, a similar inequality holds by replacing $\frac1n$ with $\frac2n$.
Since $f(\alpha,v)$ is a non-decreasing function with respect to $v$, we further over-approximate the summation in (\ref{eq:sum}) by the integral $\int_{0}^{n}\exp(t(\alpha,n)\cdot f(\alpha,v))\text{d}v$.

\begin{example}\label{ex:int}
    Continue with Example~\ref{ex:guess}, we need to check
    
    $\overline{t}(\alpha,n)=\frac{\ln \alpha}{n}$ and $\overline{f}(\alpha,n)=\frac{2\cdot \alpha
    }{\ln \alpha}\cdot n$. By the inequality~(\ref{eq:sum}), we expand the constraint~(\ref{eq:rephrase}) into $\frac{2}{n}\cdot \exp(\ln \alpha-2\cdot \alpha)\cdot \sum_{v=0}^{n-1}\exp(\frac{2\cdot \alpha\cdot i}{n})$. By integration, it is further over-approximated as $\frac{2}{n}\cdot \exp(\ln \alpha-2\cdot \alpha)\cdot \int_{0}^{n}\exp(\frac{2\cdot \alpha\cdot v}{n})\text{d}v$.\qed
\end{example}
Note that we still need to resolve the integration of an exponential function whose exponent is a pseudo-monomial over $\alpha,n,v$. Below we denote by $d_v$ the degree on the variable $v$ and by $\ell_v$ the degree of $\ln v$. We first  list the situations where the integral can be computed exactly.
\begin{compactitem}
\item If $(d_v,\ell_v)=(1,0)$, then the exponent could be expressed as $W(\alpha,n)\cdot v$,where $W(\alpha,n)$ is a pseudo-monomial over $\alpha$ and $n$. We can compute the integral as $\frac{\exp(n\cdot W(\alpha,n))-1}{W(\alpha,n)}$ and over-approximate it as
 $\frac{\exp(n\cdot W(\alpha,n))}{W(\alpha,n)}$ by removing $-1$ in the numerator.
\item If $(d_v,\ell_v)=(0,1)$, then the exponent is of the form $W(\alpha,n)\cdot \ln v$. We follow a similar procedure with the case above and obtain the over-approximation $\frac{n \cdot \exp(\ln n\cdot W(\alpha,n))}{W(\alpha,n)}$.
\item If $(d_v,\ell_v) = (0,0)$, then the result is trivially $n\cdot \exp(W(\alpha,n))$.
\end{compactitem}
Then we handle the situation where the exact computation of the integral is infeasible.
In this situation, the strengthening further over-approximates the integral into simpler forms by
{first replacing $\ln v$ with $\ln n$, and then replacing $v$ with $n$ to reduce the degrees $\ell_v$ and $d_v$.}
Eventually, the exponent in the integral bows down to
{one of the three situations (where the integral can be computed exactly) above, and the strengthening returns the exact value of the integral.}

\begin{example}\label{ex:approx-int}
    Continue with Example~\ref{ex:int}. We express the exponent as $\frac{2\cdot \alpha}{n}\cdot v$. Thus, we can plug $\frac{2\cdot \alpha}{n}$ into $W(\alpha,n)$ and obtain the integration result $\frac{\exp(2\cdot \alpha)}{2\cdot \alpha/n}$. Furthermore, we can simplify the formula in Example~\ref{ex:int} as $\frac{\exp(\ln \alpha)}{\alpha}$.\qed
\end{example}

In the end, we move the term $\frac1n$ (or $\frac2n$) that comes from the \texttt{uniform} (or \texttt{muniform}) distribution and the coefficient term $W(\alpha,n)$ into the exponent.
If we move these terms directly, it may produce $\ln \ln n$ and $\ln \ln \alpha$ that comes from taking the logarithm of $\ln n$ and $\ln \alpha$.
Hence, we first apply $\ln c_p\le \ln n\le n$ and $1\le \ln \alpha\le \alpha$ to remove all terms $\ln n$ and $\ln \alpha$ outside the exponent (e.g., $\frac{\ln \alpha}{\ln n}$ is over-approximated as $\frac{\alpha}{\ln c_p}$). After the over-approximation, the terms outside the exponentiation form a polynomial over $\alpha$ and $n$, we can trivially move these terms into the exponent by taking the logarithm. Finally, we apply (R4) in Case I to remove $n^{-c}$ and $\ln^{-c} n$. 
If we fail to obtain the canonical constraint, the strengthening reports failure.

\begin{example}\label{ex:final}
    Continue with Example~\ref{ex:approx-int},
    we move the term $\alpha$ into the exponentiation and simplify the over-approximation result as ${\exp(\ln \alpha-\ln \alpha)}=1$. As a result, we over-approximate the LHS of (\ref{eq:rephrase}) as $1$ and we conclude that $\mathsf{CheckCond}(2,1)$ holds. \qed
\end{example}

The details of the divide-and-conquer case are similar and relegated to Appendix~\ref{appendix:check}. Furthermore, we present how to combine the strengthening results for different branches into a single canonical constraint. Suppose for every probabilistic branch $B_i$, we have successfully obtained the canonical constraint $Q_{L,i}(\alpha,n)\le 1$ as the strengthening of the original constraint~(\ref{eq:rephrase}). Then, the canonical constraint for the whole distribution is $\sum_{i=1}^{k}c_i\cdot Q_{L,i}(\alpha,n)\le 1$.
Intuitively, there is probability $c_i$ for the branch $B_i$, thus the combination follows by simply expanding the expectation term.

A natural question is to ask whether our algorithm can always succeed to obtain the canonical constraint. We have the proposition as follows.
\begin{proposition}\label{prop:failure}
    If the template for $t$ has a lower magnitude than $S(n)^{-1}$ for every branch, then the rewriting always succeeds.
\end{proposition}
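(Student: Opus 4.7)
The plan is to argue branch by branch that the strengthening procedure produces a canonical constraint, and then observe that canonical constraints are closed under positive linear combinations (so combining branches is automatic: if each branch $B_i$ gives $Q_{L,i}(\alpha,n)\le 1$, then $\sum_i c_i \cdot Q_{L,i}(\alpha,n)\le 1$ is also canonical). I would fix a single probabilistic branch with pre-processing $S(n)$, distribution $\mathsf{dist}$, and passed size $H(v,n)$, and then split on the form of $\mathsf{dist}$.

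For Case~I (discrete distribution), I would invoke strategy (S1-D). Rule~(R1), together with the well-formedness $H(\pp_i,n)\le n$ and monotonicity of $f$ in $n$, cancels the difference $f(\alpha,H(\pp_i,n))-f(\alpha,n)\le 0$, so the exponent is over-approximated by $t(\alpha,n)\cdot S(n)$. Under the hypothesis, the product $t(\alpha,n)\cdot S(n)$, expanded as a sum of pseudo-monomials in $(\alpha,\ln\alpha,n,\ln n)$, has in each summand either a strictly negative exponent on $n$, or a zero $n$-exponent together with a strictly negative exponent on $\ln n$. Rule~(R3) can therefore eliminate every such negative power by its value at $n=c_p$, leaving a pseudo-polynomial purely in $\alpha,\ln\alpha$. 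This expression is a valid $f_i(\alpha)$ in the canonical form~(\ref{eq:canonical}) paired with $g_i(n)\equiv 0$, so conditions (C1)--(C2) hold trivially.

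For Case~II ($\mathtt{uniform}$ or $\mathtt{muniform}$), the expectation over $v$ is over-approximated by the integral $\int_{0}^{n}\exp(t(\alpha,n)\cdot f(\alpha,v))\,\mathrm{d}v$, then combined with the outer factor $\exp(t(\alpha,n)\cdot(S(n)-f(\alpha,n)))$. When the closed-form integration applies, the dominant exponential $\exp(n\cdot W(\alpha,n))$ in the integral composes with $\exp(-t(\alpha,n)\cdot f(\alpha,n))$; because the coefficient $W(\alpha,n)$ is exactly the $v$-derivative-like factor produced by the template for $f$, the exponent $n\cdot W(\alpha,n)$ coincides with $t(\alpha,n)\cdot f(\alpha,n)$ and the two contributions cancel, leaving only $t(\alpha,n)\cdot S(n)$ inside the exponent (again controlled by the hypothesis), plus a polynomial coefficient outside. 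The coefficient is then moved into the exponent by taking logarithms, using $\ln n\le n$ and $\ln\alpha\le\alpha$ to forbid the appearance of $\ln\ln$ terms, and rules~(R3)--(R4) eliminate residual negative powers. When the closed form is unavailable, the iterative over-approximations $\ln v:=\ln n$ and $v:=n$ reduce the integrand to one of the three closed-form sub-cases, and the same cancellation takes effect. The divide-and-conquer variant follows the same pattern, only with two recursive contributions replacing the single $f(\alpha,H(v,n))$.

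The main obstacle I expect is bookkeeping for the Case~II cancellation: one must track carefully how the leading coefficient $W(\alpha,n)$ of the closed-form integral combines with $-t(\alpha,n)\cdot f(\alpha,n)$, and verify that after the reductions $\ln v:=\ln n$ and $v:=n$ the same cancellation still takes effect and no residual term with strictly positive exponent on $n$ or $\ln n$ survives (which would violate (C2)). The hypothesis on $t$ is precisely what guarantees that the ``leftover'' term $t\cdot S$ never re-introduces such a positive power, so a routine case analysis on $(u_f,v_f)$ and on the recursion arity completes the argument.
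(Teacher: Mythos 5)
Your proposal is correct and follows essentially the same route as the paper's proof: strategy (S1-D) reduces the discrete case to the exponent $t(\alpha,n)\cdot S(n)$, whose negative powers of $n$ and $\ln n$ are removed by the rewriting rules thanks to the magnitude hypothesis, and in the uniform/muniform case the $\exp(t\cdot f(\alpha,n))$ term from the integral cancels against $\exp(-t(\alpha,n)\cdot f(\alpha,n))$, with the remaining coefficients moved into the exponent and the divide-and-conquer case handled identically. Your added remarks on combining branches and on the $W(\alpha,n)$ bookkeeping are consistent with the paper's argument and introduce no gap.
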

\begin{proof}
We first consider the single recursion case.
    When $\mathsf{dist}$ is FSDPD, we can apply (S1-D) to over-approximate the exponent as $t(\alpha,n)\cdot S(n)$. Since $t(\alpha,n)$ has a lower magnitude than $S(n)^{-1}$, by further applying (R3) to eliminate $n^{-c}$ and $\ln^{-c}n$, we obtain the canonical constraint.
    If $\mathsf{dist}$ is $\mathtt{uniform}(n)$ or $\mathtt{muniform}(n)$
    , we observe that the over-approximation result for the integral is either $\frac{\exp(f(\alpha,n))}{ f(\alpha,n)\cdot t(\alpha,n)}$ (when $d_v>0$) or $\frac{\ln n\cdot \exp(f(\alpha,n))}{f(\alpha,n)\cdot t(\alpha,n)}$ (when $d_v=0$). Thus, we can cancel the term $f(\alpha,n)$ in the exponent and obtain the canonical constraint by the subsequent steps.
    The proof is the same for the divide-and-conquer case.
\qed
\end{proof}
By Proposition~\ref{prop:failure}, we restrict
$u_t,v_t\le 0$ in the template to ensure our algorithm never fails.

{\color{green}
\begin{remark}\label{rem:extension}
Our algorithm can be extended to support piecewise uniform distributions (e.g.  each of $0,\dots, n/2$ with probability $\frac{2}{3n}$ and each of $n/2+1,\dots, n-1$ with probability $\frac{4}{3n}$) by handling each piece separately.
\end{remark}}

\vspace{-2ex}
\section{Experimental Results}
\label{sec:evaluation}

In this section, 
we evaluated our algorithm over classical randomized algorithms such as {QuickSort} (Example~\ref{ex:quicksort}), {QuickSelect} (Example~\ref{ex:quickselect}), {DiameterComputation}~\cite[Chapter 9]{DBLP:books/cu/MotwaniR95}, {RandomizedSearch}~\cite[Chapter~9]{McConnell},
{ChannelConflictResolution}~\cite[Chapter~13]{DBLP:books/daglib/0015106},
examples such as Rdwalk and Rdadder in the literature~\cite{SriramCAV}, and four manually-crafted examples (MC1 -- MC4).
For each example, we manually compute its expected running time
for the prunning.
A detailed description of all these examples is provided in Appendix~\ref{app:examples}.

\begin{wrapfigure}[12]{r}{0.4\textwidth}
\begin{center}
\vspace{-2.4em}
\includegraphics[width=\linewidth]{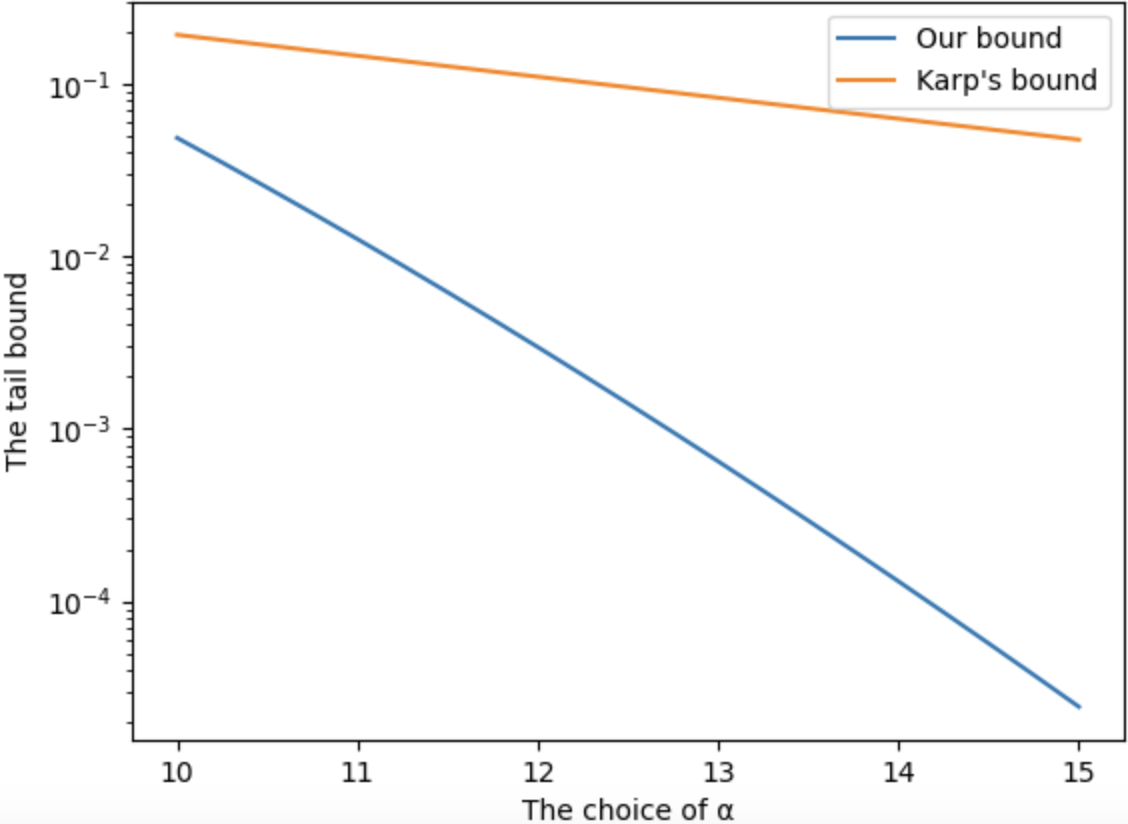}
\caption{Plot for QuickSelect}
\label{fig:plot-maintext}
\end{center}
\end{wrapfigure}

We implemented our algorithm in \texttt{C++}. We choose $B=2$ (as the bounded range for the template), $M=4$ (in the guess procedure), $Q=8$ (for the number of parts in the integral),
and prune the search space by Theorem~\ref{thm:template}.
All results were obtained on an Ubuntu 18.04 machine with an 8-Core Intel i7-7900x Processor (4.30 GHz) and 40 GB of RAM.

We report the tail bound derived by our algorithm in Table~\ref{table:exp}, where ``Benchmark'' lists the benchmarks, ``$\alpha\cdot \kappa(n^*)$'' lists the time limit of interest, ``Our bound'' lists the tail bound by our approach, ``Time(s)'' lists the runtime (in seconds) of our approach, and ``Karp's bound'' lists the bounds by Karp's method. From the table, our algorithm constantly derives asymtotically tighter tail bounds than Karp's method. Moreover, all these bounds are obtained in a few seconds, demonstrating the efficiency of our algorithm. Furthermore, our algorithm obtains bounds with tighter magnitude than our completeness theorem (Theorem~\ref{thm:comp}) in 9 benchmarks, and 
bounds with the same magnitude as the others, see Appendix~\ref{app:evaluation} for details.

For an intuitive comparison, we also report the concrete bounds and their plots of our method and Karp's method. We choose three concrete choices of $\alpha$ and $n^*$ and plot the concrete bounds over $10\le \alpha\le 15,n^*=17$. For concrete bounds, we also report the ratio $\frac{\text{Karp's Bound}}{\text{Our Bound}}$ to show the strength of our method. Due to space limitations, we only report the results for QuickSelect (Example~\ref{ex:quickselect}) in Table~\ref{fig:concrete-maintext} and Figure~\ref{fig:plot-maintext}, while the rest are relegated to Appendix~\ref{app:evaluation}.

\begin{table}[htbp]
\renewcommand{\baselinestretch}{1.3}
\vspace{-1ex}
\begin{center}
\begin{scriptsize}
    \caption{Experimental Result}
    \label{table:exp}
    \begin{tabular}{|c|c|c|c|c|}
    \hline
        Benchmark & $\alpha\cdot \kappa(n^*)$ in~(\ref{eq:problem}) & Our bound & Time(s) & Karp's bound\\
    \hline
        QuickSelect & $\alpha\cdot n^*$ & $ \exp(2\cdot \alpha-\alpha\cdot \ln\alpha)$ & $0.03$ & $\exp(1.15-0.28\cdot \alpha)$  \\
    \hline
        QuickSort &$ \alpha\cdot n^*\cdot \ln n^*$ & $\exp((4-\alpha)\cdot \ln n^*)$ & $0.02$ & $\exp(0.5-0.5\cdot \alpha)$ \\
    \hline
        L1Diameter & $\alpha\cdot n^*$  & $\exp(\alpha-\alpha\cdot \ln\alpha)$ & $0.03$ & $\exp(1.39-0.69\cdot
        \alpha)$ \\
    \hline
        L2Diameter & $\alpha\cdot n^*\cdot \ln n^*$  & $\exp(\alpha-\alpha\cdot \ln\alpha)$ & $0.03$ & $\exp(1.39-0.69\cdot
        \alpha)$  \\
    \hline
        RandSearch & $\alpha\cdot \ln n^*$ & $\exp((2\cdot \alpha-\alpha\cdot \ln\alpha)\cdot \ln n^*)$ & $0.03$ & $\exp(-0.29\cdot \alpha\cdot \ln n^* )$  \\
    \hline
        Channel & $\alpha\cdot n^*$ & $\exp((8-\alpha)\cdot n^*)$ & $0.05$ & $\exp(1-0.37\cdot \alpha)$  \\
    \hline
        Rdwalk & $\alpha\cdot n^*$  & $\exp((0.5-\alpha)\cdot n^*)$ & $0.05$ & $\exp(0.60-0.41\cdot\alpha)$ \\
    \hline
        Rdadder & $\alpha\cdot n^*$ & $\exp((4-0.5\cdot \alpha)\cdot n^*)$ & $0.04$ & Not applicable \\
    \hline
        MC1 & $\alpha\cdot \ln n^*$  & $\exp((\alpha-\alpha\cdot \ln \alpha)\cdot \ln n^*)$ & $0.03$ & $\exp(-0.69\cdot
        \alpha\cdot \ln n^*)$ \\
\hline
        MC2 &$\alpha\cdot \ln^2 n^*$ & $\exp((\alpha - \alpha\cdot \ln \alpha)\cdot \ln n^*)$ & $0.03$& $\exp(-0.69\cdot \alpha\cdot \ln n^*)$  \\
    \hline
        MC3 & $\alpha\cdot n^*\cdot \ln^2 n^*$  & $\exp(\alpha-\alpha\cdot \ln\alpha)$ & $0.03$ & $\exp(1.15-0.28\cdot
        \alpha)$ \\
    \hline
        MC4 &$\alpha\cdot n^*$ & $\exp(2\cdot \alpha - \alpha\cdot \ln \alpha)$ & $0.04$ & Not applicable  \\
    \hline
\end{tabular}
\end{scriptsize}
\end{center}
\vspace{-2ex}
\end{table}

\vspace{-4em}

\begin{table}
\begin{center}
    
\caption{Concrete Bounds for QuickSelect}
\label{fig:concrete-maintext}
\begin{scriptsize}
    \begin{tabular}{|c|c|c|c|c|}
    \hline
        Concrete choice & Our bound & Karp's Bound & Ratio \\ \hline
        $\alpha=10;n^*=13$ & $0.0485$ & $0.192$ & $3.96$ \\ \hline
        $\alpha=11;n^*=15$ & $0.0126$ & $0.145$ & $11.6$ \\ \hline
        $\alpha=12;n^*=17$ & $0.00297$ & $0.110$ & $36.9$ \\ \hline
    \end{tabular}
\end{scriptsize}
\end{center}
\end{table}

\vspace{-4em}
\section{Related Work}
\label{sec:related}

\noindent\textbf{\textit{{Karp's Cookbook}}}
Our approach is orthogonal to Karp's cookbook method~\cite{DBLP:journals/jacm/Karp94} since we base our approach on Markov's inequality, and the core of Karp's method is a dedicated proof for establishing that an intricate tail bound function is a prefixed point of the higher order operator derived from the given PRR. Furthermore, our automated approach can derive asymptotically tighter tail bounds than Karp's method over all 12 PRRs in our benchmark. Our approach could also handle randomized preprocessing times, which is beyond the reach of Karp's method. Since Karp's proof of prefixed point is ad-hoc, it is non-trivial to extend his method to handle the randomized cost.
Nevertheless, there are PRRs (e.g., Coupon-Collector, see Appendix~\ref{app:notfixedpoint}) that can be handled by Karp's method but not by ours. Thus, our approach provides a novel way to obtain asymptotically tighter tail bounds than Karp's method. 

The recent work~\cite{DBLP:conf/itp/Tassarotti018} extends Karp's method for deriving tail bounds for parallel randomized algorithms. This method derives the same tail bounds as Karp's method over PRRs with a single recursive call (such as QuickSelect) and cannot handle randomized pre-processing time. Compared with this approach, our approach derives tail bounds with tighter magnitude on 11/12(92.67\

\noindent\textbf{\textit{{Custom Analysis.}}} Custom analysis of PRRs~\cite{QuickSelectSOTA,DBLP:journals/jal/McDiarmidH96} has successfully derived tight tail bounds for QuickSelect and QuickSort. Compared with the custom analysis that requires ad-hoc proofs, our approach is automated, has the generality from Markov's inequality, and is capable of deriving bounds identical or very close to the tail bounds from the custom analysis.

\noindent\textbf{\textit{Probabilistic Programs.}} There are also relevant approaches in probabilistic program verification. These approaches are either based on martingale concentration inequalities (for exponentially-decreasing tail bounds)~\cite{SriramCAV,ChatterjeeNZ2017,DBLP:journals/toplas/ChatterjeeFNH18,DBLP:conf/aplas/HuangFC18,DBLP:conf/cav/ChatterjeeGMZ22}, Markov's inequality (for polynomially-decreasing tail bounds)~\cite{DBLP:journals/corr/ChatterjeeF17,DBLP:conf/tacas/KuraUH19,pldi21wang},  fixed-point synthesis~\cite{pldi21ours}, or weakest precondition reasoning~\cite{DBLP:journals/jacm/KaminskiKMO18,poplAmortized}. Compared with these approaches, our approach is dedicated to PRRs (a light-weight representation of recursive probabilistic programs) and involves specific treatment of common recursive patterns (such as randomized pivoting and divide-and-conquer) in randomized algorithms, while these approaches usually do not consider common recursion patterns in randomized algorithms. Below we have detailed technical comparisons with these approaches.

\noindent -- Compared with the approaches based on martingale concentration inequalities~\cite{SriramCAV,DBLP:conf/cav/ChatterjeeGMZ22,ChatterjeeNZ2017,DBLP:journals/toplas/ChatterjeeFNH18,DBLP:conf/aplas/HuangFC18}, our approach has the same root as them, since martingale concentration inequalities are often proved via Markov's inequality. However, those approaches have more accuracy loss since these martingale concentration inequalities usually make further relaxations after applying Markov's inequality. In contrast, our automated approach directly handles the constraint after applying Markov's inequality by having a refined treatment of exponentiation and hence has better accuracy in deriving tail bounds.

\noindent -- Compared with the approaches~\cite{DBLP:journals/corr/ChatterjeeF17,DBLP:conf/tacas/KuraUH19,pldi21wang} that derive polynomially-decreasing tail bounds, 
our approach targets the sharper exponentially-decreasing tail bounds and hence is orthogonal.

\noindent -- Compared with the fixed-point synthesis approach~\cite{pldi21ours}, our approach is orthogonal as it is based on Markov's inequality. Note that the approach~\cite{pldi21ours} can only handle 3/12 (25\

\noindent -- Compared with weakest precondition reasoning~\cite{DBLP:journals/jacm/KaminskiKMO18,poplAmortized} that requires first specifying the bound functions and then verifying the bound functions by proof rules related to fixed-point conditions, mainly with manual efforts, our approach can be automated and is based on Markov's inequality rather than fixed point theorems. Although Karp's method is also based on a particular tail bound function as a prefixed point and can thus be embedded into the weakest precondition framework, Karp's proof of prefixed point requires deep insight, which is beyond existing proof rules. Moreover, even a slight relaxation of the tail bound function into a simpler form in Karp's method no longer keeps the bound function to be a prefixed point (see Appendix~\ref{app:notfixedpoint} for details). Hence, the approach of the weakest precondition may not be suitable for deriving tail bounds.

\section*{Acknowledgement}
\vspace{-0.5em}
We thank Prof. Bican Xia for valuable information on the exponential theory of reals.

 \bibliographystyle{splncs04}
 \bibliography{PL.bib}

 \clearpage
\appendix
\section{An Illustrative Example of a Martingale}\label{app:martingales}

\begin{example}
	Consider an unbiased and discrete random walk, in which we start at a position $X_0$, and at each second walk one step to either left or right with equal probability. Let $X_n$ denote our position after $n$ seconds. It is easy to verify that $\expv\left(X_{n+1} \vert X_0, \ldots, X_n\right) = \frac{1}{2} (X_n - 1) + \frac{1}{2} (X_n + 1) = X_n.$ Hence, this random walk is a martingale. Note that by definition, every martingale is also a supermartingale.
	As another example, consider the classical gambler's ruin: a gambler starts with $Y_0$ dollars of money and bets continuously until he loses all of his money. If the bets are unfair, i.e.~the expected value of his money after a bet is less than its expected value before the bet, then the sequence $\{Y_n\}_{n \in \mathbb{N}_0}$ is a supermartingale. In this case, $Y_n$ is the gambler's total money after $n$ bets. On the other hand, if the bets are fair, then $\{Y_n\}_{n \in \mathbb{N}_0}$ is a martingale. See Appendix~\ref{app:martingales} for another illustrative example of martingales. \qed
\end{example}

\begin{example}[P\'olya's Urn~\cite{mahmoud2008polyas}]
	As a more interesting example, consider an urn that initially contains $R_0$ red and $B_0$ blue marbles ($R_0+B_0>0$). At each step, we take one marble from the urn, chosen uniformly at random, look at its color and then add two marbles of that color to the urn. Let $B_n, R_n$ and $M_n$ respectively be the number of red, blue and all marbles after $n$ steps. Also, let $\beta_n = \frac{B_n}{M_n}$ and $\rho_n = \frac{R_n}{M_n}$ be the proportion of marbles that are blue (resp. red) after $n$ steps. Let $\mathcal{F}_n$ model the observations until the $n$-th step. The process described above leads to the following equations:
	$$
	M_{n+1} = 1 + M_n,
	$$
	$$
	\condexpv{B_{n+1}}{\mathcal{F}_n} = \condexpv{B_{n+1}}{B_1, \ldots, B_n} = \frac{B_n}{M_n} \cdot (B_n + 1) + \frac{R_n}{M_n} \cdot B_n,
	$$
	$$
	\condexpv{R_{n+1}}{\mathcal{F}_n}=\condexpv{R_{n+1}}{B_1, \ldots, B_n} = \frac{R_n}{M_n} \cdot (R_n + 1) + \frac{B_n}{M_n} \cdot R_n.
	$$
	Note that we did not need to care about observing $R_i$'s, $M_i$'s, $\beta_i$'s or $\rho_i$'s, because they can be uniquely computed in terms of $B_i$'s. More generally, an observer can observe only $B_i$'s, or only $R_i$'s, or only $\beta_i$'s or $\rho_i$'s and can then compute the rest using this information. Based on the equations above, we have:
	$$
	\condexpv{\beta_{n+1}}{\mathcal{F}_n} = \frac{B_n}{M_n} \cdot \frac{B_{n}+1}{M_n + 1} + \frac{M_n-B_n}{M_n} \cdot \frac{B_n}{M_n + 1} = \frac{B_n}{M_n} = \beta_n,
	$$
	$$
	\condexpv{\rho_{n+1}}{\mathcal{F}_n} = \frac{R_n}{M_n} \cdot \frac{R_{n}+1}{M_n + 1} + \frac{M_n-R_n}{M_n} \cdot \frac{R_n}{M_n + 1} = \frac{R_n}{M_n} = \rho_n.
	$$
	Hence, both $\{\beta_n\}_{n \in \mathbb{N}_0}$ and $\{\rho_n\}_{n \in \mathbb{N}_0}$ are martingales. Informally, this means that the expected proportion of blue marbles in the next step is exactly equal to their observed proportion in the current step. This might be counter-intuitive. For example, consider a state where 0.99 of the marbles are blue. Then, it is more likely that we will add a blue marble in the next state. However, this is mitigated by the fact that adding a blue marble changes the proportions much less dramatically than adding a red marble.
\end{example}

\section{Examples of Probabilistic Recurrence Relations}\label{app:examples}

\begin{example}[{QuickSelect}]
	\label{ex:quickselect-appendix}
	Consider the problem of finding the $d$-th smallest element in an unordered array of $n$ distinct elements. A classical randomized algorithm for solving this problem is {QuickSelect}~\cite{DBLP:journals/cacm/Hoare61a}. 
It begins by choosing a pivot element $u$ of the array uniformly at random. It then compares all the other $n-1$ elements of the array with $u$ and divides them into two parts: (i)~those that are smaller than $u$ and (ii)~those that are larger. Suppose that there are $d'$ elements in part (i). If $d'<d-1,$ then the algorithm recursively searches for the $(d-d'-1)$-th smallest element of part (ii). If $d'=d-1,$ the algorithm terminates by returning $u$ as the desired answer. Finally, if $d'>d-1,$ the algorithm recursively finds the $d-$th smallest element in part (i).
Note that the classical median selection algorithm is a special case of {QuickSelect}.
While there is an involved deterministic algorithm for solving the same problem, more involved linear-time non-randomized algorithms exist for the same problem,
{QuickSelect}
provides a simple randomized variant matching the best-known $O(n)$ deterministic runtime. Since there is only one recursive procedure in {QuickSelect}, we model the algorithm as the following PRR.
	$$\textstyle
\mathtt{def}\ p(n;2)= \{\mathtt{sample}\ v\leftarrow\mathtt{muniform}(n)\ \mathtt{in}\ \{\kwtick(n);\ \mathtt{invoke}\ p(v);\}\}
$$

Here, $p(n)$ represents the number of comparisons performed by {QuickSelect} over an input of size $n$, and $v$ is the random variable that captures the size of the remaining array that has to be searched recursively. It can be derived that $v = \max\{i, n-i-1\}$ where $i$ is sampled uniformly from $\{0, \ldots, n-1\}.$ \qed

\end{example}

\begin{example}[{QuickSort}]
	Consider the classical problem of sorting an array of $n$ distinct elements.
A well-known randomized algorithm for solving this problem is {QuickSort}.
Similar to {QuickSelect}, {QuickSort} begins by choosing a pivot element $u$ of the array uniformly at random. It then compares all the other $n-1$ elements of the array with $u$ and divides them into two sub-arrays: (i)~those that are smaller than $u$ and (ii)~those that are larger. The algorithm then runs recursively on both sub-arrays.
Compared with its deterministic counterpart, {MergeSort},
{QuickSort} provides a simple randomized variant whose expected runtime matches the best-known deterministic runtime $O(n\log n)$.

We model the algorithm as the following PRR:
$$
\mathtt{def}\ p(n;2)= \{\mathtt{sample}\ v\leftarrow\mathtt{uniform}(n)\ \mathtt{in}\ \{\kwtick(n);\ \mathtt{invoke}\ p(v);p(n-1-v);\}\}
$$
Here, $v$ and $n-1-v$ are random variables that capture the sizes of the two sub-arrays, where $i$ is uniformly sampled from $\{0, \ldots, n-1\}.$
\qed
\end{example}

\begin{example}[{DiameterComputation}]\label{ex:diameter}
Consider the {DiameterComputation} algorithm~\cite[Chapter 9]{DBLP:books/cu/MotwaniR95} to compute the diameter of an input finite set $S$ of three-dimensional points.
Depending on the metric, i.e.~Eucledian or $L_1,$ we obtain two different recurrence relations.
For the Euclidean $L_2$ metric, we have the following PRR:
$$\textstyle
\mathtt{def}\ p(n;2)= \{\mathtt{sample}\ v\leftarrow\mathtt{uniform}(n)\ \mathtt{in}\ \{\kwtick(n\cdot \ln n);\ \mathtt{invoke}\ p(v);\}\}
$$
For the $L_1$ metric, the PRR is as follows:
$$\textstyle
\mathtt{def}\ p(n;2)= \{\mathtt{sample}\ v\leftarrow\mathtt{uniform}(n)\ \mathtt{in}\ \{\kwtick(n);\ \mathtt{invoke}\ p(v);\}\}
$$
Note that here we use deterministic versions for the subroutine {HalfspaceIntersection} as required in {DiameterComputation}, see~\cite{DBLP:journals/tcs/PreparataM79} for the Euclidean case and~\cite[Problem 9.6, Page 276]{DBLP:books/cu/MotwaniR95} for the $L_1$ case.
\qed
\end{example}

\begin{example}[{RandomizedSearch}]\label{ex:randsearch}
Consider Sherwood's {RandomizedSearch} algorithm (cf.~\cite[Chapter~9]{McConnell}).
The algorithm checks whether an integer value $d$ is present within the index range $[i,j]$ ($0\le i\le j$)
in an integer array $ar$ which is sorted in increasing order and is without duplicate entries.
The algorithm outputs either the index of the item or $-1$ if $d$ is not present in
the index range $[i,j]$.
The PRR for this example contains only one recurrence equation*:
$$
\mathtt{def}\ p(n;2)= \{\mathtt{sample}\ v\leftarrow\mathtt{muniform}(n)\ \mathtt{in}\ \{\kwtick(1);\ \mathtt{invoke}\ p(v);\}\}
$$
\qed
\end{example}

\begin{example}[{ChannelConflictResolution}]\label{ex:channel}
We consider two network scenarios in which $n$ clients are trying to get
access to a network channel.
This problem is also called Resource-Contention Resolution~\cite[Chapter~13]{DBLP:books/daglib/0015106}.
In this problem, if more than one client tries to access the channel,
then no client can access it, and if exactly one client requests access to
the channel, then the request is granted.
In the concurrent setting, the clients share one variable, which is the number
of clients which have not yet been granted access.
Also in this scenario, once a client gets an access the client does not
request for access again.
For this problem, we obtain an over-approximating recurrence relation
\begin{equation*}
\mathtt{def}\ p(n;2)= \bigoplus \begin{cases}
\frac{1}{e}: \mathtt{pre}(1);  \mathtt{invoke}\  p(n-1);\\
1-\frac{1}{e}: \mathtt{pre}(1);  \mathtt{invoke}\  p(n);
\end{cases}
\end{equation*}
\end{example}

\begin{example}[RandomWalk]\label{ex:rdwalk}
We consider the classic random walk cases appears in many previous literatures~\cite{ChatterjeeFG16,ChatterjeeNZ2017,SriramCAV,pldi21ours},
\begin{equation*}
\mathtt{def}\ p(n;1)= \bigoplus \begin{cases}
0.5: \mathtt{pre}(1);  \mathtt{invoke}\  p(n);\\
0.5: \mathtt{pre}(1);  \mathtt{invoke}\  p(n-3);
\end{cases}
\end{equation*}
\end{example}

\begin{example}[RandomAdder]\label{ex:rdwalk}
We consider the random accumulator from previous literatures~\cite{SriramCAV,pldi21ours},
\begin{equation*}
\mathtt{def}\ p(n;1)= \bigoplus \begin{cases}
0.5: \mathtt{pre}(2);  \mathtt{invoke}\  p(n-1);\\
0.5: \mathtt{pre}(1);  \mathtt{invoke}\  p(n-1);
\end{cases}
\end{equation*}
\qed
\end{example}

\begin{example}[MC1]\label{ex:rdwalk}
We consider the following manually-crafted recurrence relation,
\begin{equation*}
\mathtt{def}\ p(n;2)=  \{\mathtt{sample}\ v\leftarrow\mathtt{uniform}(n)\ \mathtt{in}\ \{\kwtick(1);\ \mathtt{invoke}\ p(v);\}\}
\end{equation*}
\qed
\end{example}

\begin{example}[MC2]\label{ex:rdwalk}
We consider the following manually-crafted recurrence relation,
\begin{equation*}
\mathtt{def}\ p(n;2)=  \{\mathtt{sample}\ v\leftarrow\mathtt{uniform}(n)\ \mathtt{in}\ \{\kwtick(\ln n);\ \mathtt{invoke}\ p(v);\}\}
\end{equation*}
\qed
\end{example}

\begin{example}[MC3]\label{ex:rdwalk}
We consider the following manually-crafted recurrence relation,
\begin{equation*}
\mathtt{def}\ p(n;2)=  \{\mathtt{sample}\ v\leftarrow\mathtt{muniform}(n)\ \mathtt{in}\ \{\kwtick(n\cdot \ln n);\ \mathtt{invoke}\ p(v);\}\}
\end{equation*}
\qed
\end{example}

\begin{example}[MC4]\label{ex:rdwalk}
We consider the following manually-crafted recurrence relation,
\begin{equation*}
\mathtt{def}\ p(n;2)= \bigoplus \begin{cases}
0.5: \{\mathtt{sample}\ v\leftarrow\mathtt{uniform}(n)\ \mathtt{in}\ \\\quad \quad \quad \quad \quad \{\kwtick(1);\ \mathtt{invoke}\ p(v);p(n-1-v);\}\};\\
0.5: \{\mathtt{sample}\ v\leftarrow\mathtt{muniform}(n)\ \mathtt{in}\ \{\kwtick(n);\ \mathtt{invoke}\ p(v);\}\}
\end{cases}
\end{equation*}
\qed
\end{example}

\section{The Detailed Semantics for PRRs}
\label{app:semantics}

Consider a PRR generated from the grammar in Figure~\ref{fig:grammar} with procedure name $p$, a \emph{configuration} $\sigma$ is a pair $\sigma=(\mbox{\sl comm}, \widehat{n})$ where $\mbox{\sl comm}$ is a statement that represents the current statement to be executed and $\widehat{n}\ge c_p$ is a positive integer that is the current value for the variable $n$.
A \emph{PRR state}  $\conf$ is a triple $\langle \sigma,\cost,\cont \rangle$ for which:
\begin{compactitem}
\item $\sigma$ is either the current configuration or $\mathsf{halt}$ that represents the termination of the whole PRR.
\item $\cost\ge 0$ records the cumulative preprocessing time so far.
\item $\cont$ is a stack of configurations that remain to be executed.
\end{compactitem}
 We use $\mathsf{emp}$ to denote an empty stack, and say that a PRR state $\langle \sigma,C,\cont \rangle$ is \emph{final} if $\cont=\emp$ and $\sigma=\mathsf{halt}$. Note that in a final configuration $\langle \mathsf{halt}, C,\emp \rangle$, the value $C$ represents the total execution runtime  of the PRR.

The semantics of the PRR is defined as a discrete-time Markov chain whose state space is the set of all PRR states and whose transition function $\transprob$ is defined in the way that the probability $\transprob(\conf, \conf')$ that the next PRR state is $\conf'$ given the current PRR state $\conf=((\mbox{\sl comm}, \widehat{n}),C,\cont)$ is determined by the following cases.
Below we abbreviate $e[\widehat{n}/n, \widehat{v}/v],e_i[\widehat{n}/n, \widehat{v}/v], (s-v)[\widehat{n}/n, \widehat{v}/v]$ as $\widehat{e},\widehat{e_i}, \widehat{s}-\widehat{v}$ respectively, and denote by $\mathsf{top}(\cont),\mathsf{pop}(\cont)$ the top element and resp. the stack after popping the top element in a stack $\cont$ respectively.

\begin{itemize}
\item In the case  $\mbox{\sl comm}=\mathsf{halt}$ and $\cont=\emp$, we define $\transprob(\conf,\conf):=1$ and $\transprob(\conf,\conf'):=0$ for other $\conf'$. This means that the PRR stays at termination once it terminates.
\item In the case $\mbox{\sl comm}=\kwsample\ v\leftarrow \mbox{\sl dist}\ \kwin\ \{\mathtt{pre}(e);\kwinvoke\ p(v);p(s-v)\}$ with distribution $\mbox{\sl dist}$, pseudo-polynomial expression $e$ and size expression $s$, we have that for every $\widehat{v}\in \supp(\mbox{\sl dist})$, $\transprob(\conf,\conf_{\widehat{v}}):=\mbox{\sl dist}(\widehat{v})$ for which
\[
\conf_{\widehat{v}}:=\begin{cases}
((\mathsf{func}(p),\widehat{v}),C+\widehat{e},(\mathsf{func}(p),\widehat{s}-\widehat{v})\cdot \cont) & \mbox{if }\widehat{v}, \widehat{s}-\widehat{v}\ge c_p\\
((\mathsf{func}(p),\widehat{v}),C+\widehat{e}, \cont) & \mbox{if }\widehat{v}\ge c_p~\&~\widehat{s}-\widehat{v}< c_p\\
((\mathsf{func}(p),\widehat{s}-\widehat{v}),C+\widehat{e}, \cont) & \mbox{if }\widehat{v}< c_p~\&~ \widehat{s}-\widehat{v}\ge c_p\\
(\mathsf{top}(\cont),C+\widehat{e}, \mathsf{pop}(\cont)) & \mbox{if }\widehat{v},\widehat{s}-\widehat{v}<c_p~\&~\cont\ne \emp \\
(\mathsf{halt},C+\widehat{e}, \emp) & \mbox{if }\widehat{v},\widehat{s}-\widehat{v}<c_p~\&~\cont=\emp \\
\end{cases}
\]
\item In the case $\mbox{\sl comm}=\kwsample\ v\leftarrow \mbox{\sl dist}\ \kwin\ \{\mathtt{pre}(e);\kwinvoke\ p(v)\}$ with distribution $\mbox{\sl dist}$ and pseudo-polynomial expression $e$. we have that for every $\widehat{v}\in \supp(\mbox{\sl dist})$, we define $\transprob(\conf,\conf_{\widehat{v}}):=\mbox{\sl dist}(\widehat{v})$ for which
\[
\conf'_{\widehat{v}}:=\begin{cases}
((\mathsf{func}(p),\widehat{v}),C+\widehat{e},\cont) & \mbox{if }\widehat{v}\ge c_p \\
(\mathsf{top}(\cont),C+\widehat{e},\mathsf{pop}(\cont)) & \mbox{if }\widehat{v}< c_p~\&~\cont\ne \emp\\
(\mathsf{halt}, C+\widehat{e},\emp) & \mbox{if }\widehat{v}< c_p~\&~\cont=\emp\\
\end{cases}
\]
The case of $\mbox{\sl comm}=\kwsample\ v\leftarrow \mbox{\sl dist}\ \kwin\ \{\mathtt{pre}(e);\kwinvoke\ p(n-v)\}$ can be obtained analogously.
\item  In the case $\mbox{\sl comm}=\kwwith_{i=1}^ke_i\!:\!\mbox{\sl comm}_i$,
we have that $\transprob(\conf, \conf_i) = \widehat{e_i}$ for each $1\le i\le k$ for which we have
$\conf_i:=((\mbox{\sl comm}_i, \widehat{n}),C,\cont)$.
\end{itemize}
With an initial PRR state
$((\mathsf{func}(p), n^*), 0, \emp)$
where $n^*\ge c_p$ is the input size, the Markov chain induces a probability space
where the sample space is the set of all infinite sequences of PRR states, the $\sigma$-algebra is generated by all sets of infinite sequences of PRR states that share a common prefix (called \emph{cylinder sets}), and the probability measure is uniquely determined by the probability transition function $\transprob$. We refer to \cite[Chapter~10]{BaierBook} for details. In the following, we denote the probability measure by $\Pr_{n^*}$ where $n^*\ge c_p$ is the input size.

\section{Details in Section \ref{sec:theory}}\label{appendix:theory}

\subsection{Details of the translation}

The definition of $\etf(\hat{n}, {\sl Prog})$ is
defined in Figure \ref{fig:etf}.
For recursive bodies, the translation is straightforward.
For sample, we first draw $\hat{v}$ from the distribution $\mathsf{dist}$ under the concrete input $\hat{n}$, and we substitute concrete values $\hat{n},\hat{v}$ with $n,v$ in the translation result for its recursive body.
We use a discrete distribution for probabilistic choice, where for each choice $1\le i\le k$, there is probability $\pp_i$ and returns the translation result for the sub-command $\com_i$. The following theorem formalizes its correctness.
\begin{theorem}
\label{thm:tf}
    Given a procedure $p(n;c_p)$, then for every $\hat{n}\ge c_p$, $\etf(\hat{n},\mathsf{func}(p))$ outputs the joint distribution of the triple of random variables $(\coststack(\hat{n}),\expr_1(\hat{n})$,$\\\expr_2(\hat{n}), r)$ in (\ref{eq:canonical-prr}).
\end{theorem}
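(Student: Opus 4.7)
The plan is to prove the theorem by structural induction on the statement ${\sl Prog}$ appearing in the call $\etf(\hat n,{\sl Prog})$, following exactly the recursive clauses by which $\etf$ is defined in Figure~\ref{fig:etf}. Fix a procedure $p(n;c_p)$ and an input size $\hat n\ge c_p$. By the grammar of \prrlang{} restricted to canonical-form bodies, ${\sl Prog}$ is built from sample commands (either single-call or divide-and-conquer) and probabilistic-choice combinators $\kwwith$, so a structural induction covers all cases.

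For the base cases I would take ${\sl Prog}$ to be a sample statement of the form $\kwsample\ v\leftarrow\mathsf{dist}\ \kwin\ \{\mathtt{pre}(e);\kwinvoke\ p(v)\}$ and $\kwsample\ v\leftarrow\mathsf{dist}\ \kwin\ \{\mathtt{pre}(e);\kwinvoke\ p(v);p(s-v)\}$. In each case, the operational semantics (Appendix~\ref{app:semantics}) dictates that once $v$ is drawn from $\mathsf{dist}(\hat n)$, the preprocessing cost is the deterministic evaluation $\hat e=e[\hat n/n,\hat v/v]$, the number of recursive calls $r$ is $1$ or $2$ accordingly, and the sizes $\expr_1(\hat n),\expr_2(\hat n)$ are the corresponding substituted expressions. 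The $\etf$ clause for sample substitutes the same concrete values into $e,v,s-v$ after drawing $\hat v$ from $\mathsf{dist}(\hat n)$, so the resulting joint distribution over $(\coststack,\expr_1,\expr_2,r)$ coincides with the one induced by a single step of the semantics. This is essentially a bookkeeping check.

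For the inductive step, let ${\sl Prog}=\kwwith_{i=1}^k c_i:\com_i$. The semantics of $\kwwith$ makes a probabilistic choice of $\com_i$ with probability $c_i$ and then proceeds to execute $\com_i$ without any intervening cost or recursive call. Thus the induced joint distribution is exactly the discrete mixture $\sum_{i=1}^k c_i\cdot \mathcal D_i$, where $\mathcal D_i$ is the joint distribution of $(\coststack(\hat n),\expr_1(\hat n),\expr_2(\hat n),r)$ obtained by executing $\com_i$. By the induction hypothesis, $\etf(\hat n,\com_i)=\mathcal D_i$, and the $\etf$ clause for $\kwwith$ returns precisely the mixture $\sum_{i=1}^k c_i\cdot \etf(\hat n,\com_i)$, closing the induction.

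The only subtle point, and the one I would spend most care on, is preserving the \emph{joint} (as opposed to marginal) distribution: the random cost, the sizes, and the call count can be correlated through the shared sample variable $v$ inside a single branch, and through the probabilistic choice between branches. The argument above handles this correctly because at each recursive clause of $\etf$ the joint sample is produced in one draw (first the branch index, then $v$ from $\mathsf{dist}(\hat n)$, then deterministic substitution), mirroring the order of random resolutions in the semantics; no marginalisation is ever performed. I would state this explicitly as a small lemma before concluding, to make the coupling between $\etf$'s output and the semantics transparent.
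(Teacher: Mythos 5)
Your proposal is correct and follows essentially the same route as the paper, which proves the theorem by a straightforward structural induction over the command structure, matching each clause of $\etf$ against the operational semantics; the paper's proof is just a one-line reference to this induction, so your more detailed write-up (including the explicit remark about preserving the joint rather than marginal distribution) is a faithful elaboration of the intended argument.
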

\begin{proof}
The proof is straightforward by structural induction. The induction cases have been illustrated in the above text.
\end{proof}

\begin{figure}[htbp]
\begin{small}
\begin{talign*}
\textbf{(Command)  } &\etf(\hat{n}, \sample) = [\hat{n}/n,\hat{v}/v]\etf(\hat{n}, {\body})\\&\quad  where\ \hat{v}~is~a~meta~variable~drawn~from~the~distribution~[\hat{n}/n]\mathtt{dist}\\
 &\etf(\hat{n}, \with) \\  &\quad =  \mathsf{discrete}\left\{c_1:\etf(\hat{n}, \com_1),\dots,c_k:\etf(\hat{n}, \com_k)\right\}\\
\textbf{(Rec Body)  } &\etf(\hat{n}, \mathtt{pre}(\pp); \mathtt{invoke}\ p(\expr_1); p(\expr_2)) = ([\hat{n}/n]\pp,[\hat{n}/n]\expr_1,[\hat{n}/n]\expr_2, 2)\\
 &\etf(\hat{n}, \mathtt{pre}(\pp); \mathtt{invoke}\ p(\expr_1)) = ([\hat{n}/n]\pp,[\hat{n}/n]\expr_1,0, 1)\\
\end{talign*}
\end{small}
\caption{Transformation Function $\etf$}
\label{fig:etf}
\end{figure}

\subsection{Full Proof of Theorem \ref{thm:martingale}}

Below we present the full proof of Theorem \ref{thm:martingale}. To prove Theorem~\ref{thm:martingale}, we need the extension of conditional expectation to non-negative random variables that are not necessarily integrable. We adopt the method in the previous work~\cite{AgrawalC018}, where the conditional expectation is extended for non-negative random variables. A random variable $X$ is \emph{non-negative} if $X(\omega)\ge 0$ for all elements $\omega$ in the sample space.

\begin{theorem}[{\cite[Proposition~3.1]{AgrawalC018}}]\label{thm:condexpvext}
Let $X$ be any non-negative random variable from a probability space $(\Omega, \mathcal{F},\probm)$ and $\mathcal{G}$ be a sub-$\sigma$-algebra of $\mathcal{F}$ (i.e., $\mathcal{G}\subseteq\mathcal{F}$). Then there exists a random variable $\condexpv{X}{\mathcal{G}}$ from $(\Omega, \mathcal{F},\probm)$ (called a conditional expectation of $X$ w.r.t $\mathcal{G}$) that fulfills the two conditions (E1) and (E2) below:
\begin{itemize}
\item[(E1)] $\condexpv{X}{\mathcal{G}}$ is $\mathcal{G}$-measurable, and
\item[(E2)] for all $A\in\mathcal{G}$, we have $\int_A \condexpv{X}{\mathcal{G}}\,\mathrm{d}\!\probm=\int_A {X}\,\mathrm{d}\!\probm$.
\end{itemize}
Moreover,
the conditional expectation is almost surely unique, i.e., for any random variables $Y, Z$ from  $(\Omega, \mathcal{F},\probm)$ that both fulfill (E1) and (E2), we have that $\probm(Y=Z)=1$.
\end{theorem}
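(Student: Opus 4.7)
The plan is to establish existence by truncation and the Radon--Nikodym theorem, then derive uniqueness by comparing truncated pieces of $Y$ and $Z$. For each $n \in \mathbb N$ I would set $X_n := X \wedge n$, which is bounded and hence integrable. The finite measure $\nu_n(A) := \int_A X_n\,\mathrm{d}\!\probm$ on $\mathcal{G}$ is absolutely continuous with respect to the restriction $\probm|_{\mathcal{G}}$, so the classical Radon--Nikodym theorem yields a non-negative $\mathcal{G}$-measurable density $Y_n$ satisfying $\int_A Y_n\,\mathrm{d}\!\probm = \int_A X_n\,\mathrm{d}\!\probm$ for every $A \in \mathcal{G}$.

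The next step is to choose the densities $Y_n$ coherently so that $Y_n \le Y_{n+1}$. Because $X_{n+1} \ge X_n$, setting $A := \{Y_{n+1} < Y_n\} \in \mathcal{G}$ gives $\int_A (Y_{n+1} - Y_n)\,\mathrm{d}\!\probm \ge 0$ while the integrand is strictly negative on $A$; hence $\probm(A) = 0$, and after modifying on a null set the sequence becomes pointwise monotone. I would then define $Y := \lim_n Y_n$, which takes values in $[0,\infty]$. Condition (E1) follows because pointwise limits of $\mathcal{G}$-measurable functions are $\mathcal{G}$-measurable, and (E2) follows by applying the monotone convergence theorem to both sides of the defining identity for $Y_n$, using $X_n \uparrow X$.

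For uniqueness, I would suppose $Y, Z$ both satisfy (E1) and (E2), and for fixed $k < n$ consider $A_{n,k} := \{Y \le k\} \cap \{Z \ge n\} \in \mathcal{G}$. Using (E2) on $A_{n,k}$ forces $\int_{A_{n,k}} Y\,\mathrm{d}\!\probm = \int_{A_{n,k}} Z\,\mathrm{d}\!\probm$, but the left side is at most $k\cdot\probm(A_{n,k})$ and the right side is at least $n\cdot\probm(A_{n,k})$, so $\probm(A_{n,k}) = 0$. Letting $n \to \infty$ then $k \to \infty$ yields $\probm(\{Y<\infty\} \cap \{Z=\infty\}) = 0$, and symmetry gives $\{Y=\infty\} = \{Z=\infty\}$ up to a null set. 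On the complement both random variables are finite, and a standard argument applying (E2) to $\{Y - Z \ge \varepsilon\} \cap \{Y \le n\}$ for each $\varepsilon > 0$ and each $n$ then shows $Y = Z$ almost surely.

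The hard part will be handling the potentially infinite values of $Y$: the usual Radon--Nikodym-based construction and uniqueness proof are immediate for integrable random variables, but here one must both build $Y$ via a monotone truncation and ensure the uniqueness argument does not collapse on $\{Y = \infty\}$. The two-stage truncation in the uniqueness step (first bounding $Y$ by $k$, then enlarging the threshold on $Z$ to $n$) is exactly what bridges that gap; once this is in place, the remainder is routine bookkeeping with monotone convergence.
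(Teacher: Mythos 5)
Your proof is correct. Note, however, that the paper does not prove this statement at all: it is imported verbatim as Proposition~3.1 of the cited work \cite{AgrawalC018}, so there is no in-paper argument to compare against. Your route — truncating to $X\wedge n$, obtaining densities via Radon--Nikodym, enforcing monotonicity up to a null set, passing to the limit with monotone convergence, and handling uniqueness with the two-threshold sets $\{Y\le k\}\cap\{Z\ge n\}$ before the usual $\varepsilon$-argument on the finite part — is the standard construction of conditional expectation for non-negative, possibly non-integrable random variables (allowing the value $+\infty$), and it is essentially the same argument as in the cited source; all the delicate points (finiteness of the truncated integrals, measurability of the exceptional sets in $\mathcal{G}$, and avoiding $\infty-\infty$ by intersecting with $\{Y\le n\}$) are handled correctly.
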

Below we show that some basic property we utilize in our proof is still valid upon the extension. Below we fix a probability space $(\Omega,\mathcal{F},\probm)$ and a sub-$\sigma$-algebra $\mathcal{G}\subseteq \mathcal{F}$.
The following properties hold for \emph{non-negative} random variables $X,Y,Z$.
\begin{itemize}
\item[(E3)] (`\textbf{Taking out what is known}') If $Z$ is $\mathcal{G}$-measurable, then $\condexpv{Z\cdot X}{\mathcal{G}}=Z\cdot\condexpv{X}{\mathcal{G}}$ a.s. The proof is exactly the same as under `Proof of (j)' on \cite[Page 90]{williams1991probability} for the non-negative case.
\end{itemize}

Our proof relies on the celebrated optional stopping theorem, which is stated below:
\begin{theorem}[Optional stopping theorem~{\cite[Theorem 10.10]{williams1991probability}}]
	\label{thm:optstp}
	For any discrete-time supermartingale $\dtsp=\{X_n\}_{n\in\Nset}$ adapted to a filtration
	$\{\mathcal{F}_n\}_{n\in\Nset}$ where every random variable $X_n$ is non-negative, and any stopping time $\rho$ w.r.t the filtration $\{\mathcal{F}_n\}_{n\in\Nset}$ that is almost surely finite (i.e.~$\probm(\rho<\infty)=1$), we have $\expv[X_\rho] \le \expv[X_0]$.
\end{theorem}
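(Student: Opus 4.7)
The plan is to establish the result in two stages: first for truncated (bounded) stopping times, and then pass to the limit using the non-negativity assumption together with Fatou's Lemma. This is the standard route in Williams' textbook, but I want to lay out the steps carefully so they can be adapted to the extended non-negative conditional expectation from Theorem~\ref{thm:condexpvext}.

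\textbf{Step 1 (truncation).} For each $n\in\mathbb{N}$, define $\rho_n := \min\{\rho, n\}$. Since $\{\rho_n \le k\} = \{\rho \le k\} \cup \{n \le k\}$ lies in $\mathcal{F}_k$, $\rho_n$ is a bounded stopping time. I will prove $\expv[X_{\rho_n}] \le \expv[X_0]$ via the telescoping identity
\[
X_{\rho_n} = X_0 + \sum_{k=0}^{n-1} (X_{k+1}-X_k)\,\mathbf{1}_{\{\rho > k\}}.
\]
The event $\{\rho > k\} = \{\rho \le k\}^{c}$ is $\mathcal{F}_k$-measurable, so taking conditional expectation and applying property~(E3) (``taking out what is known'') gives $\expv[(X_{k+1}-X_k)\,\mathbf{1}_{\{\rho>k\}} \mid \mathcal{F}_k] = \mathbf{1}_{\{\rho>k\}}\cdot \expv[X_{k+1}-X_k \mid \mathcal{F}_k] \le 0$ by the supermartingale property. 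Summing, taking expectation, and using linearity (legitimate for the finite sum) yields $\expv[X_{\rho_n}] \le \expv[X_0]$.

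\textbf{Step 2 (passage to the limit).} Since $\probm(\rho < \infty) = 1$, we have $\rho_n \to \rho$ almost surely, and consequently $X_{\rho_n} \to X_\rho$ almost surely (on the event $\{\rho < \infty\}$, eventually $\rho_n = \rho$). Because each $X_n \ge 0$, Fatou's Lemma applies and yields
\[
\expv[X_\rho] \;=\; \expv\!\left[\liminf_{n\to\infty} X_{\rho_n}\right] \;\le\; \liminf_{n\to\infty} \expv[X_{\rho_n}] \;\le\; \expv[X_0],
\]
which is the desired inequality.

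\textbf{Anticipated obstacle.} The subtle point is that the random variables $X_n$ are only assumed non-negative and not necessarily integrable, so $\expv[X_0]$ may be $+\infty$ and standard linearity arguments must be applied to non-negative summands rather than signed ones. If $\expv[X_0] = +\infty$ the inequality is trivial, so one may assume $\expv[X_0] < \infty$; by the supermartingale property each $\expv[X_n]$ is then finite and the decomposition in Step 1 is well-defined with all terms integrable, which justifies interchange of expectation and the finite sum. The almost-sure finiteness of $\rho$ is essential in Step 2: without it, $X_{\rho_n}$ need not converge on the event $\{\rho = \infty\}$, and the conclusion may fail (as is well known from the gambler's ruin on an unbounded random walk).
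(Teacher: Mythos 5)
Your proof is correct. The paper does not prove this statement itself---it is quoted verbatim as a known result from Williams' textbook---and your truncation-plus-Fatou argument is precisely the standard proof of that cited theorem, including the correct handling of the only two delicate points (reducing to the integrable case via the triviality of the bound when $\expv[X_0]=+\infty$, and using non-negativity so that Fatou's Lemma closes the limit).
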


Now we are ready to prove Theorem \ref{thm:martingale}.

\begin{proof}[Proof of Theorem \ref{thm:martingale}]
We fix some sufficiently large $\alpha$ and $n^*$.
In general, we apply the martingale theory to prove this theorem. To construct a martingale, we need to make two preparations.

First, since $t_*(\alpha,n^*) \le t(\alpha,n)$ for every $c_p\le n\le n^*$, substituting $t(\alpha,n)$ with $t_*(\alpha,n^*)$ in (\ref{eq:constraint}) does not affect the validity of (\ref{eq:constraint}) by the convexity of $\exp(\cdot)$, i.e., $\forall c_p\le n\le n^*$, we have that:
\begin{talign}
    \label{eq:constraint2}
   &\mathbb E[\exp(t_*(\alpha,n^*)\cdot \ex(n\mid f))]\le t_*(\alpha,n^*)\cdot f(\alpha,n)
\end{talign}

Second, given an infinite sequence of the PRR states $\rho=\mu_0,\mu_1,\dots$ in the sample space, we construct its entry subsequence $\rho'=\mu'_0,\mu_1',\dots$ as follows. We keep all states $(({\sl comm},\hat{n}),C,\mathbf{K})\in \rho$ at the entry point of the procedure $p$, i.e., ${\sl comm} = \mathsf{func}(p)$, we also keep all final states, and remove all other states. We further define $\tau':=\inf\{t: \mu'_t\ \text{is final}\}$. Note that $C'_{\tau'}=C_{\tau}$. Furthermore, for every $i\ge 0$, if the state $\mu_i'$ is not a final state, then $\mu_{i+1}'$ represents the recursive calls of the state $\mu_i'$. Recall that by theorem \ref{thm:tf}, we have that the function $\etf(\hat{n},\mathsf{func}(p))$ outputs the joint distribution of the quadruple of random variables $(\coststack(\hat{n}),\expr_1(\hat{n}),\expr_2(\hat{n}),r)$ after executing $\mathsf{func}(p)$. Hence, if some state $\mu_i'=(({\sl comm}_i',\hat{n}_i'),C_i',\mathbf{K}_i')\in \rho'$ is not final, $\mu_{i+1}'$ observes as the following distribution.
\begin{itemize}
    \item Sample $(\Delta, \mathsf{s}_1, \mathsf{s}_2,r)$ from $\etf(\hat{n}_i',\mathsf{func}(p))$.
    \item If $r=1$,we have that:
    \begin{small}
    \begin{equation*}
        \mu_{i+1}' = \begin{cases}
        ((\mathsf{func}(p),\mathsf{s}_1),C_i'+\Delta, \mathbf{K}_i') & \mathsf{s}_1\ge c_p\\
        (\mathsf{top}(\mathbf{K}_i'),C_i'+\Delta,\mathsf{pop}(\mathbf{K}_i')) & \mathsf{s}_1< c_p\land \mathbf{K}_i'\ne \mathsf{emp}\\
        (\mathsf{halt},C_i'+\Delta, \mathsf{emp}) & \mathsf{s}_1< c_p\land \mathbf{K}_i'=\mathsf{emp}
        \end{cases}
    \end{equation*}
    \end{small}
    \item If $r=2$,we have that:
    \begin{small}
        \begin{equation*}
        \mu_{i+1}' = \begin{cases}
        ((\mathsf{func}(p),\mathsf{s}_1),C_i'+\Delta, (\mathsf{func}(p),\mathsf{s}_2)\cdot \mathbf{K}_i') & \mathsf{s}_1\ge c_p\land \mathsf{s}_2\ge c_p\\
        ((\mathsf{func}(p),\mathsf{s}_1),C_i'+\Delta, \mathbf{K}_i') & \mathsf{s}_1\ge c_p\land \mathsf{s}_2< c_p\\
        ((\mathsf{func}(p),\mathsf{s}_2),C_i'+\Delta, \mathbf{K}_i') & \mathsf{s}_1< c_p\land \mathsf{s}_2\ge c_p\\
        (\mathsf{top}(\mathbf{K}_i'),C_i'+\Delta,\mathsf{pop}(\mathbf{K}_i')) & \mathsf{s}_1< c_p\land \mathsf{s}_2<c_p\land \mathbf{K}_i'\ne \mathsf{emp}\\
        (\mathsf{halt},C_i'+\Delta, \mathsf{emp}) & \mathsf{s}_1< c_p\land \mathsf{s}_2<c_p\land \mathbf{K}_i'= \mathsf{emp}
        \end{cases}
    \end{equation*}
    \end{small}
\end{itemize}

Now we are ready to define the martingale. Given an entry subsequence $\rho'=\mu'_0,\mu_1',\dots$ in the sample space, for each $i\ge 0$, suppose $\mu_i'=(({\sl comm}_i',\hat{n}_i'),C_i',\mathbf{K}_i')$ and $\mathbf{K}_i' = (\mathsf{func}(p),\mathsf{s}_{i,1})\cdots  (\mathsf{func}(p),\mathsf{s}_{i,q_i})$, where $q_i$ represent the length of the stack. We define
\begin{talign*}
y_i:=\exp\left(C_i'+f(\alpha,\hat{n}_i')+\sum_{1\le j\le q_i} f(\alpha,\mathsf{s}_{i,j})\right)
\end{talign*}
Note that if $\mu_i'$ is not final, we have that:
\begin{talign*}
\mathbb E[y_{i+1}\mid y_i] &\le
y_i \cdot \mathbb E\left[\exp\left(\coststack(\hat{n}_i')+\sum_{j=1}^rf(\alpha,\expr_j(\hat{n}_i'))-f(\alpha,\hat{n}_i')\right)\right]\\
&\le y_i
\end{talign*}
where the first inequality is obtained by comparing $\mu_i'$ and $\mu_{i+1}'$ from the explicit formula of $\mu_{i+1}'$, and (E3) of the extended conditional expectation. The second inequality is from (\ref{eq:constraint2}).
Otherwise, if $\mu_i$ is final, then it is straightforward that $y_{i+1}=y_i=1$, thus we also have  $\mathbb E[y_{i+1}\mid y_i]\le y_i$ in this case.

Thus, $y_0,y_1,\ldots$ is a supermartingale, thus by the optional stopping theorem~(Theorem \ref{thm:optstp}), we have that
\begin{talign*}
\mathbb E\left[\exp\left(t_*(\alpha,n^*)\cdot C_{\tau}\right)\right]
&=\mathbb E\left[\exp\left(t_*(\alpha,n^*)\cdot C'_{\tau'}\right)\right]\\
&=\mathbb E\left[y_{\tau'} \right] \le \mathbb E[y_0] = \exp\left(t_*(\alpha,n^*)\cdot f(\alpha,n^*)\right)
\end{talign*}
\qed
\end{proof}

\subsection{Proof of Theorem \ref{thm:comp}}

\begin{proof}[Proof of Theorem \ref{thm:comp}]
The proof relies on Hoeffiding's Lemma as follows.
\begin{theorem}[Hoeffidng's Lemma,\cite{Hoeffding1963inequality}]
For every bounded random variable $a\le X\le b$, we have that $\mathbb E[\exp(X)]\le \exp(\mathbb E[X])\cdot \exp\left(\frac{(b-a)^2}{8}\right)$.
\end{theorem}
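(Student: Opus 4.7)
The plan is to prove Hoeffding's Lemma by the standard three-step argument that combines centering, convexity of $\exp$, and a Taylor bound on a scalar auxiliary function. First, I would reduce to the zero-mean case by introducing $Y := X - \mathbb E[X]$. Then $Y$ is bounded in the interval $[a - \mathbb E[X], b - \mathbb E[X]]$, its width is still $b-a$, and the claim for $X$ follows from the bound $\mathbb E[\exp(Y)] \le \exp((b-a)^2/8)$ upon multiplying by $\exp(\mathbb E[X])$. So the problem reduces to showing, for a zero-mean variable $X \in [a,b]$, that $\mathbb E[\exp(X)] \le \exp((b-a)^2/8)$; note also that this reduction forces $a \le 0 \le b$.

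For the zero-mean bound I would exploit the convexity of $\exp$ on $[a,b]$. Each $y \in [a,b]$ admits the convex representation $y = \tfrac{b-y}{b-a}\,a + \tfrac{y-a}{b-a}\,b$, so convexity gives $\exp(y) \le \tfrac{b-y}{b-a}\exp(a) + \tfrac{y-a}{b-a}\exp(b)$. Taking expectations and using $\mathbb E[X] = 0$ yields $\mathbb E[\exp(X)] \le \tfrac{b\exp(a) - a\exp(b)}{b-a}$. This converts the probabilistic inequality into a purely analytic one, namely $\ln\!\left(\tfrac{b\exp(a) - a\exp(b)}{b-a}\right) \le \tfrac{(b-a)^2}{8}$.

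The final and most delicate step is to verify this analytic inequality. I would reparametrize by $h := b - a > 0$ and $p := -a/(b-a) \in [0,1]$, after which the left-hand side becomes $L(h) := -ph + \ln(1 - p + p\exp(h))$. Direct computation yields $L(0) = 0$, $L'(0) = 0$, and $L''(h) = u(1-u)$, where $u := p\exp(h)/(1 - p + p\exp(h)) \in [0,1]$. By AM-GM, $u(1-u) \le 1/4$ uniformly in $h$, so Taylor's theorem with the Lagrange form of the remainder at second order yields $L(h) \le h^2/8 = (b-a)^2/8$. Exponentiating and combining with the convexity bound completes the proof.

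The main obstacle is the identification $L''(h) = u(1-u)$ together with the uniform bound $u(1-u) \le 1/4$, which is the crux of Hoeffding's argument; once the reparametrization is chosen, it reduces to a clean short calculation. The centering and convexity steps are entirely routine, and the closing Taylor argument is the standard way to turn a pointwise second-derivative bound into a quadratic bound on a function that vanishes to first order at the origin.
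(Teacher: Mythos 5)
Your proof is correct and is the standard argument for Hoeffding's Lemma; the paper itself states this result without proof (citing Hoeffding 1963), and your centering--convexity--Taylor argument is precisely the classical one. The crux computation checks out: with $u = p e^h/(1-p+pe^h)$ one indeed gets $L''(h) = u(1-u) \le 1/4$, and since $L(0)=L'(0)=0$ the Lagrange form of the remainder gives $L(h)\le h^2/8$, which exponentiates to the claimed bound.
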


We first rephrase the constraint (\ref{eq:constraint}) as
\begin{talign*}
\mathbb E\left[\exp\left(t(\alpha,n)\cdot (\coststack(n)+\sum_{i=1}^rf(\alpha,\expr_i(n))-f(\alpha,n))\right)\right]\le 1
\end{talign*}

Define $X(\alpha,n)$ as the exponent in the $\exp(\cdot)$, $X(\alpha,n):=t(\alpha,n)\cdot (\coststack(n)+\sum_{i=1}^rf(\alpha,\expr_i(n))-f(\alpha,n))$, since we choose $f(\alpha,n)$ as $w(\alpha)\cdot \mathbb E[p(n)]$, and the fact that $\mathbb E[p(n)]=\mathbb E[\coststack(n)]+\sum_{i=1}^r \mathbb E[p(\expr_i)]$, we have that
\begin{talign*}
    \mathbb E[X(\alpha,n)] &= t(\alpha,n) \cdot (\mathbb E[\coststack(n)] + w(\alpha)\cdot \sum_{i=1}^r \mathbb E[p(\expr_i)] - w(\alpha)\cdot \mathbb E[p(n)])\\
    &= t(\alpha,n) \cdot (\mathbb E[\coststack(n)] - w(\alpha)\cdot \mathbb E[\coststack(n)])\\
    &= \frac{\lambda(\alpha)}{\mathbb E[\coststack(n)]} \cdot (1-w(\alpha))\cdot \mathbb E[\coststack(n)]\\
    &=-\lambda(\alpha)\cdot (w(\alpha)-1)
\end{talign*}
Furthermore, by (A1), we have that $X(\alpha,n)$ ranges from $\lambda(\alpha)\cdot (1-w(\alpha)+w(\alpha)\cdot M_1)$ to  $\lambda(\alpha)\cdot (1-w(\alpha)+w(\alpha)\cdot M_2)$. Hence, by the theorem above, we have that:
\begin{talign*}
\mathbb E[\exp(X(\alpha,n))]\le \exp\left(-\lambda(\alpha)\cdot (w(\alpha)-1) + \frac{\lambda(\alpha)^2\cdot w(\alpha)^2\cdot (M_2-M_1)^2}{8}\right)\le 1
\end{talign*}
where the second inequality is due to our choice of $\lambda(\alpha)$. Hence the theorem follows.
\qed

\end{proof}

\section{Details in Our Algorithmic Approach}
\label{appendix:check}
\subsection{Proof of Theorem~\ref{thm:binsearch}}
\begin{proof}
Suppose $c_t,c_f$ are valid and $c_t',c_f'$ are also valid, then due to the convexity, for any $0<\lambda<1$, $\lambda\cdot c_t +(1-\lambda)\cdot c_t',\lambda\cdot c_f +(1-\lambda)\cdot c_f'$ is also valid. Thus the second item holds since $0,c_f$ is always valid. The first items holds from the inequality $\sum_{i=1}^r f(\alpha,\expr_i(n))\le f(\alpha,n)$ if $\sum_{i=1}^r \expr_i\le n$.
\end{proof}

\subsection{The Algorithm~{\sl Decide}}

We first present some computer algebra backgrounds.

\paragraph{Monotonicity of pseudo-monomials} We need a complete depiction of the monotonicity of pseudo monomials, which could be straightforwardly proved by computing the derivative.

the monotonicity of a pseudo-monomial $n^a\cdot \ln^b n$ is presented in Table \ref{tab:mono}.

\begin{footnotesize}
\begin{table}[htbp]
\caption{Monotonicity of pseudo-monomials,
where we use $\uparrow\!(a,b)$ to represent the pseudo-monomial is non-decreasing when $n\in [a,b]$, use $\downarrow\!(a,b)$ to represent that the function is non-increasing when $n\in [a,b]$.}
\label{tab:mono}
\begin{center}
\begin{tabular}{|c|c|c|c|}
\hline
 & $b<0$ & $b=0$ & $b>0$\\
 \hline
$a<0$ & \multicolumn{2}{c|}{$\downarrow(c_p,\infty)$}  & \makecell[c]{$\uparrow(c_p,\exp\left(-\frac{b}{a}\right))$ and \smallskip  \\ $\downarrow(\exp\left(-\frac{b}{a}\right),\infty)$} \\
 \hline
$a=0$ & $\downarrow(0,\infty)$ & constant & \multirow{1}{*}{$\uparrow(c_p,\infty)$} \\
 \cline{1-4}
$a>0$ & \makecell[c]{$\downarrow(c_p,\exp\left(-\frac{b}{a}\right))$ and \smallskip  \\ $\uparrow(\exp\left(-\frac{b}{a}\right),\infty)$} & \multicolumn{2}{c|}{$\uparrow(c_p,\infty)$} \\
\hline
\end{tabular}
\end{center}
\end{table}
\end{footnotesize}

\paragraph{The function $\mathsf{NegativeLB}(P)$} Our decision procedure for canonical forms relies on a function $\mathsf{NegativeLB}(P)$, it takes on input as a pseudo-polynomial $P(n)$ whose sub-monomial term  with highest magnitude is negative, and outputs an integer $L^*$ such that $P(n)<0$ for every $n\ge L^*$. It is achieved by three steps below.
\begin{enumerate}
    \item Our algorithm firstly finds the term with largest order. In detail, our algorithm will scan each term in the polynomial and finds the term with highest degree of $n$, if there are several terms with the same highest degree of $n$, our algorithm chooses the term among them with the highest degree of $\ln n$. Formally, our algorithm finds the term $n^{a^*}\cdot \ln^{b^*} n$ with $c_{a^*, b^*}\ne 0$ and highest $a^*$, if there are more than one terms such that their degree of $n$ are all equal to $a^*$, we choose the largest $b^*$ among them. Then, our algorithm divides each term in $P(n)$ by $n^{a^*}\cdot \ln^{b^*} n$, and derives the pseudo-polynomial $P_1(n)$ below:
\begin{talign*}
P_1(n) = c_{a^*,b^*} + \sum_{a,b\ne a^*,b^*} c_{a,b}\cdot n^{a-a^*} \cdot \ln^{b-b^*} n
\end{talign*}
Note that by our choice of $a^*,b^*$, the degree of $n$ and $\ln n$ in each non-constant term in $P_1(n)$, i.e., $a-a^*,b-b^*$ either satisfies $a-a^*\le -1$, or satisfies $a-a^*=0$ and $b-b^*\le -1$. Furthermore, we have that $P(n)<0$ if and only of $P_1(n)<0$ when $n\ge c_p$. Hence, the transformation from $P(n)$ to $P_1(n)$ does not affect the answer. Below, we consider to compute $\mathsf{NegativeLB}(P_1)$.
\item Next, we consider an over-approximation $P_2(n)$ the function $P_1(n)$ as follows.
\begin{talign*}
P_2(n) := c_{a^*,b^*} + \sum_{a,b\ne a^*,b^*} \max\{0, c_{a,b}\}\cdot n^{a-a^*}\cdot \ln^{b-b^*} n
\end{talign*}
It removes all non-constant terms with negative coefficients. We prove that there exists a threshold $n_e$ such that $P_2(n)$ will monotonically decreasing for every $n\ge n_e$. Formally, we have that:
\begin{proposition}
    Let
    \begin{talign*}
    n_e:=\lceil \max\{\exp(-(b-b^*)/(a-a^*))\ |\ (a,b)\ne (a^*,b^*)\  \text{and}\  c_{a,b}\ne 0\}\rceil
    \end{talign*}
     be a constant. Then we have that $P_2(n)$ monotonically decreases for every $n\ge n_e$.
\end{proposition}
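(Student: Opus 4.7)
The plan is to establish monotonicity of $P_2(n)$ termwise: since $P_2$ is a finite sum of a constant plus pseudo-monomials with \emph{non-negative} coefficients (because of the $\max\{0, c_{a,b}\}$ clipping), it suffices to show that every such pseudo-monomial summand is non-increasing on $[n_e, \infty)$. Differentiability of sums and the fact that positive scalar multiples preserve the sign of the derivative make this reduction straightforward, so the real content lies in analyzing a single monomial $n^{c}\cdot \ln^{d} n$ where $c := a - a^*$ and $d := b - b^*$.

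From the construction of $P_1$ carried out in the previous step, every non-constant term in $P_2$ satisfies either (i) $c < 0$, or (ii) $c = 0$ and $d < 0$. I would compute the derivative
\begin{talign*}
\frac{\mathrm d}{\mathrm d n}\bigl(n^{c}\ln^{d} n\bigr) \;=\; n^{c-1}\ln^{d-1} n\,\bigl(c\cdot \ln n + d\bigr),
\end{talign*}
valid for $n\ge c_p$ (so in particular $n\ge 2$, ensuring $\ln n > 0$ and that the prefactor $n^{c-1}\ln^{d-1}n$ is strictly positive). In case (ii) the bracket reduces to $d<0$ and the derivative is non-positive for every $n\ge c_p$, so such a term never drives $n_e$ up. In case (i), since $c<0$, the bracket $c\ln n + d$ is non-positive precisely when $\ln n \ge -d/c$, i.e.\ when $n\ge \exp(-d/c)$. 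Consequently each summand coming from case (i) is non-increasing for $n\ge \lceil\exp(-d/c)\rceil$.

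Taking the maximum of these thresholds over all contributing summands (the terms with $c_{a,b} = 0$ are absent from $P_2$ and the negative-coefficient terms have been clipped to $0$, so including them in the max only makes $n_e$ larger and is harmless) yields exactly the value $n_e$ defined in the statement. For every $n\ge n_e$, each non-constant summand in $P_2$ has non-positive derivative while the constant $c_{a^*,b^*}$ trivially contributes $0$, hence $P_2'(n)\le 0$ on $[n_e,\infty)$, giving the claimed monotonicity.

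The main obstacle I anticipate is the notational edge case (ii), where the expression $\exp\bigl(-(b-b^*)/(a-a^*)\bigr)$ appearing in the definition of $n_e$ is formally undefined. I would handle this by adopting the convention that terms with $a - a^* = 0$ are simply excluded from the $\max$ (or equivalently contribute $-\infty$), justified by the separate case analysis above showing they are monotonically non-increasing on all of $[c_p,\infty)$. Aside from this bookkeeping, the argument is just the derivative computation together with the observation that the termwise non-negativity of coefficients in $P_2$ preserves monotonicity under summation.
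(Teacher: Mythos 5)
Your proof is correct and takes essentially the same route as the paper: the paper's proof also reduces to termwise monotonicity of the non-negative-coefficient pseudo-monomials, citing its monotonicity table for $n^a\ln^b n$, which is itself obtained by exactly the derivative computation you carry out. Your explicit handling of the $a=a^*$ edge case (where the paper's $\exp(-(b-b^*)/(a-a^*))$ is formally undefined) and of the clipped negative-coefficient terms is a sensible tightening of the same argument, not a different approach.
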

\begin{proof}
	Note that every coefficient for non-constant terms in $P_2(n)$ is non-negative. Thus the proposition directly follows from the monotonicity of the function $n^a\cdot \ln^b n$~(Table~\ref{tab:mono}).
\end{proof}
Since $P_1(n)\le P_2(n)$, we have that $\mathsf{NegativeLB}(P_2)$ also serves as an valid answer to $\mathsf{NegativeLB}(P_1)$. Blow, we consider to compute $\mathsf{NegativeLB}(P_2)$.
\item To compute $\mathsf{NegativeLB}(P_2)$, we simply enumerate every $n\ge n_e$ until we find some $n^*$ such that $P_2(n^*)< 0$. Since $P_2(n)$ will monotonically decreasing for every $n\ge n_e$, for every $n\ge n^*$, $P_1(n)\le P_2(n)\le P_2(n^*)<0$. Thus, $n^*$ is a valid answer to $\mathsf{NegativeLB}(P_1)$ and $\mathsf{NegativeLB}(P)$.
\end{enumerate}

Since sub-monomial term in $P(n)$ with highest magnitude has negative coefficient, we have that $c_{a^*,b^*}<0$. Hence, the function \textsf{NegativeLB}$(P)$ always terminates since as long as $n$ is sufficiently large, the $P_2(n)$ will be negative.

\paragraph{The decision procedure}
Our goal is to decide if $Q(\alpha,n)$ holds for every sufficiently large $\alpha,n^*$ and every $c_p\le n\le n^*$. Since $Q(\alpha,n)$ is irrelevant to $n^*$, it suffices to check whether the constraint holds over every sufficiently large $\alpha$ and every $n\ge c_p$. For each $1\le i\le k$, we use $S_i^{\alpha}$ to represent the sub-monomial term $f_i(\alpha)$ with highest magnitude, i.e., the sub-monomial term that firstly maximizes the degree of $\alpha$, and then maximizes the degree of $\ln \alpha$. Similarly, we use $S_i^{n}$ to represent the sub-monomial term $g_i(n)$ with highest magnitude.
The decision procedure consists of two steps. The pseudo-code is presented in Algorithm~\ref{alg:check}.

\begin{footnotesize}
\begin{algorithm}[htbp]\label{alg:check}
\SetInd{0.3em}{0.6em}
\caption{\small{The Decision procedure for canonical forms}}
    \label{alg:memoization}
    \SetKwFunction{FSearch}{\textsf{Decide}}
    \SetKwProg{Fn}{Procedure}{:}{}
    \SetKwInOut{Input}{Input}
    \SetKwInOut{Output}{Output}
    \Input{A constraint $Q(\alpha,n)$ in the canonical form~ (\ref{eq:canonical})}
    \Output{Decide whether $Q(\alpha,n)$ holds over every sufficiently large $\alpha$ and every $n\ge c_p$.}
    \Fn{\FSearch{$Q(\alpha,n)$}}{
        //The first step\\
        $L := c_p$\\
        \For{$i:=1,2,\ldots ,k$}{
             $S_i^{n}:=$ the sub-monomial term in $g_i(n)$ with highest magnitude. \\
            \If{$S_i^n>0$}{\textbf{Return} False}
            \Else{
             $g_i'(n):=$ the derivative of $g_i(n)$\\
            $L := \max\{L, {\sf NegativeLB}(g_i'(n))\}$
            }
        }
           //The second step\\
         \For{$\overline{n}:=c_p,\ldots ,L$}{
         $Q(\alpha,\overline{n}):=$ substitute $n$ with $\overline{n}$ in $Q(\alpha,n)$\\
         $R:=0$\\
         \For{$i:=1,2,\ldots, k$}{
             $S_i^{\alpha}:=$ the sub-monomial term in $f_i(\alpha)+g_i(\overline{n})$ with highest magnitude.\\
             Calculate $\Delta$ as the limit of $f_i(\alpha)+g_i(\overline{n})$ when $\alpha\to \infty$ by Proposition~\ref{prop:lim}.\\
             \If{$\Delta=\infty$}{\textbf{Return} False}
             \Else{$R:=R+\gamma_i\cdot \Delta$}
         }
         \lIf{$R\ge 1$}{\textbf{Return} False}
        }
        \textbf{Return} True
    }
\end{algorithm}
\end{footnotesize}

\smallskip \noindent $\bullet$ \emph{First}, in Lines 3--10, for each $1\le i\le k$, we
inspect the term $S_i^n$, if $S_i^n$ has positive coefficient, then clearly $f_i(n)\to \infty$ when $n\to \infty$ . Thus, the term $\exp(g_i(n)+f_i(\alpha))$ could be arbitrarily large. As a result, we can safely conclude that $Q(\alpha,n)$ does not hold (Lines 6--7). Otherwise, by computing $\mathsf{NegativeLB}(g_i'(n))$, we have that conclude that $\exp(g_i(n)+f_i(\alpha))$ will be non-increasing as long as $n\ge  \mathsf{NegativeLB}(g_i'(n))$. By taking the maximum over all $1\le i\le k$ (the maximum is stored in $L$) (Line 10), we have that the LHS of $Q(\alpha,n)$ will be non-increasing as long as $n\ge L$. Hence, it suffices to check every $c_p\le n\le L$.

\smallskip \noindent $\bullet$ \emph{Second}, for every integer $c_p\le \overline{n}\le L$, we substitute $n$ with $\overline{n}$ to eliminate $n$ in the exponent of the canonical form~(\ref{eq:canonical}). Then, each exponent $f_i(\alpha)+g_i(\overline{n})$ becomes a pseudo-polynomial solely over $\alpha$. Since we only concern sufficiently large $\alpha$, we can compute the limit $\Delta$ of $\exp(f_i(\alpha)+g_i(\overline{n}))$ when $\alpha\to \infty$ by the proposition (Proposition~\ref{prop:lim}) below, which could be proved straightforwardly.
\begin{proposition}\label{prop:lim}
The limit $\Delta$ of $\exp(f_i(\alpha)+g_i(\overline{n}))$ when $\alpha\to \infty$ is as follows.
\begin{compactitem}
    \item If $S_i^{\alpha}$ is superconsant and $S_i^{\alpha}>0$, then $\Delta=\infty$.
    \item If $S_i^{\alpha}$ is superconsant and $S_i^{\alpha}<0$, then $\Delta=0$.
    \item If $S_i^{\alpha}$ is constant $C$, then $\Delta=\exp(C)$.
    \item If $S_i^{\alpha}$ is subconstant,  then $\Delta=1$.
\end{compactitem}
\end{proposition}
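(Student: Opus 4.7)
The plan is to reduce the claim to the asymptotic behavior of the pseudo-polynomial $P(\alpha) := f_i(\alpha)+g_i(\overline{n})$ as $\alpha\to\infty$, and then apply continuity of $\exp$ together with the standard conventions $\exp(+\infty)=+\infty$ and $\exp(-\infty)=0$. Since $\overline{n}$ has already been substituted as a fixed positive integer, $g_i(\overline{n})$ is a real constant, so $P(\alpha)$ is a pseudo-polynomial purely in the single variable $\alpha$, and its contribution merely shifts the constant term of $f_i(\alpha)$.

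Write $S_i^{\alpha} = c_0\cdot \alpha^{a_0}\cdot\ln^{b_0}\alpha$ with $c_0\neq 0$, and write the remaining terms of $P(\alpha)$ as $\sum_j c_j\cdot \alpha^{a_j}\cdot\ln^{b_j}\alpha$ with $(a_j,b_j)$ strictly below $(a_0,b_0)$ in the lexicographic order (by the very definition of sub-monomial of highest magnitude). The key auxiliary fact is that each subdominant monomial is asymptotically negligible relative to the dominant one: the ratio $\alpha^{a_j-a_0}\ln^{b_j-b_0}\alpha$ tends to $0$ as $\alpha\to\infty$, because either $a_j<a_0$ (so the polynomial factor drives the ratio to $0$ regardless of any logarithmic factor), or else $a_j=a_0$ and $b_j<b_0$ (so the logarithmic factor alone drives it to $0$).

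With this comparison in hand, the four cases follow directly. If $S_i^{\alpha}$ is superconstant, then $|S_i^{\alpha}|\to\infty$ and we may factor $P(\alpha)=S_i^{\alpha}\cdot(1+o(1))$, so $P(\alpha)\to+\infty$ when $c_0>0$ and $P(\alpha)\to-\infty$ when $c_0<0$, yielding $\Delta=\infty$ and $\Delta=0$ respectively. If $S_i^{\alpha}$ equals the constant $C$, then $(a_0,b_0)=(0,0)$ and $c_0=C$ (the constant absorbing $g_i(\overline{n})$); every other summand has $(a_j,b_j)<_{\mathrm{lex}}(0,0)$ and individually tends to $0$, hence $P(\alpha)\to C$ and $\Delta=\exp(C)$. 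Finally, if $S_i^{\alpha}$ is subconstant, then every summand (including $S_i^{\alpha}$ itself) tends to $0$, so $P(\alpha)\to 0$ and $\Delta=1$.

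The only technical ingredient is the asymptotic comparison of the middle paragraph, and this is routine since the lexicographic order on exponent pairs $(a,b)$ matches the asymptotic order of $\alpha^a\ln^b\alpha$ at infinity (logarithms are sub-polynomial, so an advantage in $a$ always dominates any disadvantage in $b$). No manipulation of transcendental equations is required, since substituting the integer $\overline{n}$ has already eliminated the only variable other than $\alpha$; the whole argument is therefore a short exercise in limits of pseudo-polynomials followed by continuity of $\exp$.
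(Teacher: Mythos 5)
Your proof is correct and is exactly the ``straightforward'' argument the paper alludes to without writing out: the paper's own justification of Proposition~\ref{prop:lim} is a one-line assertion, and your decomposition into the lexicographically dominant monomial plus asymptotically negligible remainder, followed by continuity of $\exp$ with the conventions $\exp(+\infty)=+\infty$ and $\exp(-\infty)=0$, supplies precisely the missing details. No gaps.
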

By sum up the $\Delta$ for every $1\le i\le k$ as $R$, we compute the limit of the LHS of $Q(\alpha,\overline{n})$ when $\alpha\to \infty$.
We conclude that $Q(\alpha,n)$ if and only if $R<1$ for every $c_p\le \overline{n}\le L$. The property of our algorithm is as follows.
\begin{theorem}
The algorithm~(Algorithm~\ref{alg:check}) has the following property.
\begin{compactitem}
    \item If there exists two constants $\varepsilon,M>0$, such that for every $\alpha\ge M$ and every $n\ge c_p$, the LHS of $Q(\alpha,n)\le 1-\varepsilon$, then our algorithm will conclude that $Q(\alpha,n)$ holds.
    \item If $Q(\alpha,n)$ does not hold for sufficiently large $\alpha$ and every $n\ge c_p$, then our algorithm will conclude that $Q(\alpha,n)$ does not hold.
\end{compactitem}
\end{theorem}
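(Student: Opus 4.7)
I plan to extract two structural facts about Algorithm \emph{Decide} and deduce both halves of the theorem from them. Fact (A): whenever step~1 does not return false, the integer $T_n$ it produces has the property that for every fixed $\alpha$, $Q_L(\alpha,\cdot)$ is non-increasing on $[T_n,\infty)$. Fact (B): for every integer $\overline{n}\ge c_p$, the value $R$ that step~2 accumulates at iteration $\overline{n}$ coincides with $\lim_{\alpha\to\infty}Q_L(\alpha,\overline{n})$, under the conventions $\exp(+\infty)=+\infty$ and $\exp(-\infty)=0$; in particular this limit exists in $\mathbb R\cup\{+\infty\}$. Once (A) and (B) are in hand, the analysis of the algorithm reduces to a finite check on $\overline{n}\in\{c_p,\ldots,T_n\}$ together with the asymptotic behaviour of pseudo-polynomials in $\alpha$.

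For (A), I treat the two sub-cases of step~1. If some $M_i=+\infty$, then for any fixed $\alpha$ the single summand $\gamma_i\exp(f_i(\alpha)+g_i(n))$ tends to $+\infty$ as $n\to\infty$, so $Q(\alpha,n)$ fails at all sufficiently large $n$ and step~1 rightly returns false. Otherwise, condition~(C2) forces each $g_i$ to be a pseudo-polynomial whose exponents of $n$ and $\ln n$ are non-negative, and a finite $M_i$ leaves only two possibilities: $g_i$ is eventually constant or $g_i(n)\to -\infty$. In either case the pseudo-monomial monotonicity table yields an $L_i$ with $g_i'(n)\le 0$ on $[L_i,\infty)$; by the specification of $\mathsf{NegativeLB}$, exactly such an $L_i$ is returned. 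Taking $T_n=\max_iL_i$ (absorbed with $c_p$), each $g_i$ is non-increasing on $[T_n,\infty)$, so each $\exp(f_i(\alpha)+g_i(n))$ is non-increasing in $n$, and non-negativity of the weights $\gamma_i$ transfers this to the sum $Q_L(\alpha,n)$. For (B), each $f_i(\alpha)+g_i(\overline{n})$ is a pseudo-polynomial in $\alpha$ whose limit as $\alpha\to\infty$ is determined by its sub-monomial of highest magnitude; step~2 computes exactly this limit via the natural case split, and $\exp$ and finite sums commute with limits in $\mathbb R\cup\{+\infty\}$ under the conventions above.

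Soundness and completeness are then short. For soundness, assume $Q_L(\alpha,n)\le 1-\varepsilon$ for all sufficiently large $\alpha$ and all $n\ge c_p$. No $M_i$ can be $+\infty$, since unboundedness of $Q_L$ in $n$ would violate the uniform bound; hence step~1 produces a finite $T_n$. By (B), for every $\overline{n}\in[c_p,T_n]$ the computed $R_{\overline{n}}=\lim_{\alpha\to\infty}Q_L(\alpha,\overline{n})\le 1-\varepsilon<1$, so no $\Delta=+\infty$ arises and the test $R\ge 1$ never fires; the algorithm returns true. For completeness, fix $n^*\ge c_p$ at which $Q_L(\alpha,n^*)>1$ for arbitrarily large $\alpha$. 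If step~1 returns false we are done; otherwise let $\overline{n}:=\min(n^*,T_n)$. When $n^*>T_n$, fact (A) gives $Q_L(\alpha,\overline{n})\ge Q_L(\alpha,n^*)>1$ along the same unbounded set of $\alpha$; when $n^*\le T_n$ the inequality is immediate. Since the limit of $Q_L(\alpha,\overline{n})$ in $\mathbb R\cup\{+\infty\}$ exists and exceeds $1$ on an unbounded set, this limit is $\ge 1$, so by (B) the value $R$ at iteration $\overline{n}$ is $\ge 1$ (or infinite) and the algorithm returns false. The main obstacle will be fact (A): combining the pseudo-monomial monotonicity table with the specification of $\mathsf{NegativeLB}$ into an explicit integer threshold beyond which the weighted sum of exponentials is genuinely monotone in $n$; a minor subtlety is reading ``infinitely many $\alpha$'' in the completeness clause as ``arbitrarily large $\alpha$'', which is the only reading under which the $\alpha\to\infty$ limit captures the witnessed failure.
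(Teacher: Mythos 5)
Your proof is correct and takes essentially the same route as the paper's (very terse) argument: the paper justifies the restriction to the finite range $[c_p,T_n]$ by the monotonicity of each $\exp(f_i(\alpha)+g_i(n))$ in $n$ beyond ${\sf NegativeLB}(g_i')$, observes that the value $R$ computed in the second step is exactly $\lim_{\alpha\to\infty}Q_L(\alpha,\overline{n})$, and then concludes both items ``by the definition of the limit''---precisely your facts (A) and (B) followed by your soundness/completeness deductions. Your explicit treatment of the monotonicity threshold, the $\varepsilon$-gap in soundness, and the reading of ``infinitely many $\alpha$'' as ``arbitrarily large $\alpha$'' (consistent with the main-text statement, where the witnessing $n$ is fixed) simply fills in details the paper leaves to the algorithm's description.
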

\begin{proof}
Since $R$ is the limit of $Q(\alpha,n)$, the two items follows by the definition of limit. \qed
\end{proof}

\subsection{Details for the Divide-and-conquer Case}
Below we illustrate the divide-and-conquer case. We represent the branch as $\kwtick(S(n)); p(v); p(H_2(n)-v);$, where $H_2(n)$ is of the form $\lceil \frac nb\rceil +c$ or $\lfloor \frac nb\rfloor +c$. {\color{green} W.l.o.g., we assume that $H_2(n)$ is $n-1$ or $n$, and other choices of $H_2(n)$ follows the same treatment.} Similar to the single recursion case, we perform a case analysis on $\mathsf{dist}$.
When $\mathsf{dist}$ is an FSDPD, we follow the same strategy as (S1-D) and (S2-D). In detail, for (S1-D), we observe that $f(\alpha,v)+f(\alpha,H_2(n)-v)\le f(\alpha,H_2(n))\le f(\alpha,n)$, thus we can still over-approximate the expectation in~(\ref{eq:rephrase}) as $t(\alpha,n)\cdot S(n)$. For (S2-D), we only need to further consider to over-approximate the extra term $t(\alpha,n)\cdot f(\alpha, H_2(n)-v)$ by the same procedure.

When $\mathsf{dist}$ is $\mathtt{uniform}(n)$ (or $\mathtt{muniform}(n)$), then we can also over-approximate the subterm in the expectation~(\ref{eq:rephrase}) $L = \expv[\exp(t(\alpha,n)\cdot (f(\alpha,v)+f(\alpha,H_2(n)-v)))]$ through the summation as follows:
\begin{small}
\begin{talign}\label{eq:sum-divide}
    L \le \frac1n\sum_{i=0}^{n-1}\exp(t(\alpha,n)\cdot (f(\alpha,v)+f(\alpha,H_2(n)-v)))
\end{talign}
\end{small}
When $\mathsf{dist}$ is $\mathtt{muniform}(n)$, we simply replace $\frac1n$ by $\frac2n$. Since $f(\alpha,v)+f(\alpha,H_2(n)-v)$ is non-decreasing when $v\in [0,\lceil \frac{H_2(n)}{2}\rceil-1]$ and is non-increasing when $v\in [\lfloor \frac{H_2(n)}{2}\rfloor+1,H_2(n)]$. Hence, we can over-approximate the summation in~(\ref{eq:sum-divide}) as two integrals:
\begin{small}
\begin{talign*}
    &\left(\int_{0}^{\lceil \frac{H_2(n)}{2}\rceil}+\int_{\lfloor \frac{H_2(n)}{2}\rfloor}^{H_2(n)}\right)\exp(t(\alpha,n)\cdot (f(\alpha,v)+f(\alpha,H_2(n)-v))\text{d}v
    \\&=2\int_{0}^{\lceil \frac{H_2(n)}{2}\rceil}\exp(t(\alpha,n)\cdot (f(\alpha,v)+f(\alpha,H_2(n)-v))\text{d}v
\end{talign*}
\end{small}
The integral is more complex than the single recursion case, since we need to integrate the exponential function whose exponent is a pseudo-polynoimal over $n,\alpha,v$ and $H_2(n)-v$.
Thus, it is hard to apply our approach in the single recursion case.
Instead, we cut the integral into $Q$ parts (where $Q$ is also a parameter) with equal length and over-approximate each part using the maximum value.
Intuitively, the over-approximation will be more precise when $Q$ increases.
In this way, we can over-approximate the integral as:
\begin{talign*}
2\cdot \lceil \frac{H_2(n)}{2\cdot Q}\rceil\sum_{i=1}^{Q}\exp(t(\alpha,n)\cdot (f(\alpha,\frac{i}{Q}\cdot \lceil \frac{H_2(n)}{2}\rceil)+f(\alpha,H_2(n)-\frac{i}{Q}\cdot \lceil \frac{H_2(n)}{2}\rceil))
\end{talign*}
Note that the formula above is similar to the case of FSDPD, we apply the same strategy for FSDPDs to over-approximate the formula above into the canonical constraint.

\section{Details in the Empirical Evaluation~(Section~\ref{sec:evaluation})}
\label{app:evaluation}

\subsection{Detailed Synthesis Result}

In Table~\ref{table:detail}, we present the detailed synthesis results for each benchmark.

\renewcommand{\baselinestretch}{1.3}
\begin{table}[h]
\begin{center}
\begin{footnotesize}
    \caption{Detailed Synthesis Result}
    \label{table:detail}
    \begin{tabular}{|c|c|c|c|c|}
    \hline
        Benchmark & $\kappa(n^*)$ & Function & Tail bound $u(\alpha,n^*)$ & Time(s)\\
    \hline
        QuickSelect & $n^*$ & \makecell[c]{$\overline{f}(\alpha,n)=\frac{2\cdot \alpha}{\ln \alpha}\cdot n$\\$\overline{t}(\alpha,n)=\ln \alpha\cdot n^{-1}$} & $\exp(2\cdot \alpha-\alpha\cdot \ln\alpha)$ & $0.03$  \\
    \hline
        QuickSort & $n^*\!\cdot\! \ln n^*$ & \makecell[c]{$\overline{f}(\alpha,n)=4\cdot n\cdot \ln n$\\$\overline{t}(\alpha,n)=n^{-1}$} & $\exp((4-\alpha)\cdot \ln n)$ & $0.02$  \\
    \hline
        L1Diameter & $n^*$ & \makecell[c]{$\overline{f}(\alpha,n)=\frac{\alpha}{\ln \alpha}\cdot n$\\$\overline{t}(\alpha,n)=\ln \alpha\cdot n^{-1}$} & $\exp(\alpha-\alpha\cdot \ln\alpha)$ & $0.03$  \\
    \hline
        L2Diameter & $n^*\!\cdot\! \ln n^*$ & \makecell[c]{$\overline{f}(\alpha,n)=\frac{\alpha}{\ln \alpha}\cdot n\cdot \ln n$\\$\overline{t}(\alpha,n)=\ln \alpha\cdot n^{-1}\cdot \ln^{-1} n$} & $\exp(\alpha-\alpha\cdot \ln\alpha)$ & $0.03$  \\
    \hline
        RandSearch & $\ln n^*$ & \makecell[c]{$\overline{f}(\alpha,n)=\frac{2\cdot \alpha}{\ln \alpha}\cdot \ln n$\\$\overline{t}(\alpha,n)=\ln \alpha$} & $\exp((2\cdot \alpha-\alpha\cdot \ln\alpha)\cdot \ln n^*)$ & $0.03$  \\
    \hline
        Channel & $n^*$ & \makecell[c]{$\overline{f}(\alpha,n)=8\cdot n$\\$\overline{t}(\alpha,n)=1$} & $\exp(\frac12\cdot (8-\alpha)\cdot n^*)$ & $0.05$  \\
    \hline
        Rdwalk & $n^*$ & \makecell[c]{$\overline{f}(\alpha,n)=0.5\cdot n$\\$\overline{t}(\alpha,n)=1$} & $\exp((0.5-\alpha)\cdot n^*)$ & $0.05$  \\
    \hline
        Rdadder & $n^*$ & \makecell[c]{$\overline{f}(\alpha,n)=8\cdot n$\\$\overline{t}(\alpha,n)=\frac12$} & $\exp((4-0.5\cdot \alpha)\cdot n^*)$ & $0.04$  \\
    \hline
        MC1 & $\ln n^*$ & \makecell[c]{$\overline{f}(\alpha,n)=\frac{\alpha}{\ln\alpha}\cdot  \ln n$\\$\overline{t}(\alpha,n)=\ln \alpha$} & $\exp(\alpha-\alpha\cdot \ln \alpha)$ & $0.03$  \\
    \hline
        MC2 &$\ln^2 n^*$ &
        \makecell[c]{$\overline{f}(\alpha,n)=\frac{\alpha}{\ln\alpha}\cdot  \ln^2 n$\\$\overline{t}(\alpha,n)=\frac{\ln\alpha}{\ln n}$}
        &$\exp((\alpha - \alpha\ln \alpha)\cdot \ln n^*)$ & $0.03$  \\
    \hline
        MC3 &$n^*\cdot \ln^2 n^*$ &
        \makecell[c]{$\overline{f}(\alpha,n)=\frac{2\cdot \alpha}{\ln\alpha}\cdot  n\cdot \ln^2 n$\\$\overline{t}(\alpha,n)=\frac{\ln\alpha}{n\cdot \ln^2 n}$}
        &$\exp(2\cdot \alpha - \alpha\ln \alpha)$ & $0.03$  \\
    \hline
        MC4 & $n^*$ & \makecell[c]{$\overline{f}(\alpha,n)=\frac{2\cdot \alpha}{\ln \alpha}\cdot n$\\$\overline{t}(\alpha,n)=\ln\alpha\cdot n^{-1}$} & $\exp((2\cdot \alpha - \alpha\ln \alpha)\cdot \ln n^*)$ & $0.04$  \\
    \hline

\end{tabular}
\end{footnotesize}
\end{center}
\end{table}

\subsection{Concrete Bounds}

In Table~\ref{tab:concrete}, we present the concrete bounds. For each of our 12 benchmarks, we evaluate our synthesized
symbolic bound (Table~\ref{table:exp} in our paper) and Karp's symbolic bound with three concrete choices $(\alpha,n^*)=(10,13),(11,15),(12,17)$. We also evaluate the ratio between the two concrete bounds $\frac{\text{Karp's Bound}}{\text{Our Bound}}$ to show that our approach can derive much tighter bounds.

\begin{table}[!ht]
    \centering
    \caption{Results for Concrete Bounds}
    \label{tab:concrete}
    \begin{tabular}{|l|l|l|l|l|}
    \hline
        Benchmark & Concrete choice & Our bound & Karp's Bound & Ratio \\ \hline
        QuickSelect & $\alpha=10;n^*=13$ & $4.85*10^{-2}$ & $1.92*10^{-1}$ & $3.96$ \\ \hline
        QuickSelect & $\alpha=11;n^*=15$ & $1.26*10^{-2}$ & $1.45*10^{-1}$ & $1.16*10^{1}$ \\ \hline
        QuickSelect & $\alpha=12;n^*=17$ & $2.97*10^{-3}$ & $1.10*10^{-1}$ & $3.69*10^{1}$ \\ \hline
        QuickSort & $\alpha=10;n^*=13$ & $2.07*10^{-7}$ & $1.11*10^{-2}$ & $5.36*10^{4}$ \\ \hline
        QuickSort & $\alpha=11;n^*=15$ & $5.85*10^{-9}$ & $6.74*10^{-3}$ & $1.15*10^{6}$ \\ \hline
        QuickSort & $\alpha=12;n^*=17$ & $1.43*10^{-10}$ & $4.09*10^{-3}$ & $2.85*10^{7}$ \\ \hline
        L1Diameter & $\alpha=10;n^*=13$ & $2.20*10^{-6}$ & $4.05*10^{-3}$ & $1.84*10^{3}$ \\ \hline
        L1Diameter & $\alpha=11;n^*=15$ & $2.10*10^{-7}$ & $2.03*10^{-3}$ & $9.67*10^{3}$ \\ \hline
        L1Diameter & $\alpha=12;n^*=17$ & $1.83*10^{-8}$ & $1.02*10^{-3}$ & $5.58*10^{4}$ \\ \hline
        L2Diameter & $\alpha=10;n^*=13$ & $2.20*10^{-6}$ & $4.05*10^{-3}$ & $1.84*10^{3}$ \\ \hline
        L2Diameter & $\alpha=11;n^*=15$ & $2.10*10^{-7}$ & $2.03*10^{-3}$ & $9.67*10^{3}$ \\ \hline
        L2Diameter & $\alpha=12;n^*=17$ & $1.83*10^{-8}$ & $1.02*10^{-3}$ & $5.58*10^{4}$ \\ \hline
        RandSearch & $\alpha=10;n^*=13$ & $4.26*10^{-4}$ & $5.88*10^{-4}$ & $1.38$ \\ \hline
        RandSearch & $\alpha=11;n^*=15$ & $7.12*10^{-6}$ & $1.77*10^{-4}$ & $2.49*10^{1}$ \\ \hline
        RandSearch & $\alpha=12;n^*=17$ & $6.92*10^{-8}$ & $5.22*10^{-5}$ & $7.55*10^{2}$ \\ \hline
        Channel & $\alpha=10;n^*=13$ & $2.61*10^{-23}$ & $6.72*10^{-2}$ & $2.57*10^{21}$ \\ \hline
        Channel & $\alpha=11;n^*=15$ & $4.84*10^{-30}$ & $4.64*10^{-2}$ & $9.59*10^{27}$ \\ \hline
        Channel & $\alpha=12;n^*=17$ & $1.22*10^{-37}$ & $3.21*10^{-2}$ & $2.64*10^{35}$ \\ \hline
        Rdwalk & $\alpha=10;n^*=13$ & $1.98*10^{-13}$ & $3.02*10^{-2}$ & $1.52*10^{11}$ \\ \hline
        Rdwalk & $\alpha=11;n^*=15$ & $5.18*10^{-17}$ & $2.00*10^{-2}$ & $3.87*10^{14}$ \\ \hline
        Rdwalk & $\alpha=12;n^*=17$ & $4.97*10^{-21}$ & $1.33*10^{-2}$ & $2.67*10^{18}$ \\ \hline
        Rdadder & $\alpha=10;n^*=13$ & $1.98*10^{-13}$ & Not applicable & - \\ \hline
        Rdadder & $\alpha=11;n^*=15$ & $5.18*10^{-17}$ & Not applicable & - \\ \hline
        Rdadder & $\alpha=12;n^*=17$ & $4.97*10^{-21}$ & Not applicable & - \\ \hline
        MC1 & $\alpha=10;n^*=13$ & $3.09*10^{-15}$ & $2.06*10^{-8}$ & $6.67*10^{6}$ \\ \hline
        MC1 & $\alpha=11;n^*=15$ & $8.23*10^{-19}$ & $1.18*10^{-9}$ & $1.44*10^{9}$ \\ \hline
        MC1 & $\alpha=12;n^*=17$ & $1.19*10^{-22}$ & $6.48*10^{-11}$ & $5.46*10^{11}$ \\ \hline
        MC2 & $\alpha=10;n^*=13$ & $3.09*10^{-15}$ & $2.06*10^{-8}$ & $6.67*10^{6}$ \\ \hline
        MC2 & $\alpha=11;n^*=15$ & $8.23*10^{-19}$ & $1.18*10^{-9}$ & $1.44*10^{9}$ \\ \hline
        MC2 & $\alpha=12;n^*=17$ & $1.19*10^{-22}$ & $6.48*10^{-11}$ & $5.46*10^{11}$ \\ \hline
        MC3 & $\alpha=10;n^*=13$ & $4.85*10^{-2}$ & $1.92*10^{-1}$ & $3.96$ \\ \hline
        MC3 & $\alpha=11;n^*=15$ & $1.26*10^{-2}$ & $1.45*10^{-1}$ & $1.16*10^{1}$ \\ \hline
        MC3 & $\alpha=12;n^*=17$ & $2.97*10^{-3}$ & $1.10*10^{-1}$ & $3.69*10^{1}$ \\ \hline
        MC4 & $\alpha=10;n^*=13$ & $2.20*10^{-6}$ & Not applicable & - \\ \hline
        MC4 & $\alpha=11;n^*=15$ & $2.10*10^{-7}$ & Not applicable & - \\ \hline
        MC4 & $\alpha=12;n^*=17$ & $1.83*10^{-8}$ & Not applicable & - \\ \hline
    \end{tabular}
\end{table}

\subsection{Plots for Concrete Bounds}

In Figures~\ref{fig:b1}--\ref{fig:b12}, we present the plots of our concrete bounds and the Karp's concrete bounds over $10\le \alpha\le 15,n^*=17$.
\begin{figure}
	\centering
	\includegraphics[width=0.7\textwidth]{figs/QuickSelect.png}
	\caption{The Plot for QuickSelect}
	\label{fig:b1}
\end{figure}
\begin{figure}
	\centering
	\includegraphics[width=0.7\textwidth]{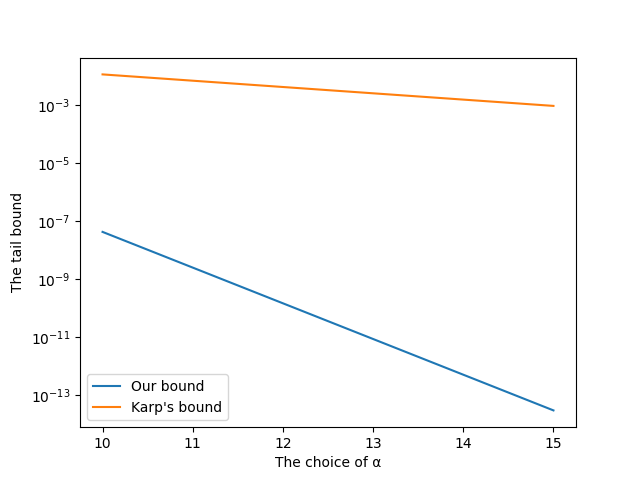}
	\caption{The Plot for QuickSort}
	\label{fig:b2}
\end{figure}
\begin{figure}
	\centering
	\includegraphics[width=0.7\textwidth]{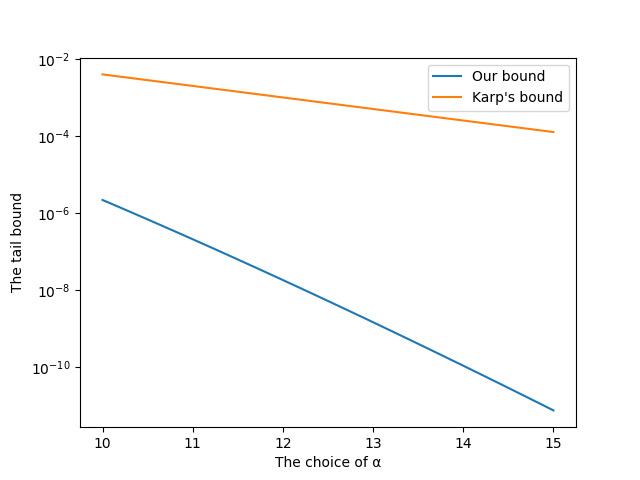}
	\caption{The Plot for L1Diameter}
	\label{fig:b3}
\end{figure}
\begin{figure}
	\centering
	\includegraphics[width=0.7\textwidth]{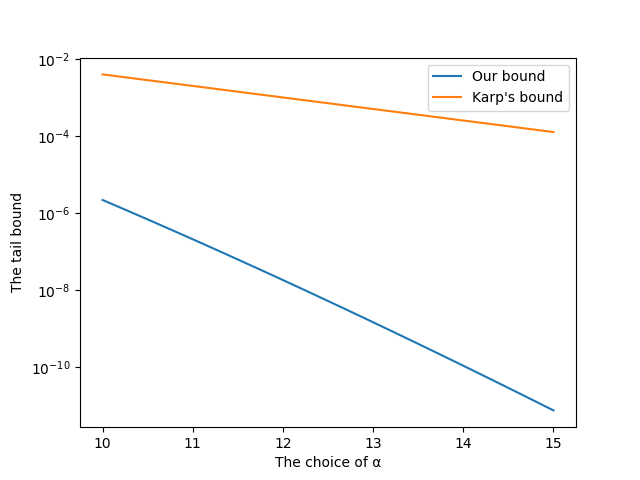}
	\caption{The Plot for L2Diameter}
	\label{fig:b4}
\end{figure}
\begin{figure}
	\centering
	\includegraphics[width=0.7\textwidth]{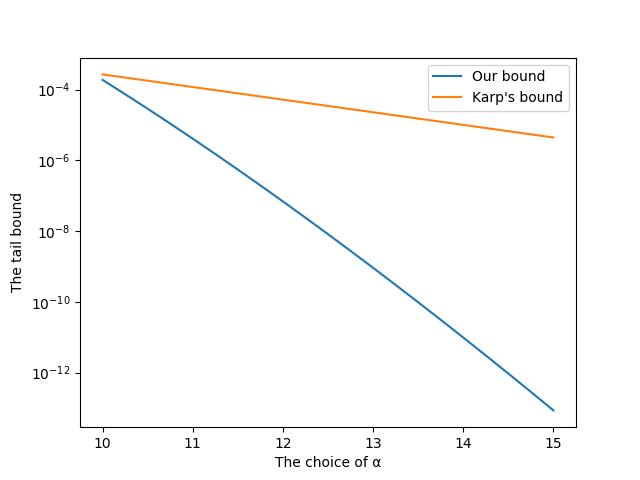}
	\caption{The Plot for RandSearch}
	\label{fig:b5}
\end{figure}
\begin{figure}
	\centering
	\includegraphics[width=0.7\textwidth]{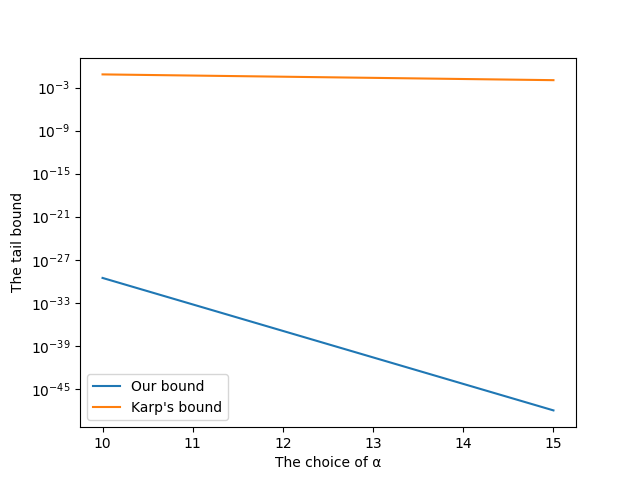}
	\caption{The Plot for Channel}
	\label{fig:b6}
\end{figure}
\begin{figure}
	\centering
	\includegraphics[width=0.7\textwidth]{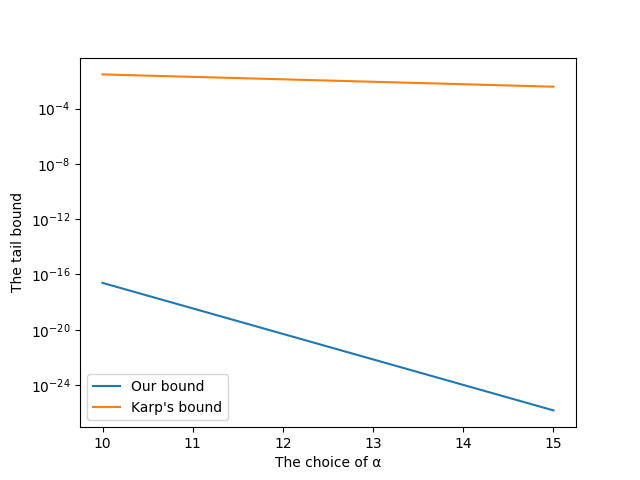}
	\caption{The Plot for Rdwalk}
	\label{fig:b7}
\end{figure}
\begin{figure}
	\centering
	\includegraphics[width=0.7\textwidth]{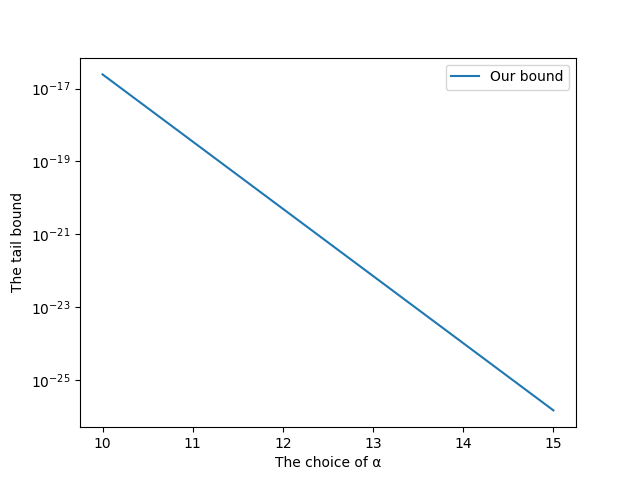}
	\caption{The Plot for Rdadder}
	\label{fig:b8}
\end{figure}
\begin{figure}
	\centering
	\includegraphics[width=0.7\textwidth]{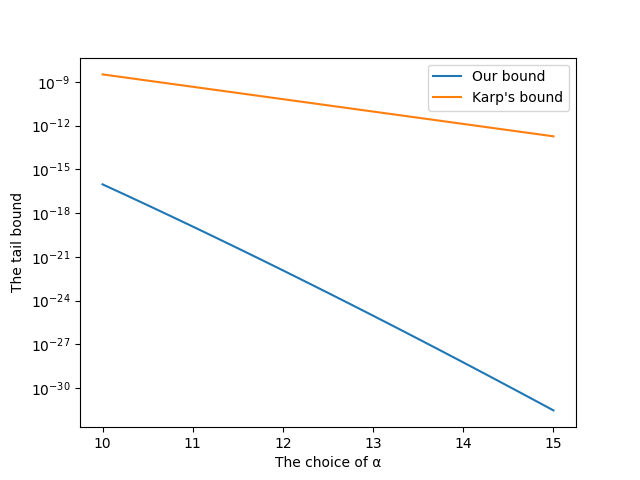}
	\caption{The Plot for MC1}
	\label{fig:b9}
\end{figure}
\begin{figure}
	\centering
	\includegraphics[width=0.7\textwidth]{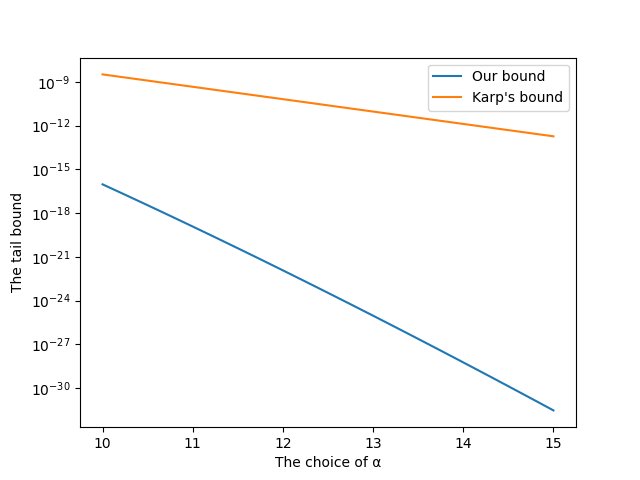}
	\caption{The Plot for MC2}
	\label{fig:b10}
\end{figure}
\begin{figure}
	\centering
	\includegraphics[width=0.7\textwidth]{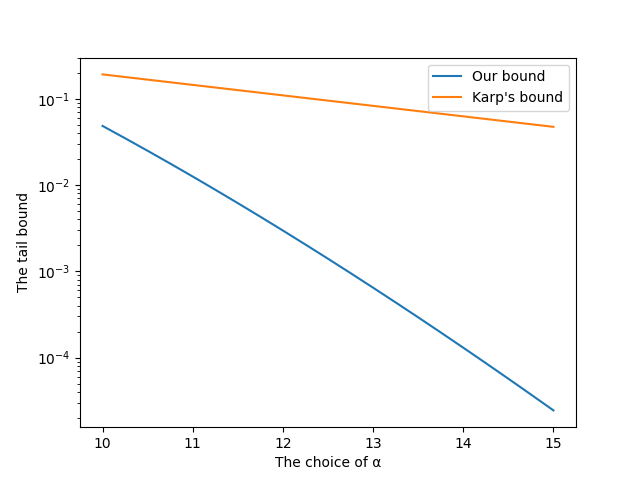}
	\caption{The Plot for MC3}
	\label{fig:b11}
\end{figure}
\begin{figure}
	\centering
	\includegraphics[width=0.7\textwidth]{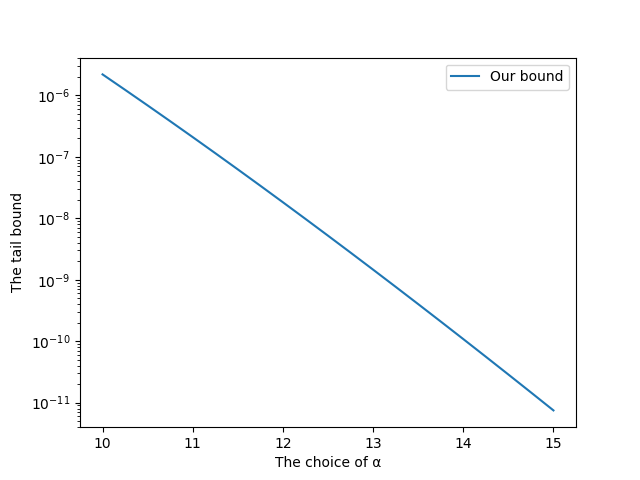}
	\caption{The Plot for MC4}
	\label{fig:b12}
\end{figure}

\subsection{Tail Bounds by Our Completeness Theorem~\ref{thm:comp}}

In Table~\ref{table:compthm}, we present the synthesized tail bound from our completeness theorem~(Theorem~\ref{thm:comp}).

\renewcommand{\baselinestretch}{2.0}
\begin{table}[h]
\begin{center}
\begin{footnotesize}
    \caption{Tail Bound By the Completeness Theorem}
    \label{table:compthm}
    \begin{tabular}{|c|c|c|c|c|}
    \hline
        Benchmark & $\kappa(n^*)$ & Tail bound $u(\alpha,n^*)$\\
    \hline
        QuickSelect & $4n^*$  & $\exp(-2\cdot \frac{(\alpha-1)^2}{\alpha})$  \\
    \hline
        QuickSort & $2n^*\!\cdot\! \ln n^*$ & $\exp(-0.7\cdot \frac{(\alpha-1)^2}{\alpha}\cdot \ln n^*)$ \\
    \hline
        L1Diameter & $2n^*$ & $\exp(- \frac{(\alpha-1)^2}{\alpha})$  \\
    \hline
        L2Diameter & $2n^*\!\cdot\! \ln n^*$ & $\exp(-\frac{(\alpha-1)^2}{\alpha})$  \\
    \hline
        RandSearch & $\ln n^*/\ln \frac43$ & $\exp(-1.19\cdot \frac{(\alpha-1)^2}{\alpha}\cdot \ln n^*)$  \\
    \hline
        Channel & $en^*$ & $\exp(-0.74\cdot \frac{(\alpha-1)^2}{\alpha}\cdot n^*)$  \\
    \hline
        Rdwalk & $0.5n^*$ & $\exp(-\frac13\cdot \frac{(\alpha-1)^2}{\alpha}\cdot n^*)$\\
    \hline
        Rdadder & $3n^*$& $\exp(-\frac{(\alpha-1)^2}{\alpha}\cdot n^*)$  \\
    \hline
        MC1 & $\ln n^*/\ln 2$&Not applicable  \\
    \hline
        MC2 & $\ln^2 n^*/\ln 2$&Not applicable  \\
    \hline
        MC3 & $2n^*\cdot \ln^2 n^*$ & $\exp(-\frac{(\alpha-1)^2}{\alpha})$  \\
    \hline
        MC4 & $2.5n^*$ & $\exp(-\frac{(\alpha-1)^2}{\alpha})$ \\
    \hline

\end{tabular}
\end{footnotesize}
\end{center}
\end{table}

\section{Missing Details in Related Work~(Section~\ref{sec:related})}
\label{app:notfixedpoint}
\subsection{Coupon Collector}

The example above shows that our approach can obtain bounds that are considerably tighter than Karp's method.
Moreover, we will now extend the results to systems of PRRs consisting of several recursive equations (modeling recursive algorithms with several procedures). This is also beyond the reach of Karp's method and all previous methods. However, our approach is unable to handle a program, called {CouponCollector}\cite[Chapter 3.6]{DBLP:books/cu/MotwaniR95} as follows.
\begin{example}
Consider the Coupon-Collector problem with $n$ different types of coupons$(n \in \mathbb N)$.
The randomized process proceeds in rounds: at each round, a coupon is collected uniformly at random from the coupon types the rounds continue until all the $n$ types of coupons are collected.
We model the rounds as a recurrence relation with two variables $n,m$, where $n$ represents the total number of coupon types and $m$ represents the remaining number of uncollected coupon types.
The recurrence relation is as follows:
\begin{equation*}
p(n,m;1) = \bigoplus \begin{cases}
m/n: \mathtt{pre}(1); \mathtt{invoke}\  p(n,m-1);\\
1-m/n: \mathtt{pre}(1); \mathtt{invoke}\  p(n,m);
\end{cases}
\end{equation*}
\end{example}
Our approach fails in case since it is hard to apply Markov's inequality to derive exponentially decreasing tail bounds. However, Karp's method could produce the following:
$$
\Pr[p(n,n)\ge n\ln n + \alpha\cdot n]\le \exp(-\alpha).
$$
\qed

\subsection{Non-Prefixed-Point of Karp's Bound on Quick-Select}
\label{app:rw}

Consider the PRR for Quick-Select:
$$\textstyle
\mathtt{def}\ p(n;2)= \{\mathtt{sample}\ v\leftarrow\mathtt{muniform}(n)\ \mathtt{in}\ \{\kwtick(n);\ \mathtt{invoke}\ p(v);\}\}
$$

where $h(n)$ is the muniform distribution.
Karp's method gives the following tail bound:

$$
\Pr(p(n)> 4\cdot n + k\cdot n)\le \left(\frac{3}{4}\right)^k.
$$

Although Karp's result is derived (indirectly) from fixed point theory, we show that the derived tail bound $\left(\frac{3}{4}\right)^k$ is not a prefixed point when $k$ is allowed to be any non-negative real number.
First, recall that the prefixed point condition for the situation $k\ge 2$ and an even number $n$ states that

$$
\left(\frac{3}{4}\right)^k\ge 2\cdot \frac{1}{n} \sum_{i=\frac{n}{2}+1}^n \left(\frac{3}{4}\right)^{(3+k) -4\cdot \frac{i}{n}}.
$$

But we have
\begin{eqnarray*}
2\cdot \frac{1}{n}\cdot \sum_{i=\frac{n}{2}}^{n-1}\left(\frac{3}{4}\right)^{(3+k) -4\cdot \frac{i}{n}} &=& 2\cdot \frac{1}{n}\cdot \left(\frac{3}{4}\right)^{k+1}\cdot \frac{1-[(\frac{3}{4})^{-\frac{4}{n}}]^{\frac{n}{2}}}{1-(\frac{3}{4})^{-\frac{4}{n}}}\\
&=& 2\cdot \frac{1}{n}\cdot\left(\frac{3}{4}\right)^{k+1}\cdot \color{red}{\frac{-\frac79}{1-e^{\frac{4}{n}\cdot \ln \frac43}}}\\
&\approx&2\cdot \frac{1}{n}\cdot \left(\frac{3}{4}\right)^{k+1}\cdot \frac{\frac79}{\frac{4}{n}
\cdot \ln \frac43}\\
&\approx&2\cdot \frac{1}{n}\cdot n\cdot \left(\frac{3}{4}\right)^{k+1} \cdot {\color{red} 0.676}\\
&=&(2\cdot 0.75\cdot 0.676)\cdot \left(\frac34\right)^k \ge 1.01\cdot \left(\frac34\right)^k
\end{eqnarray*}
for which the first $\approx$ are with the ratio $1$ when $n$ tends to infinity. 
Hence, we have
$$
\left(\frac{3}{4}\right)^k< 2\cdot \frac{1}{n} \sum_{i=\frac{n}{2}+1}^n \left(\frac{3}{4}\right)^{(3+k) -4\cdot \frac{i}{n}}.
$$
As a result, the prefixed point condition is violated for every sufficiently large even number $n$ and $k\ge 2$.

\end{document}